\documentclass[11pt]{amsart}
\usepackage {CJKutf8}
\usepackage{amsmath}
\usepackage{amssymb}
\usepackage{graphicx}
\usepackage[abbrev]{amsrefs} 
\usepackage{hyperref}
\usepackage{bbm}

\usepackage{mathtools} 
\usepackage[scr]{rsfso}
\usepackage[margin=3cm]{geometry}
\usepackage{lmodern}
\usepackage{tikz}
\usetikzlibrary{decorations.pathmorphing, shapes.geometric,arrows, fit}
\usetikzlibrary{positioning}

\usepackage{cancel}

\usepackage{enumitem}

\usepackage{todonotes}

\usepackage{float} 

\newtheorem{theorem}{Theorem}[section]
\newtheorem{corollary}[theorem]{Corollary}
\newtheorem{lemma}[theorem]{Lemma}
\newtheorem{proposition}[theorem]{Proposition}
\theoremstyle{definition}

\newtheorem{remark}[theorem]{Remark}

\numberwithin{equation}{section}

\newcommand{\thmref}[1]{Theorem~\ref{#1}}

\newcommand{\secref}[1]{Section~\ref{#1}}
\newcommand{\lemref}[1]{Lemma~\ref{#1}}
\newcommand{\propref}[1]{Proposition~\ref{#1}}

\newcommand{\figref}[1]{Figure~\ref{#1}}
\newcommand{\corref}[1]{Corollary~\ref{#1}}

\numberwithin{equation}{section}
\numberwithin{theorem}{section}

\DeclareMathOperator{\supp}{supp}

\DeclareMathOperator{\Tr}{Tr}
\newcommand{\1}{\mathbbm{1}}

\newcommand{\be}{\beta}

\newcommand{\del}{\delta}
\newcommand{\eps}{\epsilon}

\newcommand{\lam}{\lambda}

\newcommand{\Lam}{\Lambda}

\newcommand{\cA}{\mathcal{A}}
\newcommand{\cB}{\mathcal{B}}
\newcommand{\cD}{\mathcal{D}}

\newcommand{\cE}{\mathcal{E}}
\newcommand{\cF}{\mathcal{F}}

\newcommand{\cP}{\mathcal{P}}         

\newcommand{\pt}{\partial_t}

\newcommand{\abs}[1]{\ensuremath{\left\lvert#1\right\rvert}}
\newcommand{\norm}[1]{\ensuremath{\left\lVert#1\right\rVert}}
\newcommand{\sbr}[1]{\left[#1\right]}
\newcommand{\Set}[1]{\left\{#1\right\}}
\newcommand{\br}[1]{ \big\langle#1 \big\rangle}
\newcommand{\cbr}[1]{\left(#1\right)}

\newcommand{\od}[2]{\ensuremath{\frac{\md#1}{\md#2}}}

\definecolor{green}{rgb}{0.0, 0.5, 0.5}

\newcommand{\md}{\mathrm{d}}
\newcommand{\me}{\mathrm{e}}
\newcommand{\mi}{\mathrm{i}}

\newcommand{\nc}{\newcommand}
\nc{\ran}{\rangle}
\nc{\lan}{\langle}

\newcommand{\tr}{\mathrm{Tr}}

\usepackage{array}
\usepackage{multirow}

\newcommand{\x}{\lan x\ran}
\newcommand{\y}{\lan y\ran}

\DeclareMathOperator{\dG}{\mathrm{d}\Gamma}

\newcommand{\Norm}[1]{{\left\vert\kern-0.25ex\left\vert\kern-0.25ex\left\vert #1 
		\right\vert\kern-0.25ex\right\vert\kern-0.25ex\right\vert}}

\newcommand{\Nf}[3][]{N_{#2,#3}^{#1}} 
\nc{\cts}[1]{\chi_{ts}(#1)}
\nc{\trzero}[1]{\Tr\sbr{#1\, \rho}}
\nc{\trt}[1]{\Tr\sbr{\tau_t\cbr{#1}\, \rho}}
\nc{\tru}[1]{\Tr\sbr{\tau_u\cbr{#1}\, \rho}}

\newcommand{\tev}[2]{\Tr\sbr{\tau_{#2}\cbr{#1}\,\rho}}

\newcommand{\ip}{\mathrm{int}}

\newcommand{\Ainv}[1]{\cA_{#1}^\mathrm{inv}}

\newcommand{\R}{\mathbb{R}}

\newcommand{\C}{\mathbb{C}}

\begin{document}

	\title[Lieb--Robinson bounds for Bose--Hubbard Hamiltonians]{Lieb--Robinson bounds for Bose--Hubbard Hamiltonians:\\ A review with a simplified proof}

	\author{Marius Lemm}
    \address[M.~Lemm]{Department of Mathematics, University of T\"ubingen, Auf der Morgenstelle 10, 72076 T\"ubingen, Germany}
\email{{\tt marius.lemm@uni-tuebingen.de}}
	\author{Carla Rubiliani}
    \address[C.~Rubiliani]{Department of Mathematics, University of T\"ubingen, Auf der Morgenstelle 10, 72076 T\"ubingen, Germany}
\email{{\tt carla.rubiliani@uni-tuebingen.de}}

\keywords{Quantum dynamics, Bose-Hubbard Hamiltonians, Lieb-Robinson bounds}

\subjclass[2010]{35Q40,81Q10}

	\date{June 30, 2026}

    	\maketitle

		\begin{center}
    \textit{Dedicated to Barry Simon on the occasion of his 80th birthday.}\\
\end{center}
	
	\begin{abstract}	
	We review recent progress on state-dependent Lieb--Robinson bounds for Bose--Hubbard Hamiltonians. In particular, Kuwahara, Vu, and Saito established that, for general bounded-density initial states, the Lieb--Robinson velocity is bounded by $t^{d-1}$  for large times, where $d$ denotes the lattice dimension. We present a  shorter proof of the weaker, but still polynomial velocity bound $t^{d+\epsilon}$.
	\end{abstract}

     \medskip
     
A fundamental question in non-relativistic quantum mechanics is how bounds on the propagation velocity emerge. Indeed, a velocity bound is physically expected because it matches our experience that information tends to propagate at most with a finite, system-dependent speed of sound that is much smaller than the speed of light. The situation is subtle already for quantum dynamics of Schr\"odinger operators $H=-\Delta+V$ on $L^2(\mathbb R^d)$. 
The solution operator $\me^{-\mi tH}$ maps compactly supported initial states to non-compactly supported states at every positive time, thus breaking the PDE notion of finite propagation speed familiar for the wave equation. The resolution to this is well-known: propagation speed bounds should be 
approximate in nature, e.g., by controlling moments of particle position operators instead of wave function supports. At the same time, one cannot expect bounds to be completely independent of the initial state, as can be seen from studying the free case $H=-\Delta$, where the effective propagation speed can be calculated from the initial momentum distribution via Fourier transform.
An early mathematical work on propagation bounds on $L^2(\mathbb R^d)$ was published by Charles Radin and Barry Simon \cite{RS78} in 1978 following earlier work of Hunziker \cite{hunziker1966space}. 
Their argument is simple and characteristically elegant. 
On the Hilbert space $L^2(\mathbb R^d)$, they consider $H=-\Delta+V$ with $V:\mathbb R^d\to\mathbb R$ satisfying the form bound $ |V|\leq -a\Delta+b$ for some $a<1$. 
Denoting $\psi_t=\me^{-\mi tH}\psi$ for initial data $\psi\in H^1(\mathbb R^d)$ and $p=-\mi\nabla$,
the formal version of their argument proceeds as follows. They observe 
\[
\begin{aligned}
\frac{\mathrm{d}}{\mathrm{d}t} \langle \psi_t, x^2 \psi_t\rangle
=\mathrm{i}\langle \psi_t, [H,x^2] \psi_t\rangle
=\mathrm{i}\langle \psi_t, [p^2,x^2] \psi_t\rangle
=2\langle \psi_t, (px+xp) \psi_t\rangle
\leq 4 \|p\psi_t\|_2 \|x\psi_t\|_2,\\
\end{aligned}
\]
which implies that
\[
 \langle \psi_t, x^2 \psi_t\rangle^{1/2}-\langle \psi_0, x^2 \psi_0\rangle^{1/2}\leq 
 \int_0^t 2\|p\psi_s\|_2 \mathrm{d}s.
\]
To estimate the integrand, notice that the form bound implies $-\Delta\leq \tfrac{-\Delta+V+b}{1-a}$. Hence, in the sense of quadratic forms,
\[
2\|p\psi_s\|_2
\leq \frac{2}{1-a} \langle\psi_s, (H+b)\psi_s\rangle
=\frac{2}{1-a} \langle\psi, (H+b)\psi\rangle=:v_\psi
\]
and $v_\psi<\infty$  for $\psi\in H^1(\mathbb R^d)$. To conclude,
\[
\langle \psi_t, x^2 \psi_t\rangle^{1/2}\leq \langle \psi_0, x^2 \psi_0\rangle^{1/2}+v_\psi |t|.
\]

We summarize the argument of Radin-Simon \cite{RS78} as follows: They restrict to physically relevant initial states (in this case, $H^1$, which physically corresponds to bounded kinetic energy) and exploit energy conservation  to control propagation at times $t>0$. 

This is a \textit{ballistic} bound, because it shows that the growth of $\langle \psi_t, x^2 \psi_t\rangle^{1/2}$ (the typical position at time $t$) is bounded by a constant  $v_\psi>0$ times $t$. 
In particular, $v_\psi$ plays the role of a speed bound. The scaling of propagation bounds in $t$ is physically important. For example, a different scaling is expected to occur for random Schr\"odinger operators. 
Namely, they are long conjectured to satisfy a so-called diffusive bound $\langle \psi_t, x^2 \psi_t\rangle^{1/2}\lesssim t^{1/2}$ for all values of the coupling constants multiplying the random potential, a problem that was already emphasized by Barry Simon in his 1984 collection of 15 problems in mathematical physics \cite[Problem 14B]{simon1984fifteen}.
  The investigation of ballistic propagation bounds for Schr\"odinger operators on $L^2(\mathbb R^d)$ has continued to be an influential topic in the following decades. For example, both maximal and minimal velocity estimates played a key role  
in the landmark work of Sigal and Soffer proving  asymptotic completeness \cite{SS87}; see also \cite{G90,HSS99}. Recent works on quantum-mechanical velocity bounds for continuum systems include \cite{arbunich2021maximal,hinrichs2025lieb,sigal2025propagation}. Extensions to nonlinear equations \cite{arbunich2023maximal,liu2025large} and Lindbladian evolutions \cite{breteaux2022maximal,breteaux2024light} were considered as well.

Propagation bounds also play a central role in another area of quantum dynamics, namely for quantum spin systems, which are Hamiltonians describing extensive, locally interacting quantum degrees of freedom on a discrete physical space. For these, Lieb and Robinson \cite{LR72} established a propagation bound which controls commutators of local observables. Indeed, it bounds    $\|[A(t),B]\|$,  where  $A$ and $B$ are two many-body observables that act locally in different regions of space and $A(t)=\me^{\mi tH}A\me^{-\mi tH}$ is the Heisenberg time evolution. The Lieb--Robinson bound serves to control the speed of information propagation in a many-body sense and has turned out to be a decisive tool in quantum information theory proofs. Its first applications in this vein were the proofs of exponential clustering of gapped ground states \cite{hastings2006spectral,nachtergaele2006lieb},  the one-dimensional area law for the entanglement entropy \cite{hastings2007area} and dynamical bounds on generation of entanglement and topological order \cite{bravyi2006lieb}. For   modern accounts of Lieb--Robinson bounds, see \cite{nachtergaele2019quasi,chen2023speed}.

The standard Lieb--Robinson bound controls the operator norm $\|[A(t),B]\|$, which means that it is insensitive to the initial state, a robustness property that has proven useful. The insensitivity of the Lieb--Robinson bound to the initial state can be directly traced back to the fact that the local interactions in quantum spin systems are uniformly bounded. (Indeed, the emergent velocity bound is proportional to the largest operator norm of a local interaction in the system.) However, this insensitivity poses a problem as soon as one aims to prove Lieb--Robinson bounds for systems with \textit{unbounded} local interactions. This is true even for lattice systems with unbounded interactions, which arise naturally for lattice bosons. In \cite{NRSS09}, Nachtergaele, Raz, Sims, and Schlein were able to prove Lieb--Robinson bounds for a class of perturbations of harmonic oscillators. A different, physically  relevant lattice boson model is given by the so-called Bose--Hubbard Hamiltonian on a finite graph $\Lambda\subset \mathbb Z^d$, which can be expressed on the bosonic Fock space by
\[
H=-J\sum_{\substack{x,y\in \Lambda:\\ x\sim y}} a_x^\dagger a_y +U\sum_{x\in \Lambda} n_x(n_x-1)-\mu \sum_{x\in\Lambda} n_x,
\]
where $\{a_x,a_x^\dagger\}_{x\in\Lambda}$ satisfy the usual canonical commutation relations and $n_x=a_x^\dagger a_x$ is the bosonic number operator. (We refer to Section 2 for the precise setup and definitions.) 

For the Bose--Hubbard Hamiltonian, the standard proof techniques for Lieb--Robinson bounds break down. In recent years, the problem of propagation bounds for Bose--Hubbard Hamiltonians and their long-range variants has seen substantial activity and progress \cite{wang2020tightening,yin2022finite,kuwahara2021lieb,FLS2022,LRZS2023,LRZ2023,LRZ2025,Kuwahara2024,lemm2024local,Kuwahara2025}. A related series of works investigated more stringent and robust bounds on macroscopic particle propagation in these models \cite{faupin2022maximal,van2024optimal,li2026macroscopic,faupin2025macroscopic}. For Lieb--Robinson bounds, one overarching insight is remarkably similar to that of Radin-Simon \cite{RS78} discussed above: It is sensible to restrict to a class of physically relevant initial states in which singular behavior is not present. A key challenge is then to prove that this ``good'' behavior of the initial state is sufficiently robust under time evolution. In this paper, we focus on an implementation of this idea in a breakthrough work of Kuwahara-Vu-Saito \cite{Kuwahara2024} whose class of physically relevant states are those of bounded particle density.  In a first step, they prove a bound on the velocity of particle propagation only (as opposed to a Lieb--Robinson bound, which bounds propagation of quantum information  more broadly) and this implies that the density remains well-controlled after finite time $t$. In a second step, the controlled density after time $t$ is used to approximate the dynamics with a truncated dynamics of a spin system of sufficiently large local dimension, to which the usual Lieb--Robinson methodology can then be applied. The work of Kuwahara-Vu-Saito \cite{Kuwahara2024} is rather long and here we provide a simple, self-contained proof of a bosonic Lieb--Robinson bound with a  slightly worse,
but still polynomial velocity scaling ($t^{d+\epsilon}$ versus $t^{d-1}$).

We remark that the scaling is essentially determined by the particle propagation control one uses and how one implements the truncation. 
The quickest, but roughest option is to control particle propagation by a Gronwall argument which leads to exponential-in-time growth of the number of particles. After truncation, this translates into an LR velocity that grows exponentially in time (but is uniform in system size), as was spelled out in \cite{Deuchert2025}. 
The variant we implement here is to prove a bounded speed of particle propagation by the ASTLO (adiabatic space time localization observables) method. 
This leads to local particle numbers being bounded by $\sim t^d$, as at most all particles within a ball of radius $\sim t$ can accumulate at a fixed site within time $t$. 
In \cite{Kuwahara2024}, a further refinement to $t^{d-1}$ is obtained by showing that large LR velocity comes from large particle occupations along a 1D ``information path'' and particle propagation bounds naturally control this accumulation by $t^{d-1}$. 
A proof of $t^d$-scaling is contained in \cite{Kuwahara2024} as Theorem 2 on p.\ 65 of the Supplemental Material. 
On a related note, the assumption on the control of particles in the initial state also plays a role and affects the spatial decay of the obtained Lieb--Robinson bound. 
Here,  we assume existence of a fixed  moment of the particle number operator in the initial state and the spatial decay in the Lieb--Robinson bound is polynomially related to this fixed moment. 
This is another difference to the work \cite{Kuwahara2024}, which assumes existence of exponential moments of the particle number in the initial state and derives exponential spatial decay. 
For us, the polynomial moment assumption is natural because we prove the particle propagation bound by the ASTLO technique (originally developed for long-range interactions \cite{FLS2022,LRZS2023,LRZ2023}). 
We observe here that it allows for a comparatively short proof of the relevant particle propagation bound also in the short-range case. 

All the techniques in this article have appeared before, mainly in \cite{FLS2022,LRZ2023,Kuwahara2024}. Accordingly, we consider this work  a review as far as the mathematical techniques themselves are concerned. Our contribution is to combine them in a slightly different way and streamline them to give what we believe is the currently most direct and simple route to a polynomial light cone for Bose--Hubbard Hamiltonians. 

The paper is self-contained apart from the fact that we cite the standard Lieb--Robinson bound for quantum spin systems \cite{LR72}, specifically the formulation in    \cite{nachtergaele2019quasi}. Another part of the simplification comes from not tracking universal constants and making explicit only the dependence on the density of the initial state. 
 We hope that all of this helps to make the topic more accessible.

	The paper is divided into four main sections.
	\begin{itemize}
		\item In \secref{sec set up}, we introduce the model and setup, together with some useful notation. 
		We conclude the section stating \thmref{teo LRB}, a state-dependent bosonic Lieb--Robinson bound that is the main result of this paper.
		\item In \secref{sec PPB}, we state   the particle propagation bound, \thmref{theo final particle}, a key step for deriving bosonic Lieb--Robinson bounds. Afterwards, we introduce the main ingredient of the proof, the ASTLO. Then we state and prove some basic lemmas that are useful to control the time evolution of such operators. 
        
	\item	In \secref{sec:pp-proof}, we prove \thmref{theo final particle}. In \secref{sec mom 1}, we begin by stating the key  propositions necessary to control the time evolution of the first moment of the number operator.
		They constitute the basis of an induction on moments that allows us to prove bounds on higher moments that we state in \secref{sec PPB higher}.
		Assuming the propositions in \secref{sec mom 1} and \secref{sec PPB higher} are true, we prove \thmref{theo final particle} in \secref{sec proof main part}.
		The proofs of the bounds on the first moment in \secref{sec mom 1} and on the higher moments in \secref{sec PPB higher} can be found in \secref{sec rpoof first mom} and \secref{sec proof high PPB}, respectively.
		
		\item In \secref{sec derive LRB}, we prove \thmref{teo LRB}, making use of the particle propagation bounds obtained in \secref{sec PPB}. We start the section by introducing the truncated dynamics and compare them to the full dynamics in \propref{prop H to bar H}, which is proved in \secref{sec Proof of prop H to bar H}. For the truncated dynamics, one can locally connect back to a   standard Lieb--Robinson bounds; this is stated as \propref{prop bar H to H R} and proved in \secref{sec Proof of prop bar H to H R}. From this, we conclude \thmref{teo LRB}.

	\end{itemize}
\section{Setup and main result}\label{sec set up}
\subsection{Setup and notation}

Consider a finite subset $\Lam\subset \mathbb{Z}^d$ equipped with euclidean metric $\abs{\cdot}$. Bosonic Fock space is then defined as 
\begin{align}
    \mathcal F=  \C\oplus\bigoplus_{N=1}^\infty \ell^2_s(\Lam^N),\notag
\end{align}
where $\ell^2_s(\Lam^N)$ is the Hilbert space of permutation-symmetric squared summable sequences on $\Lam^N$.
For every $x,y \in\Lam$, we define the following operators acting on $\cF$
\begin{align}\label{T, V def}
	T_{xy}:=Ja_x^\dagger a_y,\qquad V_{xy}:= U n^{p/2}_xn^{p/2}_y-\mu( n_x+n_y),
\end{align}
where $J,U,\mu\in\mathbb{R}$ and $p\ge 1$. 
In \eqref{T, V def},  $a^\dagger_x$ and $a_x$ are the bosonic creation and annihilation operators, thus they satisfy the canonical commutation relations, $\sbr{a_x,a_y}=0=\sbr{a^\dagger_x,a^\dagger_y}$ and $\sbr{a_x,a_y^\dagger}=\del_{x,y}$. 
They define the local number operator as $n_x:=a_x^\dagger a_x$ and $N_X:=\sum_{x\in X}n_x$, for $X\subset \Lam$. For simplicity, we write $N_\Lam\equiv N$.
For a given subset $X\subset \Lambda$, we define
\begin{align}
	T_X:=\sum_{\substack{x,y\in X \\x\sim y}}T_{xy},\quad V_X:=\sum_{\substack{x,y\in X \\x\sim y}}V_{xy},\label{T V on X}
\end{align}
where $x\sim y$ indicates $\abs{x-y}=1$,
and we consider the following nearest-neighbors Bose--Hubbard-type Hamiltonian
\begin{align}\label{ham}
	H_X:=T_X+V_X.
\end{align}
 For ease of notation, we write $H\equiv H_\Lam$, $T\equiv T_\Lam$, and $V\equiv V_\Lam$ .
Given $X\subset \Lam$ and $R>0$, we denote its $R$-enlargement by 
\begin{align}
	X[R]:=\Set{x\in\Lam\,:\,d\cbr{x,X}\le R},\notag
\end{align}
where $d(x,X):=\inf\Set{\abs{x-y}\;:\;y\in X}$.	
Furthermore we indicate its diameter by
\begin{align}
    d(X):=1+\max\Set{\abs{x-y}\,:\, x,y\in X}.
\end{align}
For ease of notation, when $X=\Set{x}$ for some $x\in\Lam$, we write $\Set{x}[R]\equiv B_R(x)$. In particular, when $x$ is the origin we write $B_R(0)\equiv B_R$.

We denote by $\cB$ the algebra of bounded operators on Fock space, by $\cA_X\subset \cB$ the algebra of bounded operators that are supported on $X\subset\Lam$, and by $\Ainv{X}\subset\cA_X$ the algebra of operators $A\in\cA_X$ such that they commute with the total number operator. 
Notice that, for $A\in\Ainv{X}$, it holds
\begin{align}
	\sbr{A,N_X}=\sbr{A,N}-\sbr{A,N_{X^c}}=0.\notag
\end{align}
Given an operator $A\in\cB$, we denote by $D(A)$ its domain, by $\norm{A}$ its operator norm, and if $A$ is trace class, by $\norm{A}_1:=\Tr\sbr{\sqrt{A^\dagger A}}$ its trace norm. 
We denote by $\tau$ the time evolution induced by the Hamiltonian H, and, for an operator $A$ acting on Fock space, we write
 \begin{align}\label{tau H}
	\tau_t(A)=\me^{\mi t H}A\me^{-\mi t H},\qquad t\in\R.
 \end{align}
If $A\in\Ainv{X}$, with $X\subset \Lam$ compact, we can compare \eqref{tau H} with $\tau_t^R(A)$, where $R>0$ and 
\begin{align}
  \tau^R(A):= \me^{\mi t H_{X[R]}}A\me^{-\mi t H_{X[R]}}.\label{tau R}
\end{align}
In \eqref{tau R}, the operator $A$ is time evolved by the Hamiltonian $H$ restricted on the region $X[R]$, thus $\tau^R(A)$ is also a local operator supported on $X[R]$.

 We focus on initial states in the space,
 \begin{align}
     \cD_\eta:=\Set{\rho \,:\, \rho^\dagger=\rho\,,\,\rho \text{ is trace class on }\cF,\, \Tr[\rho]=1,\,\Tr\sbr{H\rho}<\infty,\,\Tr\sbr{N^\eta\rho}<\infty },\notag
 \end{align}
 for some integer $\eta>0$.
Additionally, we say $\rho$ satisfies the \textit{controlled density assumption} for some $\eta\ge 1$ and $\lam>0$, if 
 \begin{align}\label{CD rho}
	\Tr[N_{B_r(x)}^{\zeta}\rho]\le (\lam r^d)^{\zeta}\qquad \forall \;r\ge 1,\, \,x\in\Lam, \,{\zeta}\le \eta.
 \end{align}

\subsection{Main result}
We state the main result of this paper, a Lieb--Robinson bound that allows to approximate the time evolution of compactly supported operators by local operators.
Namely, we consider an operator $A\in\Ainv{X}$ with $X\subset \Lam$ compact, and we aim to approximate its time evolution through the full Hamiltonian, $H$, by its evolution through the Hamiltonian $H_{X[R]}$, for  $R>2$.
\begin{theorem}[Lieb--Robinson bound]\label{teo LRB}
    Consider $\eta\ge 2(d+1)$.
	There exists  a positive $C=C(d, \eta,  J)$, such that, for every initial state $\rho\in\cD_\eta$ satisfying \eqref{CD rho} for $\eta$ and some $\lam >0$,   the following holds
\begin{align}\label{main teo eq}
		\norm{\cbr{\tau_t(A)-\tau^R_t(A)}\rho}_1\le C\norm{A}\abs{X}\cbr{ \cbr{\lambda \,d(X)^{d} \,R^{-1}\cbr{(\abs{t}+1)R^{2/\eta}}^{d+1}}^{\eta/2}+R^d\me^{- R/2}},
	\end{align}
    for all $R>2$, $t\in\R$, $X\subset\Lam$ compact, and $A\in\Ainv{X}$. 
\end{theorem}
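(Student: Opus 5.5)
The proof will follow the two-step strategy outlined in the introduction: a particle propagation bound, then a truncation to a spin system. The estimate will be split through an intermediate truncated dynamics as
\[
\tau_t(A)-\tau^R_t(A)=\bigl(\tau_t(A)-\bar\tau_t(A)\bigr)+\bigl(\bar\tau_t(A)-\bar\tau^R_t(A)\bigr)+\bigl(\bar\tau^R_t(A)-\tau^R_t(A)\bigr).
\]
Here $\bar\tau$ denotes the evolution generated by a Hamiltonian $\bar H$ in which each hopping term $T_{xy}$ inside a suitable region $X[R]$ is truncated by spectral projections $\1(n_x\le q)\1(n_y\le q)$ at a cutoff $q$ to be chosen later. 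The variable $\bar\tau^R$ is its analogue restricted to $X[R]$. On the range of these projections, $\bar H_{X[R]}$ restricted to $X[R]$ is a spin system of local dimension roughly $q+1$. Its local terms have norm $\lesssim Jq$; the on-site interaction $V$ commutes with all $n_x$ and contributes nothing to the Lieb--Robinson velocity.

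\textbf{Step 1 (particle propagation).} I will take the particle propagation bound \thmref{theo final particle} as given. Combined with the controlled density assumption \eqref{CD rho}, it says that for every ball $B_r(x)$ and every $\zeta\le\eta$,
\[
\Tr\bigl[N_{B_r(x)}^\zeta\,\tau_t(\rho)\bigr]\lesssim \bigl(\lambda (r+|t|+1)^{d}\bigr)^{\zeta}.
\]
The reason is that the ASTLO controls the number of particles that can enter $B_r(x)$ from outside $B_{r+c|t|}(x)$. The same bound should hold for the truncated dynamics, since the truncation preserves each $n_x$-structure and the ASTLO argument only uses commutators of hopping terms, which are still bounded by $J\,n_x$-type quantities.

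\textbf{Step 2 (comparison of full and truncated dynamics).} This is the analogue of \propref{prop H to bar H}. I will use Duhamel:
\[
\tau_t(A)-\bar\tau_t(A)=\mi\int_0^t \tau_s\bigl([H-\bar H,\bar\tau_{t-s}(A)]\bigr)\,\md s.
\]
Applied to $\rho$ in trace norm, the difference $H-\bar H$ involves hopping terms multiplied by $\1(n_x>q)$. Using H\"older and Markov, together with the moment bound from Step 1 at $\zeta=\eta$, the probability that some site in the relevant region carries more than $q$ particles at time $s$ is bounded by
\[
|X[R]|\,\bigl(\lambda (|t|+1)^d\bigr)^{\eta}q^{-\eta}.
\]
Taking square roots, because the bound comes from a Cauchy--Schwarz step on $\norm{\cdot\,\rho}_1$, produces the exponent $\eta/2$. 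A subtlety is that $\bar\tau_{t-s}(A)$ is only approximately localized, so the truncation must be applied over a region of radius comparable to $R$; this contributes factors of $d(X)^d$ and $R$.

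\textbf{Step 3 (standard Lieb--Robinson bound for the truncated dynamics).} This is the analogue of \propref{prop bar H to H R}. On the truncated range, I will invoke the Lieb--Robinson bound of \cite{nachtergaele2019quasi} with velocity $v\sim Jq$ to get
\[
\norm{\bar\tau_t(A)-\bar\tau^R_t(A)}\lesssim \norm{A}\,|X|\,\me^{v|t|-R}.
\]
I will then choose $q$ so that $v|t|\le R/2$, that is $q\sim R/(|t|+1)$, which yields the term $R^d\me^{-R/2}$. The $R^d$ factor comes from summing over the boundary of $X[R]$.

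\textbf{Bookkeeping of the exponents.} With this choice, Step 2 gives roughly $\bigl(\lambda d(X)^d (|t|+1)^{d+1}/R\bigr)^{\eta/2}$. The extra factor $R^{2/\eta}$ inside the power absorbs the $|X[R]|$ union bound, since $R^{(2/\eta)(d+1)\eta/2}=R^{d+1}$. The condition $\eta\ge 2(d+1)$ guarantees that this absorption still leaves a decaying quantity. The same argument applied to the third term, with $H_{X[R]}$ in place of $H$, gives an identical bound.

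\textbf{Main obstacle.} I expect the hardest part to be Step 2. There one needs the particle moment bounds along the \emph{mixed} evolution $\tau_s$ composed with $\bar\tau_{t-s}$, inside a trace norm. The operator $A$ does not commute with $H-\bar H$, so one must move the truncation projections onto the state side. I would first try to write
\[
\norm{[H-\bar H,B]\,\tau_s(\rho)}_1\le 2\norm{B}\,\bigl\|(H-\bar H)_{\mathrm{loc}}\,\tau_s(\rho)^{1/2}\bigr\|_2,
\]
and then bound the Hilbert--Schmidt norm by $\Tr\bigl[n_x^2\1(n_x>q)\,\tau_s(\rho)\bigr]^{1/2}$. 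This quantity is controlled by the $\eta$-th moment via $\1(n>q)\le (n/q)^{\eta-2}$.
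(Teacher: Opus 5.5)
\textbf{There is a genuine gap in Step 2, at the point you yourself flag as the main obstacle.} The inequality $\norm{[H-\bar H,B]\sigma}_1\le 2\norm{B}\,\norm{(H-\bar H)_{\mathrm{loc}}\sigma^{1/2}}_2$ is false.
\begin{itemize}
\item It controls $B(H-\bar H)\sigma$, but not $(H-\bar H)B\sigma$, where the unbounded difference acts \emph{after} $B=\bar\tau_{t-s}(A)$.
\item The correct estimate is $\norm{(H-\bar H)B\sigma}_1\le \Tr[B\sigma B^\dagger (H-\bar H)^\dagger(H-\bar H)]^{1/2}$. This needs moment bounds in the modified state $B\sigma B^\dagger$, and these can be far worse than in $\sigma$ because $A$ may concentrate particles.
\item Concrete counterexample: let $\sigma$ be a Fock state with $q-1$ particles on each of two sites of $X$. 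Let $A\in\Ainv{X}$ be a partial isometry piling them onto one site. Then $(H-\bar H)\sigma=0$, while $(H-\bar H)A\sigma\neq 0$.
\end{itemize}

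\textbf{The missing ingredient is the estimate \eqref{rho A is CD}.} Since $A$ is number-conserving and supported on $X$, it commutes with $N_{B_{r+d(X)}(x)}$ whenever $B_r(x)\cap X\neq\emptyset$. Hence $A^\dagger\Pi\rho_t\Pi A$ again satisfies \eqref{CD rho} at scales $r\gtrsim\abs{t}$, with $\lambda$ replaced by $C\norm{A}^{2/\eta}\lambda\, d(X)^d$. This, and not the size of the truncation region, is where the factor $d(X)^d$ comes from. In your arrangement you would additionally have to transport this estimate through $\bar\tau_{t-s}$, forwards and backwards. That means proving an ASTLO propagation bound for the truncated dynamics $\bar H$, which you only assert.

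\textbf{The paper's five-step chain is built to avoid propagation estimates for $\bar H$ altogether.}
\begin{itemize}
\item It first replaces $A$ by $\bar A=\Pi A\Pi$ (\lemref{lemma bar A}), with a global projection $\Pi=\Pi_{Y,\nu}$ and $\bar H=\Pi H\Pi$.
\item Every Duhamel expansion then produces $H\Pi-\bar H=\Pi^\perp H\Pi=\sum_{x\sim y}\Pi^\perp_{\{x,y\},\nu}T_{xy}\Pi$.
\item The factor $T_{xy}\Pi$ is bounded by $\lesssim\nu$.
\item Cauchy--Schwarz places $\Pi^\perp_{\{x,y\},\nu}$ next to a state evolved only by the full $H$. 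This is either $\rho_t:=\me^{-\mi tH}\rho\,\me^{\mi tH}$, or $\bar A^\dagger\rho_t\bar A$, controlled via \eqref{rho A is CD}.
\item For such states, \thmref{theo final particle} (valid for both signs of $t$) and Markov's inequality bound the probability of exceeding $\nu$ on a bond by $\lesssim(\lambda(\abs{t}+1)^d/\nu)^{\eta}$.
\end{itemize}

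Once this is in place, your exponent bookkeeping agrees with the paper: a factor $q\,(\lambda(\abs{t}+1)^d/q)^{\eta/2}$ per bond, $q\sim R/(\abs{t}+1)$, and $\abs{X[R]}\lesssim\abs{X}R^d$. Your Step 3 also agrees, provided the truncation also covers the bonds crossing the boundary of $X[R]$, so that the boundary terms in the Lieb--Robinson Duhamel formula are bounded.
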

This shows that the norm on the l.h.s. of \eqref{main teo eq} is polynomially suppressed when $R>\abs{t}^{d+1+\epsilon}$ with $\epsilon:=(d+1)^2/(\eta/2-(d+1))$ small for  $\eta$ large enough. 
We conclude the section with some remarks on the main theorem.
\begin{remark}\begin{enumerate}[label=(\roman*)]
\item As we remarked above, the time scaling of the velocity, $v\sim t^{d+\epsilon}$, and the polynomial decay outside the light-cone that we have obtained in \thmref{teo LRB} are weaker than the bound $v\sim t^{d-1}$ obtained in \cite{Kuwahara2024}.
This is illustrated in \figref{fig:placeholder}.
\item 
    The constants above are independent of system size and of the potential parameters, $U$ and $\mu$.
\item 
    The dependence on the density of particles $\lambda$ in the initial state is explicit. Naturally, the bound improves for small $\lambda$. 

\end{enumerate}
\end{remark}

\begin{figure}
    \centering

\begin{tikzpicture}[scale=1]

\def\Rmax{5}
\def\Tmax{5}

\draw[->, thick] (-\Rmax,0) -- (\Rmax,0) node[right] {$\Lambda$};

\fill[green,opacity=.2]
(-\Rmax,\Tmax)
--
plot[domain=(-\Rmax^(1/3)):0, smooth, variable=\y]
	({\y^(3)},-\y)	
--
plot[domain=0:(\Rmax^(1/3)), smooth, variable=\y]
	({\y^(3)},\y)
	
--
(\Rmax,\Tmax)
-- cycle;

\fill[blue,opacity=.2]
	(-\Rmax,\Tmax)
	--
	plot[domain=(-\Rmax^(1/1.5)):0, smooth, variable=\y]
        ({\y^(1.5)},-\y)	
	--
    plot[domain=0:(\Rmax^(1/1.5)), smooth, variable=\y]
        ({\y^(1.5)},\y)
		
    --
    (\Rmax,\Tmax)
    -- cycle;

\draw[green!60, thick, variable=\y]
plot[domain=(-\Rmax^(1/3)):0, smooth, variable=\y]
	({\y^(3)},-\y)	
--
plot[domain=0:(\Rmax^(1/3)), smooth, variable=\y]
	({\y^(3)},\y);

\draw[blue!60, thick, variable=\y]

plot[domain=(-\Rmax^(1/1.5)):0, smooth, variable=\y]
	({\y^(1.5)},-\y)	
--
plot[domain=0:(\Rmax^(1/1.5)), smooth, variable=\y]
	({\y^(1.5)},\y);

 \draw[->, thick] (0,0) -- (0,\Tmax+0.3) node[above] {$t$};

\foreach \x in {-7,...,7}
{
    \fill[gray] (2*\x/3,0) circle (2pt);

}

\node[blue!70] at (4,3) {$R\sim t^d$};
\node[green!70] at (4,1.9) {$R\sim t^{d+1+\eps}$};
 \end{tikzpicture}

     \caption{Comparison between the light cones obtained in \thmref{teo LRB}, that scales as $R\sim t^{d+1+\eps} $, and  \cite{Kuwahara2024}, with scaling $R\sim t^d$. }
    \label{fig:placeholder}
\end{figure}
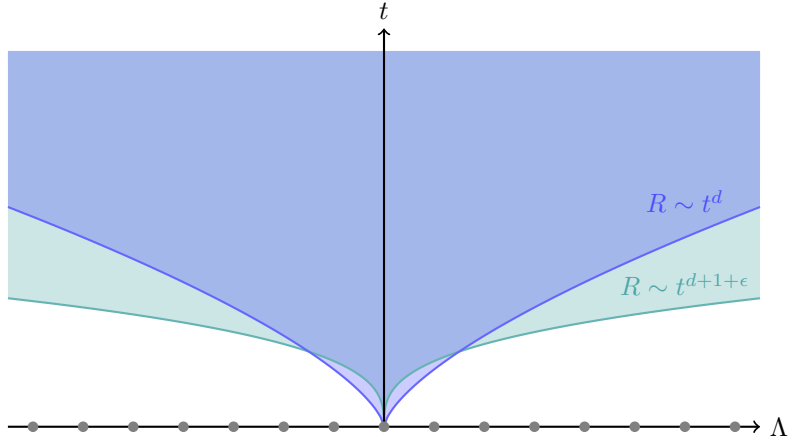

\section{Particle propagation bounds}\label{sec PPB}
In this section, we control the time evolution of the moments of the number operator restricted on certain regions of the lattice. Afterwards, we define our main proof tool, the ASTLOs, and review their basic properties.

\subsection{The particle propagation bound}
The following theorem is the main result of this section
\begin{theorem}[Particle propagation bound]\label{theo final particle}
     Fix $v>2d\abs{J} $, $\delta_0\in(0,1)$, $\beta\ge1$, and $\eta\ge1$.
     There exists a positive $C=C(d,J,v,\delta_0,\beta,\eta)$ such that, for all $\rho\in \cD_\eta$ satisfying \eqref{CD rho} for $\eta $ and some $\lambda>0$, the following holds
	\begin{align}\label{eq final particles}
		\tev{N^\eta _{B_r(x)}}{t}\le C\trzero{N^\eta _{B_R(x)}}+C\lam^\eta(R-r)^{-\beta+d\eta},
	\end{align}
	for all $x\in\Lambda$, $R>r\ge0$ with $R-r>\max\Set{1,\delta_0r}$, and $ v \abs{t}\le (R-r)$. 
\end{theorem}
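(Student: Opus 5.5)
The plan is to prove \eqref{eq final particles} with the ASTLO method. We localize particle numbers with smooth cutoffs that move adiabatically in space--time, show that the associated observables are almost monotone in time, and then run an induction on the moment $\eta$.

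\textbf{Construction of the ASTLO.} Fix $x$ and set $s=\tfrac{R-r}{v}\ge |t|$. By time reversal it suffices to treat $t\ge 0$. Choose a smooth nondecreasing $f:\R\to[0,1]$ with $f=0$ on $(-\infty,0]$, $f=1$ on $[1,\infty)$, and $\sqrt{f'}$ smooth. Define
\[
 \mathcal N_{t} = \sum_{y\in\Lam} f\Big(\frac{R - v' t - |y-x|}{R-r}\cdot c\Big)\, n_y ,
\]
with $2d|J|<v'<v$, normalized so that $\mathcal N_0\le N_{B_R(x)}$ and $\mathcal N_t\ge N_{B_r(x)}$ for $t\le s$. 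Then $\tr[N_{B_r(x)}^\eta\tau_t(\rho)]\le \tr[\mathcal N_t^\eta\tau_t(\rho)]$. Since $\mathcal N_t$ is a sum of commuting number operators, its moments are monotone in the cutoff, so $\tr[N_{B_r(x)}^\eta\,\tau_t(\rho)] \le \tr[\mathcal N_t^\eta\,\tau_t(\rho)]$, where I write $\tau_t(\rho)$ for the Schrödinger-evolved state.

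\textbf{Heisenberg derivative.} The potential $V$ commutes with every $n_y$, so only the hopping term contributes:
\[
D\mathcal N_t=\partial_t\mathcal N_t+\mi[T,\mathcal N_t]
=-\frac{c v'}{R-r}\sum_y f'(\cdot)\,n_y+\mi J\sum_{y\sim z}\big(f_y-f_z\big)a_y^\dagger a_z .
\]
Taylor expanding $f_y-f_z$, the first-order part of the hopping term is bounded by $\tfrac{c}{R-r}|J|\sum_{y\sim z}\sqrt{f'_y}\sqrt{f'_z}\,|a_y^\dagger a_z|$. By Cauchy--Schwarz this is at most $\tfrac{2d|J|c}{R-r}\sum_y f'_y n_y$, which the negative drift term absorbs because $v'>2d|J|$. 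The remainders are $O((R-r)^{-k})$ times number operators localized on the support of $f^{(k)}$. Iterating this expansion to order $\beta$, as in \cite{FLS2022,LRZ2023}, gives
\[
D\mathcal N_t \le \sum_{k=2}^{\beta} C_k (R-r)^{-k} \tilde{\mathcal N}^{(k)}_t + C (R-r)^{-\beta} N_{B_{2R}(x)},
\]
where each $\tilde{\mathcal N}^{(k)}_t$ is an ASTLO of the same type with a different cutoff function. Integrating over $[0,t]$ with $t\le s\sim R-r$ and recursing on the $\tilde{\mathcal N}^{(k)}$ telescopes the first-moment bound into
\[
\tr[\mathcal N_t\,\tau_t(\rho)] \le C\,\tr[N_{B_R(x)}\rho] + C (R-r)^{-\beta+1}\sup_{u\le t}\tr[N_{B_{2R}(x)}\,\tau_u(\rho)].
\]

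\textbf{Higher moments.} For $\eta\ge 2$, I apply the same computation to $\mathcal N_t^\eta$. Since $[T,\mathcal N_t^\eta]=\sum_j \mathcal N_t^{j}[T,\mathcal N_t]\mathcal N_t^{\eta-1-j}$, the commutator splits into symmetrized products. Controlling these requires operator inequalities of the form $\mathcal N^{a}\,X\,\mathcal N^{b}+\mathrm{h.c.}\le \epsilon\,\mathcal N^{\eta-1}\sum_y f'_y n_y+\dots$. The lower-order terms involve $\mathcal N^{\eta-1}$, which the induction hypothesis controls.

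\textbf{Main obstacle.} The hardest step is the remainder $\tr[N^\eta_{B_{2R}(x)}\,\tau_u(\rho)]$. It is not a priori controlled, since the total particle number in a large ball can only be bounded trivially. I would use conservation of $N$ restricted to a ball of radius $\sim R+vs$, which the same ASTLO argument enlarges slightly. Then \eqref{CD rho} bounds this by $(\lam R^d)^\eta$, and this is exactly the term $\lam^\eta (R-r)^{d\eta-\beta}$ once one uses $R\lesssim (R-r)/\delta_0$. This is where the hypothesis $R-r>\delta_0 r$ enters: it lets us replace $R$ by $R-r$ up to a constant. Finally, I would adjust $\beta$ by the number of induction steps, which is harmless because $\beta$ is arbitrary.
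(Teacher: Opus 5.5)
Your ASTLO construction, the Heisenberg-derivative computation in which the drift absorbs the first-order hopping term because $v'>2d\abs{J}$, and the bootstrapped first-moment inequality are essentially the paper's Proposition~\ref{prop recurs stru} and Corollary~\ref{cor boot}. The genuine gap is in the step you call the main obstacle. The remainder $\sup_{u\le t}\Tr[\tau_u(N^\eta_{B_{2R}(x)})\,\rho]$ is a \emph{time-evolved} local particle number. Nothing conserves $N$ restricted to a ball, and \eqref{CD rho} only controls expectations at time $0$. Replacing the remainder by $\Tr[N^\eta_{B_{2R+vs}(x)}\rho]$ is itself a particle propagation bound, for incoming particles. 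If you prove it ``by the same ASTLO argument'', you create a new remainder $\sup_{u}\Tr[\tau_u(N^\eta_{B_{R''}(x)})\,\rho]$ at a still larger scale $R''$, which is again uncontrolled. This regress does not close by itself. The only genuinely conserved quantity is the total number $N$, and using it directly gives bounds of order $\lambda^\eta L^{d\eta}$ with $\Lambda\subset B_L$. That destroys the system-size independence of $C$ that the theorem asserts.

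The missing idea is the downward multiscale induction of Propositions~\ref{prop remainder} and~\ref{prop rem p}. You prove the normalized claim $2^{-d\eta k}\sup_{u\le t}\Tr[\tau_u(N^\eta_{B_{2^k}(x)})\,\rho]\le C_*\lambda^\eta$ for all $k\ge0$.

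For $2^k\ge L$, the claim follows from $[N,H]=0$ and \eqref{CD rho}. The normalization $2^{-d\eta k}$ is what makes the constant independent of $L$.

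To pass from the scales above $k$ down to $k$, apply the bootstrapped bound with $r=2^k$ and $R=2^{k+1}$. The main term is handled by \eqref{CD rho} at time $0$. By the hypothesis at scale $k+2$, the remainder is at most $C'2^{-\beta k}2^{2d\eta}C_*\lambda^\eta$. This is at most $C_*\lambda^\eta/2$ once $k\ge\tilde K$, where $\tilde K$ does not depend on $L$. So the constant does not deteriorate over the roughly $\log_2 L$ steps. The finitely many scales $k<\tilde K$ change it only by an $L$-independent amount.

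Only after this does your reduction to general $R>r$ via $R-r\ge\delta_0 r$ go through. The paper also needs a second downward induction on $R-r$ at that stage.

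Your higher-moment step is also only schematic, and what is needed there is not a small $\epsilon$. The paper commutes the powers past $a_x^\dagger$ and $a_y$ via $N_{\chi,ts}^\zeta a_x^\dagger=a_x^\dagger(N_{\chi,ts}+\chi_{ts}(x))^\zeta$. Cauchy--Schwarz then gives $\eta\kappa s^{-1}N_{\chi',ts}(N_{\chi,ts}+1)^{\eta-1}$. The drift $\eta\tilde v s^{-1}$ beats this up to the error $(N+1)^{\eta-1}-N^{\eta-1}\le(\eta-1)(N+1)^{\eta-2}$, which is the lower-order term handled by induction on $\eta$.
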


Notice that \eqref{eq final particles}, together with \eqref{CD rho} and $R-r>1$, yields
\begin{align}
	\tev{N^\eta_{B_r(x)}}{t}\le C_1\lam^\eta R^{d\eta},\notag
\end{align}
for some constant $C_1$ independent of system size, $r$, and $R$. 
Furthermore, for $r=0$ fixing $R=vt$, we obtain
\begin{align}
	\tev{ n_x^\eta}{t}\le C_2\lam^\eta t^{d\eta}.\notag
\end{align}
\subsection{ASTLO}
The main tool in the proof of \thmref{theo final particle} is the so-called ASTLO (\textit{adiabatic space-time localization observable}) method that we review in this section.
Write 
\begin{align}
	\kappa:=2d\abs{J},\notag
\end{align}
then, for every $v>\kappa$, we define
\begin{align}
	&\tilde v:= \frac{v+\kappa }{2}\in(\kappa ,v),\notag\\
	&\eps:=v-\tilde v.\label{eps def}
\end{align}
Consider then the following family of smooth cut-off functions
\begin{align}
	\cE\equiv \cE_\epsilon:=\left\{
	  f\in C^{\infty} (\mathbb{R})\left\vert \begin{aligned} &f\geq0,\: f\equiv 0 \text{ on } (-\infty,\epsilon/2],f\equiv 1 \text{ on } [\epsilon,\infty) \\ &f'\geq 0,\:\sqrt{f'}\in C_c^{\infty} (\mathbb{R}),\:\mathrm{supp} f'\subset(\epsilon/2,\epsilon)
	  \end{aligned}\right.\right\}.  \notag
  \end{align}
  Notice that 
  \begin{align}\label{prop C}
	\text{for every }f_1,f_2\in\cE \text{ there exist }C>0,\,f_3\in\cE \text{ such that  } f_1+f_2\le C f_3.
\end{align}
Consider $R>r\ge0$ and define 
\begin{align}
	s:=(R-r)/v.\label{s def}
\end{align}
For every $t\ge 0$, we can rescale a function $\chi\in\cE$ as follows
\begin{align}
    \chi_{ts}(x)=\chi\left(\frac{R-\tilde vt-|x|}{s}\right),\notag\qquad x\in\mathbb R^d.
\end{align}
We define the ASTLO by second quantization of these rescaled functions, namely
\begin{align}\label{ASTLO}
	\Nf[]{\chi}{ts}:=\md\Gamma(\chi_{ts}),
\end{align}
where for every operator $A$ acting on the one particle Hilbert space $\ell^2(\Lambda)$ we define the second quantization map as follows
\begin{align}\label{dGdef}
	\md\Gamma(A):=\sum_{x,y\in\Lam}A_{xy} a_x^\dagger a_y.
\end{align}
Notice that, for a function $f\in \ell^\infty(\Lam)$, the relation above reduces to
\begin{align}
    \md\Gamma(f)=\sum_{x\in\Lam}f(x) n_x.\notag
\end{align}

\subsection{Basic lemmas}
We introduce some basic lemmas whose proofs can be found in previous works \cite{FLS2022,LRZ2023,LRZ2025}, and that we include here for completeness.

\begin{lemma}[Geometric properties \cite{LRZ2025}*{Lemma~3.1}]\label{prop geom prop}
	Given $R>r>0$ and function $f\in\cE$, the following holds for all $t\le s$:
	\begin{align}
		\Nf[]{f}{0s}\le&{N_{B_{R}}} \label{f and B at t=0},\\
		\Nf[]{f}{ts}\,\ge&{N_{B_{r}}}\label{f and b at t}.
	\end{align}
\end{lemma}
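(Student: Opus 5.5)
Both inequalities are operator inequalities between second-quantized multiplication operators. Since $N_{\chi,ts}=\md\Gamma(\chi_{ts})=\sum_{x\in\Lam}\chi_{ts}(x)n_x$ and $N_{B_\rho}=\sum_{x\in\Lam}\1_{B_\rho}(x)n_x$, and the $n_x$ are commuting nonnegative operators, it suffices to compare the coefficient functions pointwise. Concretely, I will show
\[
\chi_{0s}(x)\le \1_{B_R}(x)\quad\text{and}\quad \chi_{ts}(x)\ge \1_{B_r}(x)\qquad\text{for all } x\in\Lam.
\]
Then $\md\Gamma(g)-\md\Gamma(h)=\sum_x (g(x)-h(x))n_x\ge0$ whenever $g\ge h$ pointwise. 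This holds at least in the sense of quadratic forms on the domain of $N$, which is the sense in which such inequalities are used later.

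For \eqref{f and B at t=0}, I use that $f\in\cE$ satisfies $0\le f\le 1$. This follows from $f\ge0$, $f'\ge0$ and $f\equiv1$ on $[\epsilon,\infty)$. Hence $f_{0s}(x)\le1$ everywhere, and it remains to check that $f_{0s}(x)=0$ for $|x|>R$. In that case the argument is $(R-|x|)/s<0\le\epsilon/2$, so $f$ vanishes there.

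For \eqref{f and b at t}, nonnegativity of $f$ gives $f_{ts}(x)\ge0$ everywhere, so it suffices to show $f_{ts}(x)=1$ for $|x|\le r$. This requires the argument to satisfy $(R-\tilde v t-|x|)/s\ge\epsilon$. Since $|x|\le r$, $t\le s$ and $\tilde v<v$, and recalling $R-r=vs$ from \eqref{s def}, we get
\[
\frac{R-\tilde vt-|x|}{s}\ge\frac{R-r-\tilde v s}{s}=v-\tilde v=\epsilon,
\]
by \eqref{eps def}. Because $f\equiv1$ on $[\epsilon,\infty)$, we conclude $f_{ts}(x)=1$. Here I assume $t\ge0$, as in the definition of the rescaling.

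The argument is elementary. The only real care point is this last computation, which shows why the parameters were chosen as $s=(R-r)/v$ and $\epsilon=v-\tilde v$: the cutoff stays identically $1$ on $B_r$ up to time $s$. A minor technical point is making the operator inequality rigorous for unbounded $n_x$. I handle it by restricting to each $N$-particle sector: there all operators are bounded (since $\Lam$ is finite), and the inequality reduces to comparing diagonal multiplication operators.
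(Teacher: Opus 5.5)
Your proof is correct and follows the paper's argument: the paper likewise uses the pointwise bounds $\1_{x\ge\epsilon}\le f(x)\le\1_{x\ge\epsilon/2}$ together with $R-\tilde v t-\epsilon s\ge r$ for $t\le s$ (from $R-r=vs$ and $\epsilon=v-\tilde v$), and then sums against the nonnegative $n_x$; the only difference is that the paper phrases this through expectation values in $\rho$ rather than as a form inequality. Your restriction to $t\ge 0$ is harmless and in fact unnecessary, since $\tilde v t\le\tilde v s$ holds for every $t\le s$.
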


\begin{lemma}[Symmetrized Taylor expansion, \cite{FLS2022}*{Lem.~2.2}]\label{lemm sym taylor}

		Let $\beta\ge1$ be an integer and $\chi \in\cE$. Define $u:=(\chi ')^{\frac{1}{2}}$. Then there exist a constant $C_{\chi,\beta}$ and, for $\beta\ge2$, functions $j_k\in\cE$ and constants $C_{\chi ,k}$, $2\leq k\leq \beta$, such that,  for  all $x,y\in\mathbb{R}$, 
		\begin{equation}
			\chi (x)-\chi (y)=(x-y)u(x)u(y)+\sum_{k=2}^\beta (x-y)^k h_k (x,y)+(x-y)^{\beta+1}R_\beta(x,y),\notag
		\end{equation}
		where the sum should be dropped for $\beta=1$, and
		\begin{align}
			|h_k(x,y)|\leq& C_{\chi ,k} \Tilde{u}_k(x)\Tilde{u}_k(y),\qquad 2\leq k\leq \beta,\notag\\
			\abs{R_{\beta}(x,y)}\le& C_{\chi ,\beta},\notag
		\end{align}
with 
\begin{align}
    \tilde u_k:=(j_k')^{\frac{1}{2}}, \qquad 2\le k\le \beta.\notag
\end{align}
\end{lemma}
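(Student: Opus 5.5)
The plan is to expand about the midpoint, where the symmetry of $u(x)u(y)$ is natural, and then to use smooth cut-offs so that the coefficients $h_k$ factorize in the form $\tilde u_k(x)\tilde u_k(y)$.

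\emph{Midpoint expansion.} Write $m:=(x+y)/2$ and $z:=x-y$, and set
\[
F(x,y):=\chi(x)-\chi(y)-(x-y)u(x)u(y).
\]
Taylor-expanding in $z$ at fixed $m$ gives the following.
\begin{itemize}
\item The difference $\chi(m+z/2)-\chi(m-z/2)$ is odd in $z$, with coefficients $c_k\chi^{(k)}(m)$.
\item The product $u(m+z/2)u(m-z/2)$ is even in $z$, with coefficients that are finite sums of products $u^{(i)}(m)u^{(j)}(m)$.
\item The $z^1$-coefficient of $F$ is $\chi'(m)-u(m)^2=0$, because $u^2=\chi'$.
\end{itemize}
Hence, up to order $\beta$,
\[
F=\sum_{k=2}^{\beta}a_k(m)z^k,
\]
where each $a_k$ is smooth and supported in $K:=\supp\chi'$, a compact subset of $(\epsilon/2,\epsilon)$. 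This holds because every coefficient contains a derivative of $\chi'$ or a product of derivatives of $u$, and $\supp u=K$.

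\emph{Factorizing cut-offs.} Pick $\psi\in C_c^\infty((\epsilon/2,\epsilon))$ with $0\le\psi\le1$ and $\psi\equiv1$ on the $\delta$-neighbourhood of $K$, for some small $\delta>0$. Choose a smooth bump $\phi\ge0$ supported in $(\epsilon/2,\epsilon)$ with $\phi\equiv1$ on $\supp\psi$. Define
\[
j(x):=\frac{\int_{-\infty}^x\phi^2}{\int\phi^2}.
\]
Then $j\in\cE$, $\sqrt{j'}=\phi/\|\phi\|_2$ is smooth and compactly supported, and $\psi\le\|\phi\|_2^2\,(j')^{1/2}$. Take $j_k:=j$ for all $k$ and set
\[
h_k(x,y):=a_k\Big(\tfrac{x+y}{2}\Big)\psi(x)\psi(y).
\]
This immediately gives $|h_k(x,y)|\le C_{\chi,k}\tilde u_k(x)\tilde u_k(y)$.

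\emph{Remainder.} Define $R_\beta:=(x-y)^{-\beta-1}\big(F-\sum_{k=2}^\beta(x-y)^kh_k\big)$ off the diagonal. I split into two regions.
\begin{itemize}
\item For $|x-y|\ge\delta$, every term is bounded: $\chi$, $u$ and $h_k$ are bounded, and $|x-y|^{j-\beta-1}\le\delta^{j-\beta-1}$ for $j\le\beta$. So $R_\beta$ is bounded.
\item For $|x-y|<\delta$, suppose first that $m\in K$. Then $x,y$ lie in the $\delta$-neighbourhood of $K$, so $\psi(x)\psi(y)=1$. The bracket is therefore exactly $F$ minus its Taylor polynomial of degree $\beta$ in $z$, and Taylor's theorem with remainder, applied uniformly in $m$ over the compact set $K$, bounds it by $C|z|^{\beta+1}$.
\item If instead $m\notin K$, all $a_k(m)$ vanish. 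Still, $x,y$ could meet $\supp\chi'$ while $m$ does not. So I would instead expand about $m$ for arbitrary $m$: the bracket equals $F(m+z/2,m-z/2)-\sum_k a_k(m)z^k$, whose first $\beta$ Taylor coefficients in $z$ vanish. This holds for every $m$, not just $m\in K$, since $a_k$ are defined globally and the cut-off is $1$ whenever $a_k(m)\neq0$ and $|z|<\delta$. Smoothness with uniformly bounded derivatives, as $\chi'$ and $u$ are compactly supported, then gives the bound $C|z|^{\beta+1}$ uniformly.
\end{itemize}

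The main obstacle is the bookkeeping in this last step. One must make sure that inserting the cut-offs $\psi(x)\psi(y)$ does not spoil the order-$(\beta+1)$ vanishing on the diagonal, and that the bound on $R_\beta$ is uniform over the non-compact region $|x-y|<\delta$. Both follow because all $z$-derivatives of $F$ are bounded globally, and because $\psi\equiv1$ wherever it matters for $|z|<\delta$.
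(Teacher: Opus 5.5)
Your proof is correct, and it takes a genuinely different route from the paper's.

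\textbf{The paper's route.} The paper expands one-sidedly about $x$. It symmetrizes the first-order term via $\chi'(x)=u(x)u(y)+u(x)(u(x)-u(y))$. It then symmetrizes each leftover one-variable coefficient $w_k(x)$ by writing $w_k(x)=v_k(x)w_k(x)v_k(y)+v_k(x)w_k(x)(v_k(x)-v_k(y))$, with a cut-off $v_k\equiv1$ on $\supp\chi'$. Finally it re-expands $v_k(x)-v_k(y)$ about $x$ and iterates order by order. What this buys is that every error is a one-variable Lagrange remainder of a globally smooth function. Hence the bound on $R_\beta$ holds on all of $\mathbb{R}^2$ with no case distinction, and the $h_k$ come out in product form $v_k(x)w_k(x)v_k(y)$. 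The price is iterative bookkeeping with a new cut-off and a new $j_k$ at each order.

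\textbf{Your route.} You expand about the midpoint $m=(x+y)/2$, where the first-order cancellation $\chi'(m)-u(m)^2=0$ is immediate. You insert a single cut-off $\psi(x)\psi(y)$ once. The step that makes this work is that $\psi(x)\psi(y)=1$ whenever $a_k(m)\neq0$ and $|x-y|<\delta$. So the cut-off leaves the degree-$\beta$ Taylor polynomial of $\Phi(m,z):=F(m+z/2,m-z/2)$ untouched near the diagonal, and the uniform $z$-Taylor remainder bound applies. Away from the diagonal, $|x-y|^{j-\beta-1}\le\delta^{j-\beta-1}$ makes every term trivially bounded.

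\textbf{Comparison.} You trade the paper's iteration for a near/far-diagonal split. In return you get explicit coefficients $a_k(m)=\frac{1}{k!}\partial_z^k\Phi(m,0)$ and a single $j\in\cE$ that serves for all $k$. Your $h_k$ depend on $(x+y)/2$ and are not of product form. This is harmless, since only the bound $|h_k|\le C_{\chi,k}\tilde u_k(x)\tilde u_k(y)$ is ever used downstream.

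\textbf{A bonus of your parametrization.} $\Phi(m,\cdot)$ is odd in $z$, so $a_k\equiv0$ for every even $k$; in particular $h_2\equiv0$. In the recursive-structure estimate this would turn the $s^{-2}$ term into an $s^{-3}$ term, although that improvement is not needed there.

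\textbf{One trivial slip.} Since $\sqrt{j'}=\phi/\|\phi\|_2$, the correct inequality is $\psi\le\phi=\|\phi\|_2\,(j')^{1/2}$, not $\|\phi\|_2^2\,(j')^{1/2}$. The constant is absorbed into $C_{\chi,k}$ anyway.
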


\begin{lemma}[Commutator expansion, \cite{FLS2022}*{Lem.~A.2}]\label{lemma comm exp}
	Let $H$ be as in \eqref{ham}. Let $g\in\ell^\infty(\Lam)$. In the sense of forms on $D(H)\cap D(N)$, we have
	
	\begin{equation}\label{HRf-com}
		\sbr{H,\dG(g)}=-J\sum_{x,y\in\Lambda}\cbr{g(x)-g(y)}a_x^{*}a_y.
	\end{equation}
\end{lemma}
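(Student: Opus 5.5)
The commutator $[H,\dG(g)]$ splits as $[T,\dG(g)]+[V,\dG(g)]$, and I will compute each piece directly from the canonical commutation relations. For $\dG(g)=\sum_{z}g(z)n_z$, the potential part $V=\sum_{x\sim y}\big(U n_x^{p/2}n_y^{p/2}-\mu(n_x+n_y)\big)$ is a function of the commuting family $\{n_z\}_{z\in\Lambda}$. Hence $[V,\dG(g)]=0$, at least formally; on a suitable core this holds as an identity of commuting functions of the occupation numbers.

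So everything reduces to the hopping part. First I compute $[a_x^\dagger a_y, n_z]$ using
\[
[a_x^\dagger,n_z]=-\delta_{xz}a_x^\dagger,\qquad [a_y,n_z]=\delta_{yz}a_y .
\]
This gives
\[
[a_x^\dagger a_y,n_z]=a_x^\dagger[a_y,n_z]+[a_x^\dagger,n_z]a_y=(\delta_{yz}-\delta_{xz})\,a_x^\dagger a_y .
\]
Summing against $g(z)$ yields
\[
[a_x^\dagger a_y,\dG(g)]=(g(y)-g(x))\,a_x^\dagger a_y .
\]
Multiplying by $J$ and summing over ordered neighbouring pairs $x\sim y$ then gives
\[
[T,\dG(g)]=-J\sum_{x\sim y}(g(x)-g(y))\,a_x^\dagger a_y .
\]
This is the stated formula \eqref{HRf-com}, with the sum understood over nearest neighbours as in the definition \eqref{T V on X} of $T$.

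The only genuine issue is rigor, since $H$ and $\dG(g)$ are unbounded. I would work on the dense subspace $\cF_{\rm fin}$ of finite-particle vectors. Because $\Lambda$ is finite, each $N$-particle sector $\ell^2_s(\Lambda^N)$ is finite dimensional and is invariant under $T$, $V$ and $\dG(g)$. On each sector all operators are matrices and the computation above is exact.

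The remaining step is to extend the identity to forms on $D(H)\cap D(N)$. Since $H$, $N$ and $\dG(g)$ all commute with $N$, they are block diagonal with respect to the sector decomposition. I will write $\psi=\sum_N\psi_N$ and expand the form $\langle\phi,[H,\dG(g)]\psi\rangle:=\langle H\phi,\dG(g)\psi\rangle-\langle\dG(g)\phi,H\psi\rangle$ sector by sector. The sums converge because $\|\dG(g)\psi_N\|\le\|g\|_\infty N\|\psi_N\|$ and $\psi\in D(N)$, and likewise for $\phi$. The right-hand side of \eqref{HRf-com} is bounded on each sector by $C|J|\|g\|_\infty N$, so it also defines a form on $D(N)$.

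This passage from sectors to the full form domain is the only mild obstacle, and it is resolved by the block-diagonal structure.
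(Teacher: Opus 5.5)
Your proposal is correct and follows the paper's proof. Both arguments note that $\dG(g)=\sum_z g(z)n_z$ commutes with the potential, since the potential is a function of the $n_z$. Both then compute $[a_x^\dagger a_y,n_z]=(\delta_{yz}-\delta_{xz})\,a_x^\dagger a_y$ from the CCR, with the sum in \eqref{HRf-com} read over neighbouring pairs.

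Your sector-by-sector justification of the form identity on $D(H)\cap D(N)$ uses that each $N$-particle space is finite dimensional and invariant. This is a sound addition that the paper omits, deferring those details to the literature.
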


The proof of this lemma can be summarized by $\sbr{H,\dG(g)}=\sbr{\dG(JA),\dG(g)}=J\dG(\sbr{A,g})$ with $A$ the adjacency matrix of the graph. We refer to \cite{FLS2022}*{Lem.~A.2} for the details.

\subsection{Proof of \lemref{prop geom prop}}
The proof of \lemref{prop geom prop} can be found in \cite{LRZ2023,LRZ2025}, but we include it in this section for the convenience of the reader.
\begin{proof}
We start by noticing that for a function $\chi\in\cE$ and $x\in\mathbb{R}$ it holds
\begin{align}\label{control function}
	\1_{x\ge \epsilon}\le \chi(x)\le \1_{x\ge \epsilon/2}.
\end{align}
The first inequality in \eqref{control function}, together with the definition of $\epsilon$ \eqref{eps def}, $s$ \eqref{s def}, and $s\ge t$, yields
\begin{align}
	\trt{N_{\chi,ts}}\ge \sum_{\abs{x}\le R-\tilde vt-\eps s}\trt{n_x}\ge \sum_{\abs{x}\le r}\trt{n_x}=\tev{N_{B_r}}{t}.\notag
\end{align}
Assuming $t=0$, the second inequality in \eqref{control function} leads to
\begin{align}
	\trzero{N_{\chi,0s}}\le \sum_{\abs{x}\le R-\eps s/2}\trzero{n_x}\le \sum_{\abs{x}\le R}\trzero{n_x}=\trzero{N_{B_R}}.\notag
\end{align}
\end{proof}

\subsection{Proof of \lemref{lemm sym taylor}}
The proof of \lemref{lemm sym taylor} can be found in \cite{FLS2022} and we write it here for completeness. 

\begin{proof}
	We start by Taylor expanding the difference we aim to bound
	\begin{align}\label{after 1 taylor}
		\chi(x)-\chi(y)=\sum_{k=1}^\beta \frac{(x-y)^k}{k!}\chi^{(k)}(x)+C_{\beta,\chi}(x-y)^{\beta+1}.
	\end{align}
We now aim to symmetrize the expansion above with respect to $x$ and $y$. To this end we write
\begin{align}
	\chi'(x)=u^2(x)=u^2(x)\pm u(x)u(y)=u(x)u(y)+u(x)\cbr{u(x)-u(y)}.\notag
\end{align}
Thus, \lemref{lemm sym taylor} holds for $\beta=1$. 
To obtain the desired result for $\beta\ge 2$, we further Taylor expand the difference $u(x)-u(y)$ until order $\beta-1$, and \eqref{after 1 taylor} becomes
\begin{align}\label{after taylor 2}
	\chi(x)-\chi(y)=(x-y)u(x)u(y)+\sum_{k=2}^\beta(x-y)^k\cbr{ \frac{\chi^{(k)}(x)}{k!}+\frac{u(x)u^{(k-1)}(x)}{(k-1)!}}+C_1(x-y)^{\beta+1}	.
\end{align}
Now we iterate the above procedure to bound the higher order terms. 
Notice that if a function $f$ is such that $f\in C^\infty(\mathbb{R})$, $f\ge 0$, $\sqrt{f}\in C^\infty(\mathbb{R})$ with $\supp f\subset (\epsilon/2,\epsilon)$, then there exist $g\in \cE$ and constant $C$ such that $f\le Cg'$. 
In fact, 
\begin{align}
	\cbr{\int_{-\infty}^{\infty}f(s)\md s}^{-1}\cbr{\int_{-\infty}^{x}f(s)\md s}\in\cE.\notag
\end{align}
We call functions as $f$ above \textit{admissible}.
Furthermore, given $f,g\in C_0^\infty(\mathbb{R})$ we write $f \prec g$ if $g\equiv 1$ on $\supp f$.
With this at hand, we can start analyzing the $k=2$ term in \eqref{after taylor 2}. 
Denote by 
\begin{align}
	w_2:=\frac{\chi^{(2)}(x)}{2}+u(x)u'(x),\notag
\end{align}
and for some $v_2\in C_0^\infty(\mathbb{R})$ with $\supp v_2\subset (0,\epsilon)$ and $\chi'\prec v_2$ we write
\begin{align}\label{after taylor 3}
	(x-y)^2\cbr{ \frac{\chi^{(2)}(x)}{2}+u(x)u'(x)}&= (x-y)^2v_2(x)w_2(x)v_2(y) \notag \\
													&\quad + (x-y)^2v_2(x)w_2(x)(v_2(x)-v_2(y)) \notag\\
												   &= (x-y)^2v_2(x)w_2(x)v_2(y)\\
													&\quad + v_2(x)w_2(x)\sum_{k=3}^{\beta}\frac{(x-y)^{k}}{(k-2)!}v_2^{(k-2)}(x)+C_2(x-y)^{\beta+1}\notag,
\end{align}
where we Taylor expanded the difference above until order $\beta-2$.
Inequality \eqref{after taylor 2}, together with \eqref{after taylor 3}, yields
\begin{align}
	\chi(x)-\chi(y)&=(x-y)u(x)u(y)+(x-y)^2v_2(x)w_2(x)v_2(y)\notag\\
	&\quad +\sum_{k=3}^\beta(x-y)^k\cbr{ \frac{\chi^{(k)}(x)}{k!}+\frac{u(x)u^{(k-1)}(x)}{(k-1)!}+v_2(x)w_2(x)\frac{v_2^{(k-2)}(x)}{(k-2)!} }\notag	\\
	&\quad +C_3(x-y)^{\beta+1}.\notag
\end{align}
We iterate the above procedure to control the terms for $k=3,\dots, \beta$, and we obtain
\begin{align}
	\chi(x)-\chi(y)&=(x-y)u(x)u(y)+\sum_{k=2}^{\beta}(x-y)^kv_k(x)w_k(x)v_k(y)  +C(x-y)^{\beta+1},\notag
\end{align}
for some $v_k$ smooth and non-negative such that $\supp v_k\subset (0,\epsilon)$ and $f'\prec v_k$, and $w_k$ smooth with $\supp w_k\subset(0,\epsilon)$.
Notice that, since $v_k$ are admissible functions, 
\begin{align}
	\abs{v_k(x)w_k(x)v_k(y) }\le C_k\, v_k(x)v_k(y)\le C'_k \, ({j_k}'(x))^{1/2}  ( {j_k}'(y))^{1/2},  \notag
\end{align}
for some $ j_k\in\cE$. This concludes the proof.
\end{proof}

\section{Proof of particle propagation bound}
\label{sec:pp-proof}
\subsection{First moment of the number operator}\label{sec mom 1}
In this section, we control the time evolution of the first moment of the number operator, this will constitute the induction basis needed to prove a similar bound for higher moments.
We start by controlling the time derivative of the ASTLO in the following proposition.

\begin{proposition}[Recursive structure]\label{prop recurs stru}
	Consider $v>\kappa $, $\chi\in\cE$. Then, for $\beta\ge 1$, there exists $C=C(d,\chi,\beta,J)>0$, and, for $\beta\ge 2$, a function $\tilde \chi\in\cE$ such that, for all states $\rho\in\cD_1$, and for all $t,r,R>0$ with $R>r$ and $ R-r\ge1$, it holds
\begin{align}\label{eq recurs stru}
	\od{}{t}\tev{\Nf[]{\chi}{ts}}{t}\le  \frac{\kappa -\tilde v}{s}\tev{\Nf[]{\chi'}{ts}}{t}+\frac{C}{s^{2}}\tev{\Nf[]{\tilde \chi'}{ts}}{t}+\frac{C}{s^{\beta+1}}\tev{N_{B_{R+1}}}{t},
\end{align}
where for $\beta=1$, the second summand on the r.h.s. of \eqref{eq recurs stru} should be dropped. Recall $s:=(R-r)/v$ and $\tilde v:=(v-\kappa)/2$. 
\end{proposition}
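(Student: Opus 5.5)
The plan is to compute the Heisenberg derivative of the ASTLO and split it into an explicit time-derivative part and a commutator part. Since $\chi_{ts}$ depends on $t$ explicitly, we have
\begin{align}
\od{}{t}\tev{\Nf[]{\chi}{ts}}{t}=\tev{\partial_t\Nf[]{\chi}{ts}}{t}+\tev{\mi\sbr{H,\Nf[]{\chi}{ts}}}{t}.\notag
\end{align}
Because $\partial_t\chi_{ts}(x)=-\frac{\tilde v}{s}\chi'\big(\frac{R-\tilde vt-|x|}{s}\big)$, the first term equals $-\frac{\tilde v}{s}\tev{\Nf[]{\chi'}{ts}}{t}$. This already yields the $-\tilde v/s$ part of the leading coefficient. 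Domain questions are handled by $\rho\in\cD_1$, $\Tr[H\rho]<\infty$ and the fact that $\Nf[]{\chi}{ts}\le N$, so that \lemref{lemma comm exp} applies in the sense of forms.

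For the commutator we use \lemref{lemma comm exp} with $g=\chi_{ts}$. This gives
\begin{align}
\mi\sbr{H,\Nf[]{\chi}{ts}}=-\mi J\sum_{x\sim y}\big(\chi_{ts}(x)-\chi_{ts}(y)\big)a_x^\dagger a_y.\notag
\end{align}
We write $\chi_{ts}(x)=\chi(\xi_x)$ with $\xi_x=(R-\tilde vt-|x|)/s$, and then $|\xi_x-\xi_y|\le |x-y|/s=1/s$ for nearest neighbours. We apply \lemref{lemm sym taylor} to $\chi(\xi_x)-\chi(\xi_y)$, which produces three kinds of terms.

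\textbf{Leading term.} This is $-\mi J\sum_{x\sim y}(\xi_x-\xi_y)u(\xi_x)u(\xi_y)a_x^\dagger a_y$. By Cauchy--Schwarz in the form $\pm(c\,a_x^\dagger a_y+\bar c\,a_y^\dagger a_x)\le |c|(\varphi_x n_x+\varphi_x^{-1}n_y)$, and pairing each bond symmetrically, its expectation is bounded by
\begin{align}
\frac{|J|}{s}\sum_{x\sim y}\frac{u(\xi_x)^2n_x+u(\xi_y)^2n_y}{2}\le \frac{2d|J|}{s}\sum_x u(\xi_x)^2n_x=\frac{\kappa}{s}\Nf[]{\chi'}{ts}.\notag
\end{align}
Here we used that each site has $2d$ neighbours and $u^2=\chi'$. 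Combined with the explicit time derivative, this gives the coefficient $(\kappa-\tilde v)/s$.

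\textbf{Intermediate terms.} For $2\le k\le\beta$ the terms are bounded in the same way, now using $|h_k|\le C\tilde u_k(\xi_x)\tilde u_k(\xi_y)$ and $|\xi_x-\xi_y|^k\le s^{-k}\le s^{-2}$, where $s\ge 1/v$ gives $s^{-k}\le C s^{-2}$. This yields $\frac{C}{s^2}\sum_k\Nf[]{j_k'}{ts}$. Using property \eqref{prop C} repeatedly, we pick a single $\tilde\chi\in\cE$ with $\sum_k j_k'\le C\tilde\chi'$. I expect this step to be the main (though mild) obstacle: \eqref{prop C} compares functions rather than derivatives, so one instead applies the admissible-function construction from the proof of \lemref{lemm sym taylor} to $\sum_k j_k'$, which is smooth, nonnegative, has a smooth square root after a suitable choice, and is supported in $(\epsilon/2,\epsilon)$.

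\textbf{Remainder.} The term $(\xi_x-\xi_y)^{\beta+1}R_\beta$ is bounded by $Cs^{-\beta-1}\sum_{x\sim y}|\langle a_x^\dagger a_y\rangle|\le Cs^{-\beta-1}\sum_{x\sim y}(n_x+n_y)/2$. For this we need localization, which I believe comes from the fact that the difference $\chi(\xi_x)-\chi(\xi_y)$ vanishes unless one of the sites has $\xi\ge\epsilon/2$. That forces $|x|\le R$, so both sites lie in $B_{R+1}$. Hence the remainder is at most $\frac{C}{s^{\beta+1}}\tev{N_{B_{R+1}}}{t}$, which completes the bound \eqref{eq recurs stru}.
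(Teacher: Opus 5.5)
Your proposal is correct and follows the paper's proof essentially step by step. You compute the explicit time derivative giving $-\tilde v/s$, apply the commutator lemma and then the symmetrized Taylor expansion with a bond-wise Cauchy--Schwarz (your operator AM--GM gives exactly the same constant $\kappa/s$), merge the intermediate terms into a single $\tilde\chi'$ using $s\ge 1/v$, and localize the remainder to $B_{R+1}$ because the difference vanishes unless one endpoint lies in $B_R$.

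You are right that \eqref{prop C} must really be applied to derivatives. The cleanest fix is to bound $\sum_k j_k'\le C w^2$ with $w\in C_c^\infty((\epsilon/2,\epsilon))$, $w\ge 0$, equal to $1$ on the union of the supports. This works because $w^2$ is admissible, whereas $\sum_k j_k'$ itself need not have a smooth square root.
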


Bootstrapping the integrated version of \eqref{eq recurs stru} and applying \lemref{prop geom prop} yields  the following corollary. 
\begin{corollary}[Bootstrapping]\label{cor boot}
	Consider $\beta\ge 1$ and  $v> \kappa $. Then, there exists $C=C(d,v,\beta,J)>0$ such that, for all states $\rho\in\cD_1$, and for all $R>r>0$ such that $ R-r\ge 1$, it holds 
	\begin{align}\label{eq boot}
		\tev{N_{B_r}}{t}\le \cbr{1+\frac{C}{s}} \trzero{N_{B_R}}+C\,s^{-\beta}\sup_{0\le u\le t}\tev{N_{B_{R+1}}}{u},
	\end{align}
	for all $0\le t\le s$.
\end{corollary}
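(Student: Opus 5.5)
We prove Corollary~\ref{cor boot} by induction on $\beta$, integrating \eqref{eq recurs stru} in time and using Lemma~\ref{prop geom prop} to convert the ASTLO into local particle numbers at both ends of the time interval.

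The key structural fact is that $\kappa-\tilde v<0$, because $\tilde v>\kappa$. Since $\chi'\ge0$, the first term on the right of \eqref{eq recurs stru} is therefore non-positive, and it comes with a gain: for any $\chi\in\cE$ we have
\[
\int_0^t \tev{\Nf[]{\chi'}{us}}{u}\,\md u\le \frac{s}{\tilde v-\kappa}\Big(\trzero{\Nf[]{\chi}{0s}}+\frac{C}{s^2}\int_0^t\tev{\Nf[]{\tilde\chi'}{us}}{u}\md u+\frac{C t}{s^{\beta+1}}\sup_{u\le t}\tev{N_{B_{R+1}}}{u}\Big),
\]
obtained by integrating \eqref{eq recurs stru} and using $\Nf[]{\chi}{ts}\ge0$.

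\textbf{Base case $\beta=1$.} Integrate \eqref{eq recurs stru} over $[0,t]$ and drop the negative first term. This gives $\tev{\Nf{\chi}{ts}}{t}\le \trzero{\Nf{\chi}{0s}}+C t s^{-2}\sup_{u\le t}\tev{N_{B_{R+1}}}{u}$. Since $t\le s$, the error is $Cs^{-1}\sup_u\tev{N_{B_{R+1}}}{u}$. Lemma~\ref{prop geom prop} then bounds the left side below by $\tev{N_{B_r}}{t}$ and the initial term above by $\trzero{N_{B_R}}$. This is \eqref{eq boot} even without the $C/s$ factor.

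\textbf{Inductive step $\beta\ge2$.} Integrating \eqref{eq recurs stru} gives
\[
\tev{\Nf{\chi}{ts}}{t}\le \trzero{\Nf{\chi}{0s}}+\frac{C}{s^2}\int_0^t\tev{\Nf{\tilde\chi'}{us}}{u}\md u+\frac{C}{s^{\beta}}\sup_{u\le t}\tev{N_{B_{R+1}}}{u}.
\]
For the middle integral we apply the displayed gain inequality with $\chi$ replaced by $\tilde\chi$ and with $\beta-1$ in place of $\beta$. This uses \eqref{eq recurs stru} at order $\beta-1$ for $\tilde\chi$, with some new $\tilde{\tilde\chi}$. The gain gives a factor $s$, so
\[
\frac{C}{s^2}\int_0^t\tev{\Nf{\tilde\chi'}{us}}{u}\md u\le \frac{C}{s}\trzero{\Nf{\tilde\chi}{0s}}+\frac{C}{s^3}\int_0^t\tev{\Nf{\tilde{\tilde\chi}'}{us}}{u}\md u+\frac{C}{s^{\beta}}\sup_u\tev{N_{B_{R+1}}}{u}.
\]
Each iteration gains a factor $s^{-1}$ and lowers the order by one. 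After at most $\beta-1$ iterations the remaining integral has order-$1$ expansion, which has no middle term, so the recursion terminates. The resulting initial-time terms $C s^{-k}\trzero{\Nf{\chi_k}{0s}}$ with $k\ge1$ are each at most $Cs^{-1}\trzero{N_{B_R}}$ by \eqref{f and B at t=0}, using $s\ge 1/v$ so that $s^{-k}\lesssim s^{-1}$. Every remainder is $\lesssim s^{-\beta}\sup_u\tev{N_{B_{R+1}}}{u}$, using $t\le s$. Finally, Lemma~\ref{prop geom prop} bounds the left-hand side below by $\tev{N_{B_r}}{t}$.

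The main obstacle is bookkeeping at the inductive step. Proposition~\ref{prop recurs stru} must be applied at lower order $\beta-k$ to the auxiliary functions $\tilde\chi$ it produces, which are again in $\cE$, with constants depending only on these finitely many functions. We must also check that the time-integrated powers of $s$ combine to exactly $s^{-\beta}$ in the remainder. We can track the constants through property \eqref{prop C} if sums of cutoffs appear. The restriction $R-r\ge1$ ensures $s\gtrsim1$, which makes the higher negative powers of $s$ harmless.
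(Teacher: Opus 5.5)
Your proposal is correct and follows the paper's proof. You integrate \eqref{eq recurs stru}, drop the non-positive $(\kappa-\tilde v)$ term, and iterate the rearranged ``gain'' inequality (the paper's \eqref{recurs ineq}) at orders $\beta-1,\dots,1$ on the auxiliary cutoffs, finishing with \lemref{prop geom prop}. Your bookkeeping is, if anything, slightly more careful than the paper's: every remainder is $Cts^{-\beta-1}\le Cs^{-\beta}$, and you use $s\ge 1/v$ instead of the paper's ``$s\ge 1$'' to absorb $s^{-k}\lesssim s^{-1}$.
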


By a downwards multi-scale induction we can control the remainder term appearing in \eqref{eq boot} and we prove the following proposition.

\begin{proposition}[Remainder bound]\label{prop remainder}
    Fix $v>\kappa $, $\delta_0\in(0,1)$, and $\beta>1$. 
    There exists a positive $C=C(v,\delta_0,\beta,J,d)$ such that for all $\rho\in\cD_1$, satisfying \eqref{CD rho} for some $\lam>0$ and $\eta=1$,  all $R>r\ge0$ with $R-r>\max\Set{1,\delta_0r}$, the following holds
\begin{align}\label{eq remainder control}
	\tev{N_{B_{r}}}{t}\le (1+s^{-1}C)\trzero{N_{B_{R}}}+ C\lambda s^{-\beta+d}
\end{align}
for all $0\le t\le s$.
\end{proposition}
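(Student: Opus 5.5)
The plan is to iterate \corref{cor boot} along a finite sequence of enlarging balls until the remainder is small enough to close with the controlled density assumption. Fix $R>r\ge0$ with $R-r>\max\{1,\delta_0 r\}$ and $0\le t\le s=(R-r)/v$. Apply \corref{cor boot} with the pair $(r,R)$ and a large auxiliary exponent $\beta'$ (to be chosen). This gives
\begin{align*}
\tev{N_{B_r}}{t}\le (1+C/s)\trzero{N_{B_R}}+C s^{-\beta'}\sup_{0\le u\le t}\tev{N_{B_{R+1}}}{u}.
\end{align*}

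The remainder involves the number of particles in $B_{R+1}$ at times $u\le s$. To control it I would bound $\tev{N_{B_{R+1}}}{u}$ by applying \corref{cor boot} again, now to the pair $(R+1,R_2)$ with $R_2-(R+1)=v s_2$ and $s_2\ge s$ (for instance $R_2=R+1+v s$, so that $u\le s\le s_2$). Its leading term is $(1+C/s_2)\trzero{N_{B_{R_2}}}\le 2\lambda R_2^d$ by \eqref{CD rho} with $\zeta=1$. Its new remainder has the same structure, with an extra factor $s_2^{-\beta'}$ and a slightly larger ball.

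The most naive step would be the trivial a priori bound $\tev{N_{B_{R_k+1}}}{u}\le\trzero{N}$, using conservation of $N$. This is not system-size uniform, so the induction must terminate differently. The point of the ``downwards multi-scale induction'' is to use geometrically growing scales: $R_{k+1}= R_k+1+v s_k$ with $s_k\sim 2^k s$. At step $k$ the accumulated prefactor is $\prod_j C s_j^{-\beta'}$, while the density term is $\lambda R_{k}^d\lesssim\lambda(2^k s+r)^d$. Since $R-r>\delta_0 r$, we have $r\lesssim s/\delta_0$, so $R_k\lesssim 2^k s$. Summing the geometric series then gives a total remainder of
\begin{align*}
\sum_k C^k s^{-k\beta'}2^{-c k^2}\lambda (2^k s)^d\lesssim \lambda s^{-\beta'+d}.
\end{align*}

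To terminate the iteration uniformly in $|\Lambda|$, I note that $\Lambda$ is finite, so once $R_k$ exceeds the diameter of $\Lambda$, $N_{B_{R_k}}=N$ is conserved. Its expectation at time $u$ equals its value at time $0$, which is at most $\lambda R_k^d$ by \eqref{CD rho}. This closes the chain after finitely many steps with bounds independent of $|\Lambda|$. Choosing $\beta'=\beta$ yields \eqref{eq remainder control}; the case $s<1$ is excluded since $R-r>1$ and only constants change.

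The main obstacle is the bookkeeping. First, the time constraint $u\le s_k$ must hold at every level. Second, the multiplicative constants $(1+C/s_k)$ and $C^k$ must not accumulate. The superexponential smallness $s_k^{-\beta'}$ with $s_k\ge s\ge 1/v$ handles the second issue only if $s$ is large. For $s$ of order one, I would instead absorb everything into $C$ using the crude bound $\tev{N_{B_{R+1}}}{u}\le\lambda R_{K}^d$ at the top scale, arguing that the number of levels $K$ is then bounded. If the $C^k$ growth is problematic, I would pick $s_{k+1}=s_k^2$ or otherwise enlarge the scales fast enough that $C s_k^{-\beta'}\le 1/2$ for $k\ge1$ once $s$ exceeds a fixed threshold, and treat small $s$ trivially.
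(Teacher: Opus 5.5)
Your argument is correct and is, in substance, the paper's proof. The paper's ``downwards multi-scale induction'' is the recursion $Q_k\le(1+C/s_k)\lambda R_{k+1}^d+Cs_k^{-\beta}Q_{k+1}$ from \corref{cor boot}, anchored at the system-size scale by conservation of $N$ and \eqref{CD rho}. You unroll it into an explicit series along one chain of doubling scales adapted to $(r,R)$. The paper instead proceeds in two stages. It first treats dyadic pairs $(2^k,2^{k+1})$, keeping a fixed inductive constant above a threshold $\tilde K$ where $C2^{2d}2^{-\beta k}\le 1/2$ and enlarging constants finitely often below it. It then reaches general $(r,R)$ by a second induction using the pairs $(R+1,3R-2r+1)$. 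Your chain avoids that second stage and uses $R-r>\delta_0 r$ only once, to get $R_{k+1}\lesssim 2^k s$.

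\textbf{Correction on small $s$.} Your worry about $s$ of order one is unfounded, and the fallbacks you propose for it would fail:
\begin{itemize}
\item The number of levels needed to exhaust $\Lambda$ is about $\log_2(L/s)$, which is not bounded uniformly in $\abs{\Lambda}$.
\item The crude top-scale bound $\lambda R_K^d$ is of order $\lambda L^d$, so it is not uniform either.
\item Choosing $s_{k+1}=s_k^2$ shrinks the scales when $s_k<1$.
\end{itemize}
None of this is needed. Since $R-r>1$ gives $s>1/v$, with $s_j=2^js$ you get
\begin{align*}
\prod_{j=0}^{k-1}Cs_j^{-\beta}=C^k\,2^{-\beta k(k-1)/2}s^{-k\beta}\le s^{-\beta}\,C^k v^{(k-1)\beta}\,2^{-\beta k(k-1)/2}.
\end{align*}
You also have $R_{k+1}\le vs(4+\delta_0^{-1})2^k$. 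So the Gaussian factor absorbs $C^kv^{(k-1)\beta}2^{kd}$, and the whole series is at most $C'\lambda s^{-\beta+d}$. This includes the terminal term, where $N_{B_{R_K+1}}=N$ and $\trzero{N}\le\lambda(R_K+1)^d$. The constant $C'$ is independent of $s>1/v$ and of the system size.

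Finally, replace $(1+C/s_k)\le 2$ by $(1+C/s_k)\le 1+Cv$, since $s_k$ need not be large.
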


\subsection{Higher moments}\label{sec PPB higher}
Using the result in the previous section as the base case, we build an induction on the moment parameter $\eta$ to show the following proposition.

\begin{proposition}[Recursive structure for $\eta>1$]\label{prop recurs stru p}
	Consider $v>\kappa $ and $\chi\in\cE$. For $\beta\ge 1$, there exists $C=C(d,\chi,\beta,J)>0$, and, for $\beta\ge 2$, a function $\tilde \chi\in\cE$ such that, for every  state $\rho\in\cD_\eta $ for some $\eta>1$, and for all $t,r,R>0$ with $R>r$ and $R-r\ge 0$,  the following holds
\begin{align}\label{eq recurs stru p}
	\od{}{t}\tev{\Nf[\eta]{\chi}{ts}}{t}\le & \;\eta(\eta-1)\tilde v  s^{-1}\tev{N_{\chi',ts}(N_{\chi,ts}+1)^{\eta-2}}{t}\\
	&+\eta(\kappa -\tilde v) s^{-1}\tev{N_{\chi',ts}(N_{\chi,ts}+1)^{\eta-1} }{t}
	\notag\\
	&+\eta \,C\;s^{-2}\tev{N_{\tilde \chi',ts} (N_{\chi,ts}+1)^{\eta-1}}{t}\notag\\
	&+\frac{ \eta \, C}{s^{\beta+1}} \tev{N_{B_{R+1}}(N_{B_{R+1}}+1)^{\eta-1}}{t}\notag.
\end{align}
\end{proposition}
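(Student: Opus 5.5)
We compute the Heisenberg derivative of $N_{\chi,ts}^\eta$ and split it into an explicit time-derivative part and a commutator part:
\[
\od{}{t}\tev{N^\eta_{\chi,ts}}{t}=\tev{\partial_t N^\eta_{\chi,ts}+\mi[H,N^\eta_{\chi,ts}]}{t}.
\]
Since all $\Nf{\chi}{ts}$ are functions of the commuting $n_x$, the explicit part is $\partial_t N^\eta_{\chi,ts}=\eta N_{\chi,ts}^{\eta-1}\,\partial_t N_{\chi,ts}$. Here $\partial_t\chi_{ts}=-(\tilde v/s)\chi'_{ts}$, so this part equals $-\eta\tilde v s^{-1} N_{\chi',ts}N_{\chi,ts}^{\eta-1}$, which is nonpositive. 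For the commutator part we write
\[
[H,N^\eta_{\chi,ts}]=\sum_{k=0}^{\eta-1}N_{\chi,ts}^{k}[H,N_{\chi,ts}]N_{\chi,ts}^{\eta-1-k}.
\]
By \lemref{lemma comm exp}, $[H,N_{\chi,ts}]=-J\sum_{x\sim y}(\chi_{ts}(x)-\chi_{ts}(y))a_x^\dagger a_y$. The hopping term $a_x^\dagger a_y$ moves one particle from $y$ to $x$, so $N_{\chi,ts}a_x^\dagger a_y=a_x^\dagger a_y(N_{\chi,ts}+\chi_{ts}(x)-\chi_{ts}(y))$. The plan is to commute all powers of $N_{\chi,ts}$ to a symmetric position. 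This produces a main term $\eta\,\mi[H,N_{\chi,ts}]$ sandwiched by $(N_{\chi,ts}+\theta)^{(\eta-1)/2}$-type factors. It also produces corrections containing two or more factors of $\chi_{ts}(x)-\chi_{ts}(y)$, each $O(s^{-1})$ with $|x-y|=1$.

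We then apply \lemref{lemm sym taylor} to $\chi_{ts}(x)-\chi_{ts}(y)$, with arguments $(R-\tilde v t-|x|)/s$ and $(R-\tilde v t-|y|)/s$, whose difference is at most $1/s$ in absolute value. This splits the main term into three pieces.
\begin{itemize}
\item The leading piece is $s^{-1}(|y|-|x|)u_{ts}(x)u_{ts}(y)a_x^\dagger a_y$. By Cauchy--Schwarz, $|\pm\mi J(\ldots)|\le |J|s^{-1}\sum_{x\sim y}u(x)u(y)\frac{1}{2}(\ldots)$. With the $N$-power factors distributed symmetrically, this is bounded by $\kappa s^{-1}$ times the expectation of $N_{\chi',ts}(N_{\chi,ts}+1)^{\eta-1}$, where $\kappa=2d|J|$ counts neighbours.
\item The $k\ge2$ terms are bounded the same way by $Cs^{-2}N_{\tilde\chi',ts}(N_{\chi,ts}+1)^{\eta-1}$, using property \eqref{prop C} to merge the finitely many $j_k$ into a single $\tilde\chi$.
\item The Taylor remainder is $O(s^{-\beta-1})$. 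It is supported on sites where $\chi_{ts}$ varies, which lie in $B_{R+1}$ after including neighbours. There we bound the powers by $N_{B_{R+1}}$, since $\chi\le1$ gives $N_{\chi,ts}\le N_{B_{R+1}}$, giving the last term.
\end{itemize}

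Combining, we get $\eta(\kappa-\tilde v)s^{-1}N_{\chi'}(N+1)^{\eta-1}$ from the explicit part plus the leading piece. The shift from $N_{\chi,ts}^{\eta-1}$ to $(N_{\chi,ts}+1)^{\eta-1}$ in the negative term costs at most $\eta(\eta-1)\tilde v s^{-1}N_{\chi'}(N_{\chi,ts}+1)^{\eta-2}$, by the mean value theorem. This is exactly the first line of \eqref{eq recurs stru p}. The commutation corrections with at least two difference factors are at least $O(s^{-2})$ and carry lower powers of $N$, so they are absorbed into the $s^{-2}$ and remainder terms.

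\textbf{Main obstacle.} The hardest step is the operator-ordering bookkeeping. The shifts $N\mapsto N\pm(\chi(x)-\chi(y))$ must be kept sandwiched symmetrically so that Cauchy--Schwarz yields positive operators of the form $u(x)^2 n_x\,(N+1)^{\eta-1}$. The shifts lie in $[-1,1]$, which is what justifies the "$+1$". Domain issues are handled by $\rho\in\cD_\eta$ and by the fact that $H$ conserves $N$, which makes all expressions well-defined in the form sense.
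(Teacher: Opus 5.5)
\textbf{Verdict.} Your outline is correct in architecture, but it handles the operator ordering by a more roundabout route than the paper, and one step of that route is only asserted. The shared ingredients are the Heisenberg derivative, the Leibniz rule, \lemref{lemma comm exp}, \lemref{lemm sym taylor}, Cauchy--Schwarz, and the mean-value bound $(h+1)^{\eta-1}-h^{\eta-1}\le(\eta-1)(h+1)^{\eta-2}$ for the explicit term, which matches the paper exactly.

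\textbf{How the paper avoids your main obstacle.} You expand around a symmetric placement of the powers of $N_{\chi,ts}$. The paper instead uses the exact relations $N_{\chi,ts}^\zeta a_x^\dagger=a_x^\dagger(N_{\chi,ts}+\chi_{ts}(x))^\zeta$ and $a_yN_{\chi,ts}^{\eta-\zeta-1}=(N_{\chi,ts}+\chi_{ts}(y))^{\eta-\zeta-1}a_y$. These turn each Leibniz term into $a_x^\dagger A(x)^\zeta A(y)^{\eta-\zeta-1}a_y$ with $A(z)=N_{\chi,ts}+\chi_{ts}(z)$. The middle operator is a nonnegative function of $N_{\chi,ts}$ bounded by $(N_{\chi,ts}+1)^{\eta-1}$. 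This is because the shifts are $\chi_{ts}(x),\chi_{ts}(y)\in[0,1]$, not the differences $\pm(\chi_{ts}(x)-\chi_{ts}(y))\in[-1,1]$ you work with. Consequences:
\begin{itemize}
\item Cauchy--Schwarz applies directly, with no correction terms at all.
\item The sum over $\zeta$ simply produces the factor $\eta$.
\item The symmetrized Taylor expansion is applied only once, to the single factor $\chi_{ts}(x)-\chi_{ts}(y)$ from the commutator.
\item The last step is $a_x^\dagger(N_{\chi,ts}+1)^{\eta-1}a_x=n_x(N_{\chi,ts}+1-\chi_{ts}(x))^{\eta-1}\le n_x(N_{\chi,ts}+1)^{\eta-1}$.
\end{itemize}
So the ``main obstacle'' you identify does not arise in the paper's route.

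\textbf{The step you need to fill in.} Your route can be made to work, but absorbing the correction terms carrying $(\chi_{ts}(x)-\chi_{ts}(y))^2$ into the $s^{-2}$ and remainder terms needs a justification you have not given.
\begin{itemize}
\item The sup-norm bound $|\chi_{ts}(x)-\chi_{ts}(y)|\lesssim s^{-1}$ alone does not produce the localized operator $N_{\tilde\chi',ts}$. You need product weights for Cauchy--Schwarz.
\item One way: write $(\chi_{ts}(x)-\chi_{ts}(y))^2\le\|\chi'\|_\infty s^{-1}|\chi_{ts}(x)-\chi_{ts}(y)|$. Then apply \lemref{lemm sym taylor} again to obtain weights $u_{ts}(x)u_{ts}(y)$ and $u_{k,ts}(x)u_{k,ts}(y)$, plus a remainder on $B_{R+1}$. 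Finally merge $\chi'$ into $\tilde\chi'$ using the property of $\cE$.
\item You must also handle the middle functions of $N_{\chi,ts}$ that arise in your expansion. With negative shifts and fractional powers such as $(N_{\chi,ts}+\theta)^{(\eta-1)/2}$, these can be sign-indefinite or non-smooth near $N_{\chi,ts}=0$. For example, use $|M|$ in Cauchy--Schwarz.
\end{itemize}
This extra work gains nothing for the stated bound, which carries $(N_{\chi,ts}+1)^{\eta-1}$ anyway.
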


Using a bootstrapping strategy as the one employed in the proof of \corref{cor boot}, coupled with an induction on the moment parameter $\eta$, we can prove the following corollary.

\begin{corollary}[Bootstrapping for $\eta>1$]\label{cor boot p}
	Consider $\beta\ge 1$, $v> \kappa $, $\eta>1$. There exists ${C=C(d,v,\beta,J,\eta)>0}$ such that, for every $\rho\in\cD_\eta $ and for all $R>r\ge0$ such that $R-r\ge 0$, it holds 
	\begin{align}\label{eq boot p}
		\tev{N^\eta _{B_r}}{t}\le C\cbr{1+s^{-1}}\trzero{N^\eta _{B_R}}+ \frac{ C}{s^{\beta}}\sup_{0\le u\le t}\tev{N^{\eta}_{B_{R+1}}}{u},
	\end{align}
	for all $0\le t\le s$.
\end{corollary}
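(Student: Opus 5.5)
We prove \corref{cor boot p} by integrating \eqref{eq recurs stru p} in time and running a downward induction in the powers of $N_{\chi,ts}+1$. The first task is to arrange the terms of \eqref{eq recurs stru p} so that they can be absorbed.

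Since $\tilde v>\kappa$, the second line of \eqref{eq recurs stru p} is negative, namely $-\eta(\tilde v-\kappa)s^{-1}\tev{N_{\chi',ts}(N_{\chi,ts}+1)^{\eta-1}}{t}$. We use it to absorb the other terms as follows.

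\begin{itemize}
\item \textbf{Third line.} This is the same kind of object as the negative term, but with $\tilde\chi'$ in place of $\chi'$ and an extra factor $s^{-1}$. Iterating the bootstrapping of \corref{cor boot} in the cut-off function, as in the $\eta=1$ case, and using \eqref{prop C} to merge cut-offs, it reduces to lower order in $s$ plus the remainder.
\item \textbf{First line.} This carries the lower power $(N_{\chi,ts}+1)^{\eta-2}$ with coefficient $\eta(\eta-1)\tilde v s^{-1}$. We treat it by induction on $\eta$.
\end{itemize}

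The plan for the first line is to note that $N_{\chi',ts}(N_{\chi,ts}+1)^{\eta-2}$ is controlled by $\frac{\md}{\md t}$-type terms of order $\eta-1$. Concretely, apply \eqref{eq recurs stru p} at level $\eta-1$ with a cut-off $\hat\chi\in\cE$ chosen so that $\chi'\le C\hat\chi'$ and $\chi\le\hat\chi$; such a $\hat\chi$ exists because the functions in $\cE$ can be compared. Integrating that inequality over $[0,t]$ and moving the negative term to the left gives
\begin{align*}
(\tilde v-\kappa)s^{-1}\int_0^t\tev{N_{\hat\chi',us}(N_{\hat\chi,us}+1)^{\eta-2}}{u}\,\md u\le \trzero{N^{\eta-1}_{\hat\chi,0s}}+\text{(lower order and remainder)}.
\end{align*}
We use the induction hypothesis (the $\eta-1$ version of \eqref{eq boot p}, with base case \corref{cor boot}) together with \lemref{prop geom prop} to bound the right side by $C\trzero{N_{B_R}^{\eta-1}}+Cs^{-\beta}\sup_u\tev{N^{\eta-1}_{B_{R+1}}}{u}$. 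These terms are dominated by their $\eta$-th power analogues, since $N^{\eta-1}\le N^\eta+1$ on integer spectra; for the constant, $N\ge1$ on the relevant sectors, or we absorb it into the factor $C(1+s^{-1})$. The expectations are nonnegative throughout.

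For the last line, we use $N_{B_{R+1}}(N_{B_{R+1}}+1)^{\eta-1}\le 2^{\eta-1}(N^\eta_{B_{R+1}}+N_{B_{R+1}})$. Integrating over $t\le s$ gives a factor $s\cdot s^{-\beta-1}=s^{-\beta}$ multiplying the supremum, which is exactly the remainder form in \eqref{eq boot p}.

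Collecting terms and integrating $\frac{\md}{\md t}\tev{N^\eta_{\chi,ts}}{t}$ from $0$ to $t\le s$, we then apply \lemref{prop geom prop}. It gives $N_{\chi,ts}\ge N_{B_r}$, and since both commute and are nonnegative this yields $N^\eta_{\chi,ts}\ge N_{B_r}^\eta$. At $t=0$ it gives $N^\eta_{\chi,0s}\le N^\eta_{B_R}$. Together these yield \eqref{eq boot p}.

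The main obstacle is the cross term with $(N_{\chi,ts}+1)^{\eta-2}$. Its coefficient $\tilde v s^{-1}$ is not small, so it cannot simply be absorbed into the negative term. It must instead be controlled through the level-$(\eta-1)$ integrated estimate, which requires a careful choice of enlarged cut-offs $\hat\chi$ compatible with both $\chi'$ and $\tilde\chi'$. We will also need to verify that the geometric inclusions of \lemref{prop geom prop} survive with these modified cut-offs, which should hold because every function in $\cE$ has $\supp f'\subset(\epsilon/2,\epsilon)$.
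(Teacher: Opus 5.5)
Your strategy is the paper's: integrate \eqref{eq recurs stru p}, drop the negative second line, remove the $s^{-2}$ line by iterating over cut-offs as for $\eta=1$, and control the $(N_{\chi,ts}+1)^{\eta-2}$ line through the negative term of the level-$(\eta-1)$ recursion. For that last step, $\hat\chi=\chi$ already works, since the negative term at level $\eta-1$ is exactly $N_{\chi',ts}(N_{\chi,ts}+1)^{\eta-2}$; the cut-off choice is not where the difficulty lies.

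The gap is in what you induct on. Integrating the level-$(\eta-1)$ recursion leaves on the right-hand side the time integrals $s^{-1}\int_0^t\tev{N_{\chi',us}(N_{\chi,us}+1)^{\eta-3}}{u}\,\md u$ and $s^{-2}\int_0^t\tev{N_{\tilde\chi',us}(N_{\chi,us}+1)^{\eta-2}}{u}\,\md u$, taken along the moving cut-off. Your induction hypothesis, the level-$(\eta-1)$ version of \eqref{eq boot p}, only bounds $\tev{N^{\eta-1}_{B_r}}{t}$ for a fixed ball with $t\le (R-r)/v$. \lemref{prop geom prop} does not connect the two: it bounds $N_{f,ts}$ from below by $N_{B_r}$, and from above by $N_{B_R}$ only at $t=0$. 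So the step ``use the induction hypothesis to bound the right side'' does not follow as written.

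The paper closes the induction by taking the integrated inequality \eqref{induction on p} itself as the hypothesis. Its left side at level $\eta-1$ is literally the first line at level $\eta$. Its right side contains only the differences $\trzero{N^\zeta_{\chi,0s}}-\trt{N^\zeta_{\chi,ts}}$, integrals carrying an extra $s^{-1}$ (removed by cut-off iteration), and the remainder. The pointwise bound is extracted only at the very end, by dropping the nonnegative integral.

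Alternatively, you can keep the pointwise hypothesis, but then you need a step you have not stated. For $0\le u\le t\le s$, every rescaled cut-off from $\cE$, and its derivative, vanishes outside $B_{r_u}$ with $r_u:=R-\tilde v u-\epsilon s/2$. So on integer spectra the integrands above are at most $C\tev{N^{\eta-1}_{B_{r_u}}}{u}$. Now $R-r_u=\tilde v u+\epsilon s/2$ is at least $v'u$ and at least $\epsilon s/2$, where $v':=\tilde v+\epsilon/2\in(\kappa,v)$. Hence the level-$(\eta-1)$ corollary applied with speed $v'$ (it holds for every speed above $\kappa$) bounds this by $C(1+s^{-1})\trzero{N^{\eta-1}_{B_R}}+Cs^{-\beta}\sup_{0\le w\le u}\tev{N^{\eta-1}_{B_{R+1}}}{w}$. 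Combined with $s^{-1}\int_0^t\md u\le 1$, this controls the first line at level $\eta$ directly, so the detour through the level-$(\eta-1)$ recursion becomes unnecessary.

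Finally, no additive constant arises anywhere, since $n^{\eta-1}\le n^{\eta}$ for integers $n\ge0$ when $\eta\ge2$. This matters: a constant could not be absorbed into $C(1+s^{-1})\trzero{N^\eta_{B_R}}$, which may vanish.
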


To control the third term on the r.h.s. of \eqref{eq boot p} we set up a downwards induction on scales as the one in the proof of \propref{prop remainder}.

\begin{proposition}[Remainder Bound for $\eta>1$] \label{prop rem p}
    Fix $v>\kappa $, $\delta_0\in(0,1)$, $\beta\ge1$, and $\eta>1$.
    There exists $C=C(v, \delta_0, \beta, J,d,\eta)>0$ such that, for all $\rho\in\cD_\eta $ satisfying \eqref{CD rho} for $\eta$ and  some $\lam>0$,  the following holds
	\begin{align}\label{eq contr rem}
		\tev{N^\eta _{B_r}}{t}\le C\trzero{N^\eta _{B_R}}+C s^{-\beta+d\eta}\lam^\eta,
	\end{align}
	for all $R>r\ge0$ with $R-r>\max\Set{1,\delta_0r}$ and $0\le  t\le s$. 
\end{proposition}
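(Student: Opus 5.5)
We prove Proposition~\ref{prop rem p} by a downwards multi-scale induction modeled on the proof of Proposition~\ref{prop remainder}. The starting point is Corollary~\ref{cor boot p}, applied with a larger parameter $\beta'$ (chosen later, e.g.\ $\beta'=\beta+d\eta$). The remainder term $\sup_{0\le u\le t}\tev{N^\eta_{B_{R+1}}}{u}$ cannot be bounded directly; the only a priori input is the trivial estimate $\tev{N^\eta_{B_\rho}}{u}\le \trzero{N^\eta}$. Since that quantity depends on the system size, it must be beaten by the factor $s^{-\beta'}$ only after finitely many iterations. We therefore control the remainder by applying Corollary~\ref{cor boot p} again, on a larger ball.

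\textbf{Scales.} Fix $r,R$ with $R-r>\max\{1,\delta_0 r\}$ and set $\ell:=R-r$. Define radii
\[
r_0:=r,\qquad R_0:=R,\qquad r_{k+1}:=R_k+1,\qquad R_{k+1}:=r_{k+1}+\ell_{k+1}.
\]
The gaps $\ell_{k+1}$ are chosen comparable to $\ell$, say $\ell_{k+1}=\ell$. Then each pair still satisfies $R_k-r_k>\max\{1,\delta_0' r_k\}$ for a slightly smaller $\delta_0'$, provided $k\le K$ with $K$ bounded in terms of $\delta_0,\beta,\eta,d$. The corresponding times are $s_k=\ell_k/v\ge s$, which covers $0\le u\le t\le s$. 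This is the step where the hypothesis $R-r>\delta_0 r$ is used: it keeps $R_K\lesssim_K R$ and $s_k\asymp s$.

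\textbf{Iteration.} Iterating \eqref{eq boot p} $K$ times gives
\[
\tev{N^\eta_{B_r}}{t}\le C\sum_{k=0}^{K-1} s^{-k\beta'}\trzero{N^\eta_{B_{R_k}}}+C s^{-K\beta'}\sup_{u\le s}\tev{N^\eta_{B_{R_K+1}}}{u}.
\]
In the sum, the $k=0$ term gives $C\trzero{N^\eta_{B_R}}$. For $k\ge1$, the density assumption \eqref{CD rho} gives $\trzero{N^\eta_{B_{R_k}}}\le \lambda^\eta R_k^{d\eta}\lesssim \lambda^\eta (R/\ell)^{d\eta}\ell^{d\eta}$. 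Using $R\le r+\ell\le(\delta_0^{-1}+1)\ell$, this is $\lesssim\lambda^\eta s^{d\eta}$, and the factor $s^{-\beta'}$ with $\beta'\ge\beta$ then yields $C\lambda^\eta s^{-\beta+d\eta}$.

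\textbf{Main obstacle: the final remainder.} The last term involves the unknown dynamics on a big ball, and it is not clear how to close it at a fixed finite depth. My first attempt is to avoid stopping at a fixed depth and instead use a dyadic rather than linear scheme. Take $\ell_{k}=2^k\ell$ (so $s_k=2^ks$) and iterate until $B_{R_K}\supset\Lambda$. The final remainder is then at most $\prod_k s_k^{-\beta'}\,\trzero{N^\eta}$, while $\trzero{N^\eta}\le\lambda^\eta\,\mathrm{diam}(\Lambda)^{d\eta}\sim\lambda^\eta R_K^{d\eta}$ by \eqref{CD rho}. Since $\prod_k (2^ks)^{-\beta'}$ decays superexponentially in $K$ against the growth $(2^K\ell)^{d\eta}$, the sum converges uniformly in system size. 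The intermediate terms $s^{-k\beta'}2^{-\beta'k^2/2}\lambda^\eta(2^k\ell)^{d\eta}$ are summable as well. The delicate point is that at step $k$ the time window $[0,s]$ must lie inside $[0,s_k]$, which holds because $s_k$ is increasing. One also needs the constant in Corollary~\ref{cor boot p} to be uniform in the scale; this holds since it depends only on $(d,v,\beta,J,\eta)$. The factors $(1+s_k^{-1})\le2$ accumulate only as a convergent product when weighted by the powers $s_k^{-\beta'}$, giving the claimed bound with $C$ independent of $\Lambda$.
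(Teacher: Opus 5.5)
Your dyadic scheme is correct, but it is organized differently from the paper's proof.

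\textbf{The paper's route.} The paper runs a downward induction on dyadic scales $k$. The inductive statement is the a priori dynamical density bound $2^{-d\eta k}\sup_{0\le u\le t}\tev{N^\eta_{B_{2^k}}}{u}\le C\lambda^\eta$.
\begin{itemize}
\item It holds trivially once $B_{2^k}\supset\Lambda$, by number conservation plus \eqref{CD rho}.
\item It propagates downward with a \emph{fixed} constant $C_1$ as long as the prefactor $C'2^{2d\eta-\beta k}$ of the remainder in \eqref{eq boot p} is at most $1/2$.
\item The finitely many small scales are handled with growing constants.
\end{itemize}
The resulting density bound is then inserted once into \eqref{eq boot p}, and a separate step passes from dyadic to general $R,r$.

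\textbf{Your route.} You unroll the recursion upward along the chain $r_{k+1}=R_k+1$, $R_{k+1}-r_{k+1}=2^{k+1}(R-r)$. The $k$-th term then carries $\prod_{j<k}C s_j^{-\beta'}=C^k s^{-k\beta'}2^{-\beta' k(k-1)/2}$. This multiplies $\trzero{N^\eta_{B_{R_k}}}\le\lambda^\eta R_k^{d\eta}\lesssim\lambda^\eta(2^k s)^{d\eta}$, which uses $r<(R-r)/\delta_0$. The superexponential factor makes the series summable uniformly in $|\Lambda|$. This includes the terminal term: once $\Lambda\subset B_{R_K+1}$, you have $N=N_{B_{R_K+1}}$, so $\trzero{N^\eta}\le\lambda^\eta(R_K+1)^{d\eta}$, and only this upper bound is needed.

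\textbf{Comparison.} Your route avoids the self-consistency bookkeeping of constants. It handles general $R,r$ in one pass. It also makes the role of $R-r>\delta_0 r$ transparent: it is used only to get $R_k\lesssim 2^k(R-r)$. The paper's route instead yields a reusable statement, namely that the particle density stays controlled at all scales along the dynamics.

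\textbf{Points to state explicitly.}
\begin{itemize}
\item Each application of Corollary~\ref{cor boot p} contributes a full factor $C$, not just $(1+s_k^{-1})$, so the constants compound to $C^k$.
\item $s=(R-r)/v$ is only bounded below by $1/v$, so $s^{-k\beta'}$ may grow like $v^{k\beta'}$. Likewise $(1+s_k^{-1})\le 1+v$ rather than $\le 2$.
\item Both growths are exponential in $k$ and are absorbed by $2^{-\beta' k^2/2}$.
\end{itemize}
Finally, $\beta'=\beta$ already suffices, and the fixed-depth linear scheme in your first part can be dropped.
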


\subsection{Proof of \thmref{theo final particle}}\label{sec proof main part}

\begin{proof}
	Notice that by translating the ASTLO and following the same steps in the proofs of \propref{prop recurs stru} --\propref{prop rem p}, we obtain, under the assumptions of \thmref{theo final particle},
		\begin{align}\label{final part pos t with still R-r}
			\tev{N^\eta _{B_r(x)}}{t}\le C'\cbr{1+\cbr{R-r}^{-1}}\trzero{N^\eta _{B_R(x)}}+C\lam^\eta(R-r)^{-\beta+d\eta},
		\end{align}
    for all $x\in\Lambda$ and  $t\ge 0$ such that $vt\le R-r$. 
    To obtain \thmref{theo final particle} for $t\le 0$, we write
    \begin{align}\label{final part neg t with still R-r}
       \tev{N^\eta _{B_r(x)}}{t}=\Tr\sbr{\me^{\mi tH}N^\eta _{B_r(x)}\me^{-\mi tH} \rho } = \Tr\sbr{\me^{\mi (-t)(-H)}N^\eta _{B_r(x)}\me^{-\mi (-t)(-H)} \rho }.
    \end{align}
    Since there were no assumptions on the sign of $H$, we can apply \eqref{final part pos t with still R-r} to the r.h.s. of \eqref{final part neg t with still R-r}. Thus, \eqref{final part pos t with still R-r} holds also for negative times.
	Since $R-r\ge 1$, \eqref{eq final particles} holds.
	\end{proof}

\subsection{Proof of the first moment bound}\label{sec rpoof first mom}
In this section we prove \propref{prop recurs stru}, \corref{cor boot}, and \propref{prop remainder}. 
The proofs follow the reasoning of \cite{FLS2022,LRZS2023}.
In what follows the constant are allowed to change from line to line, but they remain independent of system size throughout.

\subsubsection{Proof of \propref{prop recurs stru}}

\begin{proof}
	We start by computing the Heisenberg derivative of the ASTLO,
	\begin{align}\label{Heis der}
		D\Nf[]{\chi}{ts}:=\pt\Nf[]{\chi}{ts}+\mi\sbr{H,\Nf[]{\chi}{ts}}.
	\end{align}
The first term in \eqref{Heis der} can be easily computed,
\begin{align}\label{first term derivative}
	\pt\Nf[]{\chi}{ts}=-\frac{\tilde v}{s}\,\Nf[]{\chi'}{ts}.
\end{align}
Thanks to the structure of the ASTLO \eqref{ASTLO}, \lemref{lemma comm exp} yields
\begin{align}\label{deriv after comm exp}
	\mi\sbr{H,\Nf[]{\chi}{ts}}=\mi J\sum_{x\sim y}\cbr{\cts x-\cts y}a_x^\dagger a_y.
\end{align}
Notice that $\cts x-\cts y\ne 0$ implies either $x$ or $y$ lies in $\supp \chi_{ts}\subset B_R$. This fact, together with \lemref{lemm sym taylor}, yields the following symmetrized expansion
\begin{align}\label{symm exp}
	|\cts x -\cts y|
	\le& |\cts x -\cts y| \cbr{\1_{|x|\leq R }+\1_{|y|\leq R}}\notag\\
	\le& 	\frac{|x-y|}{s} u_{ts}(x)u_{ts}(y)\notag\\
	&+\sum_{k=2}^\beta C_{\chi,k}\cbr{\frac{|x-y|}{s}}^k 
	u_{k,ts}(x)u_{k,ts}(y)\notag\\
	&+C_{\chi,\beta+1}\cbr{\frac{|x-y|}{s}}^{\beta+1}  \cbr{\1_{|x|\leq R }+\1_{|y|\leq R}},
\end{align}
where the sum should be dropped for $\beta=1$, $\mathbbm 1_X$ is the characteristic function of $X\subset\Lambda$, and, for $2\le k\le \beta$,
\begin{align}
	u_{ts}=(\sqrt{\chi'})_{ts},\qquad u_{k,ts}\equiv (u_k)_{ts}=(\sqrt{j_k'})_{ts},	\notag
\end{align}
with $j_k\in\cE$.
Fix a state  $\rho\in\cD_1$. 
Lines \eqref{deriv after comm exp} and \eqref{symm exp} together yield 
\begin{align}
	\trzero{\mi\sbr{H,\Nf[]{\chi}{ts}}}\le& 
	\abs{J}\sum_{x\sim y}\abs{\cts x-\cts y}\abs{\trzero{a_x^\dagger a_y}}\notag\\
	\le&\,\mathrm{I}+\mathrm{II}+\mathrm{Rem},\notag
\end{align}
where
\begin{align}
   & \mathrm{I}:=	s^{-1}\sum_{x\sim y} u_{ts}(x)u_{ts}(y)\abs{\trzero{a_x^\dagger a_y}}\label{I def}\\
   & \mathrm{II}:=\sum_{k=2}^\beta \frac{C'_k}{s^{k}} \sum_{x\sim y}u_{k,ts}(x) u_{k,ts}(y) \abs{\trzero{a_x^\dagger a_y}}\label{II def}\\
   & \mathrm{Rem}:=\frac{C'_{\beta+1}}{s^{\beta+1}}  \sum_{x\sim y}\cbr{\1_{|x|\leq R }+\1_{|y|\leq R}}\abs{\trzero{a_x^\dagger a_y}}\label{Rem def}.
\end{align}

To bound the term in line \eqref{I def}, we apply the Cauchy–Schwarz inequality and make use of the finite-range nature of the interactions.
\begin{align}\label{I}
	(\mathrm{I})&\le s^{-1} \cbr{\sum_{x\sim y}\chi'_{ts}(x)\trzero{n_x}}^{1/2}\cbr{\sum_{x\sim y}\chi'_{ts}(y)\trzero{n_y}}^{1/2}\notag\\
	&= \frac{2d}{s}\trzero{\Nf[]{\chi'}{ts}}.
\end{align}
Similarly, we can control term in line \eqref{II def} as
\begin{align}
	(\mathrm{II})&\le\sum_{k=2}^\beta C'_{k}\,s^{-k} \cbr{\sum_{x\sim y}j'_{k,ts}(x)\trzero{n_x}}^{1/2}\cbr{\sum_{x\sim y} j'_{k,ts}(y)\trzero{n_y}}^{1/2}\notag\\
	&= 2d\sum_{k=2}^\beta C'_{k}\,s^{-k}\trzero{\Nf[]{j_k'}{ts}}.\notag
\end{align}
Assuming $R-r\ge 1$ and making use of the property of $\cE$, \eqref{prop C}, it follows that there exist $C''_{\beta}$ and $\tilde \chi\in\cE$ such that
\begin{align}\label{II}
	(\mathrm{II})&\le C''_{\be}s^{-2}\trzero{\Nf[]{\tilde \chi'}{ts}}.
\end{align}

Using the same ingredients as above, we estimate line \eqref{Rem def} as
\begin{align}\label{Rem}
	(\mathrm{Rem})&\le \frac{C_{\chi,\beta+1}}{s^{\beta+1}}  \cbr{\sum_{x\sim y}\cbr{\1_{|x|\leq R }+\1_{|y|\leq R}}\trzero{n_x}}^{1/2}\cbr{\sum_{x\sim y}\cbr{\1_{|x|\leq R }+\1_{|y|\leq R}}\trzero{n_y}}^{1/2}\notag\\
	&= \frac{C_{\chi,\beta+1}}{s^{\beta+1}}  \cbr{\sum_{\substack{x\in B_{R}\\y\;:\;x\sim y}}\trzero{n_x}+\sum_{\substack{y\in B_{R}\\x\;:\;x\sim y}}\trzero{n_x}}\notag\\
	&\le4d\,\frac{C_{\chi,\beta+1}}{s^{\beta+1}}  \trzero{N_{B_{R+1}}}.
\end{align}
Since the state $\rho$ was arbitrary, combining \eqref{I}, \eqref{II}, and \eqref{Rem} we obtain in the sense of quadratic forms on states in $\cD_1$
\begin{align}\label{bound on commutator}
	\mi\sbr{H,\Nf[]{\chi}{ts}}\le \frac{\kappa}{s}\Nf[]{\chi'}{ts}+ \frac{C }{s^{2}}\Nf[]{\tilde \chi'}{ts}+\frac{C}{s^{\beta+1}}  N_{B_{R+1}}
\end{align}
for some constant $C$ depending on $d,\chi, \beta$ and $J$.
Applying \eqref{first term derivative} and \eqref{bound on commutator} to \eqref{Heis der}, yields
\begin{align}
	D\Nf[]{\chi}{ts}\le  \frac{\kappa-\tilde v}{s}\Nf[]{\chi'}{ts}+ \frac{C }{s^{2}}\Nf[]{\tilde \chi'}{ts}+\frac{C}{s^{\beta+1}}  N_{B_{R+1}}.\notag
\end{align}
The inequality above, together with 
\begin{align}\label{Heis der and deriv}
	\od{}{t}\tev{\Nf[]{\chi}{ts}}{t}=\tev{D\Nf[]{\chi}{ts}}{t}
\end{align}
leads to the desired inequality.
\end{proof}

\subsubsection{Proof of \corref{cor boot}}

\begin{proof}
Consider \eqref{eq recurs stru}, integrating both sides with respect to time,	and applying the fundamental theorem of calculus we obtain
\begin{align}\label{int eq recurs stru}
	&\tev{\Nf[]{\chi}{ts}}{t}-\trzero{\Nf[]{\chi}{0s}}\notag\\
	&\qquad\le  \frac{\kappa-\tilde v}{s}\int_{0}^{t}\tev{\Nf[]{\chi'}{u s}}{u} \md u+\frac{C}{s^{2}}\int_{0}^{t} \tev{\Nf[]{\tilde \chi'}{u s}}{u} \md u+\frac{C}{s^{\beta}}\sup_{0\le u\le t}\tev{N_{B_{R+1}}}{u},
\end{align}
where the second integrated term in \eqref{int eq recurs stru} should be dropped for $\beta=1$. Above we also use that $s\ge t$.
Since the first summand on the r.h.s. of \eqref{eq recurs stru} is negative, it can be dropped. Then, after rearranging the terms it follows
\begin{align}\label{int eq drop}
	\tev{\Nf[]{\chi}{ts}}{t}\le  \trzero{\Nf[]{\chi}{0s}}+\frac{C}{s^{2}}\int_{0}^{t}\tev{\Nf[]{\tilde \chi'}{u s}}{u} \md u+\frac{C}{s^{\beta}}\sup_{0\le u\le t}\tev{N_{B_{R+1}}}{u}.
\end{align}
To conclude the proof we need to bound the integrated term in \eqref{int eq drop}. To this end, consider again \eqref{int eq recurs stru}, drop the first addend and rearrange the terms appropriately to obtain
\begin{align}\label{recurs ineq}
	\frac{1}{s}&\int_{0}^{t}\tev{\Nf[]{ \chi'}{u s}}{u} \md u\notag\\
    &\le C' \trzero{\Nf[]{\chi}{0s}} +\frac{C''}{s^{2}}\int_{0}^{t}\tev{\Nf[]{\tilde \chi'}{u s}}{u} \md u+\frac{C''}{s^\beta}\sup_{0\le u\le t}\tev{N_{B_{R+1}}}{u},
\end{align}
where the second integrated term in \eqref{int eq recurs stru} should be dropped for $\beta=1$.
The recursive inequality \eqref{recurs ineq} is a key feature of the ASTLO and can be applied repeatedly to control the second summand on the r.h.s. of \eqref{int eq drop}. 
This repeated application can be carried over since both functions $\chi$ and $\tilde \chi$ belong to the same function class $\cE$. 
Notice that at each iteration an additional factor $s$ is generated at denominator. 
Concretely, let us apply \eqref{recurs ineq} once, with $\beta-1$,  to control the integrated term in \eqref{int eq drop}. There exists a function $f\in\cE$ such that 
\begin{align}
	&\tev{\Nf[]{\chi}{ts}}{t}\notag\\
    &\quad\le  \trzero{\Nf[]{\chi}{0s}}+\frac{C}{s^{2}}\int_{0}^{t}\tev{\Nf[]{\tilde \chi'}{u s}}{u} \md u+\frac{C}{s^\beta}\sup_{0\le u\le t}\tev{N_{B_{R+1}}}{u}\notag\\
	&\quad\le  \trzero{\Nf[]{\chi}{0s}}+\frac{C_1}{s}\trzero{\Nf[]{\tilde \chi}{0s}}+\frac{C_1}{s^{3}}\int_{0}^{t}\tev{\Nf[]{f'}{u s}}{u} \md u+\frac{C_1}{s^\beta}\sup_{0\le u\le t}\tev{N_{B_{R+1}}}{u}.\notag
\end{align}
Applying repeatedly \eqref{recurs ineq} for $\beta-2,\, \beta-3, \dots, \,1$ to bound the integrated terms produced with every step, recalling \eqref{prop C}, and since $s\ge1$, we obtain 
\begin{align}
	\tev{\Nf[]{\chi}{ts}}{t}\le \cbr{1+\frac{C}{s}} \trzero{\Nf[]{\bar \chi}{0s}}+\frac{C}{s^\beta}\sup_{0\le u\le t}\tev{N_{B_{R+1}}}{u},\notag
\end{align}
for some $\bar \chi\in\cE$. 
Recalling the geometric properties of the ASTLO \eqref{f and B at t=0} yields the desired inequality \eqref{eq boot}.

\end{proof}

\subsubsection{Proof of \propref{prop remainder}}

\begin{proof}

Step 1.  Consider a large constant $L>0$ such that $\Lambda\subset B_L$. We start by showing \eqref{eq remainder control} for the special case of dyadic scales, namely
\begin{align}
	R=2^{k+1},\qquad r=2^k,\qquad k\ge0.\notag
\end{align}
Notice that $R-r= 2^k$.
Then, from inequality \eqref{eq boot} it follows
\begin{align}\label{from boot before multi}
	\tev{N_{B_{2^k}}}{t}&\le  \cbr{1+\frac{C}{2^k}} \trzero{N_{B_{2^{k+1}}}}+C\,2^{-\beta k}\sup_{0\le u\le t}\tev{N_{B_{2^{k+1}+1}}}{u}\notag\\
	&\le \cbr{1+\frac{C}{2^k}} \trzero{N_{B_{2^{k+1}}}}+C\,2^{-\beta k}\sup_{0\le u\le t}\tev{N_{B_{2^{k+2}}}}{u}.
\end{align}
To obtain \eqref{eq remainder control} we only need to show
\begin{align}\label{contr rem 1}
	2^{-dk}\sup_{0\le u\le t}\tev{N_{B_{2^{k}}}}{u}\le C_d \lambda,
\end{align}
for some constant $C_d$ to be determined later.
In fact, applying \eqref{contr rem 1} to \eqref{from boot before multi} yields 
\begin{align}
	\tev{N_{B_{2^k}}}{t}&\le  \cbr{1+\frac{C}{2^k}} \trzero{N_{B_{2^{k+1}}}}+C'2^{(d-\beta)k}\lam,\notag
\end{align}
where $C'=CC_d 2^{2d}$.
 We prove \eqref{contr rem 1} this via downward induction on $k$.

Step 1.1. We start by showing there exists $K>0$ large depending on $L$, such that \eqref{contr rem 1} holds for all $k\ge K$ for some constant $C_0$ independent of system size, $K$, and $k$. 
 Since the  Hamiltonian $H$ is number preserving it holds ${[N,H]=0}$. This, together with the assumption of controlled density \eqref{CD rho}, yields
\begin{align}
	2^{-d k}\sup_{0\le u\le t}\tev{N_{B_{2^{k}}}}{u}&\le 2^{-d k}\sup_{0\le u\le t}\tru{N}\notag\\
	&= 2^{-d k}\trzero{N}\notag\\
	&\le \lambda V_d\,2^{-d k} \,L^d\notag,
\end{align}
with $V_d$ the volume od the $d$ dimensional unit ball.
Then, for every $k\ge K:= \log_2L$,
\begin{align}
	2^{-d k}\sup_{0\le u\le t}\tev{N_{B_{2^{k}}}}{u}&\le C_0 \lam.\notag
\end{align}

Step 1.2. Now we show there exists $0\le\tilde K\le K$, independent of system size, such that \eqref{contr rem 1} holds for all $\tilde K\le k\le K$ for some constant $C_1$ independent on system size, $K$, and $k$. 
We prove the claim by induction on $k$. By point 1.1., the claim is proved for $K$ and $K+1$, and it constitutes the base of our induction. 
Let us assume \eqref{contr rem 1} holds for all $j>k$ with some constant $C_1$ to be determined, and let us show it holds for $k$. By \eqref{from boot before multi} it holds
\begin{align}\label{before ind hyp 2}
	2^{-d k}\sup_{0\le u\le t}\tev{N_{B_{2^{k}}}}{u}&\le2^{-dk} \cbr{\cbr{1+\frac{C}{2^{k}}}\trzero{N_{B_{2^{k+1}}}}+C \,2^{-\beta k}\sup_{0\le u\le t}\tev{N_{B_{2^{k+2}}}}{u}},
\end{align}
with $C$ depending on $\beta$, but not on $k$ and $K$.
We control the first summand in \eqref{before ind hyp 2} thanks the assumption of controlled density \eqref{CD rho},
\begin{align}\label{first term after ind 2}
	2^{-dk}\cbr{1+\frac{C}{2^{k}}}\trzero{N_{B_{2^{k+1}}}}\le (1+C)V_d2^d\,\lam\le \frac{C_1}{2}\lam,
\end{align}
where we chose $C_1:=\max\Set{C_0,2(1+C)V_d2^d}$.
To bound the second term in \eqref{before ind hyp 2} we apply \eqref{contr rem 1} for $k+2$,
\begin{align}
	C \,2^{-(\beta +d)k}\sup_{0\le u\le t}\tev{N_{B_{2^{k+2}}}}{u}\le CC_1 2^{2d}  2^{-\beta k}\lam.\notag
\end{align}
Choosing $k\ge \beta^{-1}\log_2(C2^{2d+1})$, it holds
\begin{align}\label{exp val after ind 2}
	C \,2^{-(\beta +d)k}\sup_{0\le u\le t}\tev{N_{B_{2^{k+2}}}}{u}\le \frac{C_1}{2}\lam.
\end{align}
Applying \eqref{first term after ind 2} and \eqref{exp val after ind 2} to \eqref{before ind hyp 2}, we obtain
\begin{align}
	2^{-d k}\sup_{0\le u\le t}\tev{N_{B_{2^{k}}}}{u}&\le  C_1\lam.\notag
\end{align}
Thus, \eqref{contr rem 1} holds for all $k\ge \tilde K:= \beta^{-1}\log_2(C2^{2d+1})$ for $C_1$ defined in the above.

Step 1.3. To conclude the proof for dyadic scales, we show \eqref{contr rem 1} holds for all $0\le k< \tilde K$ for some $C_2$ independent of systems size. 
Consider $k=\tilde K-1$, by the same reasoning as in step 1.2., and knowing \eqref{contr rem 1} holds for $j\ge \tilde K$ with $C_1$, we can write
\begin{align}
	2^{-d (\tilde K-1)}\sup_{0\le u\le t}\tru{N_{B_{2^{\tilde K-1}}}}&\le  \cbr{ (1+C)V_d2^d + CC_1 2^{2d} 2^{-\beta(\tilde K-1)}}\lam.\notag
\end{align}
Thus, \eqref{contr rem 1} holds for $\tilde K$ for constant $C_{2,1}:=\max\Set{C_1, (1+C)V_d2^d + CC_1 2^{2d} 2^{-\beta }}$.
We repeat this reasoning for all $k=\tilde K-2,\dots,0$, and, since $\tilde K$ is independent of system size, this shows \eqref{contr rem 1} holds for all $k\ge 0$ with constant 
\begin{align}
	{C_2:=\max\Set{ C_{2,1},\dots , C_{2,\tilde K}}}.	\notag
\end{align}

Step 2. Now we adapt the proof to $R,r\ge 0$ with $R-r\ge1$. We build an induction on $R-r>2^k$, $k=1,2,\dots$. 

Step 2.1. We start by setting the base case as in step 1.1.
As before, we only need to show 
\begin{align}\label{need rem contr}
	s^{-d}\sup_{0\le u\le t}\tev{N_{B_{R+1}}}{u}&\le C_d\lambda,
\end{align}
for some $C_d>0$.
As before, we can obtain the following generous bound
\begin{align}
	s^{-d}\sup_{0\le u\le t}\tev{N_{B_{R+1}}}{u}
	&\le s^{-d}\sup_{0\le u\le t}\tru{N}\notag\\
	&=s^{-d}\trzero{N}\notag\\
	&\le\lam V_d\,\cbr{\frac{L}{R-r}}^d.\notag
\end{align}
Thus, for $K=\log_2 L$ and all $k\ge K$, we have the desired estimate, 
\begin{align}
	s^{-d}\sup_{0\le u\le t}\tev{N_{B_{R+1}}}{u}&\le C_0\lambda,\notag
\end{align} 
for ${R}-r\ge 2^k$ and $C_0>0$.

Step 2.2. As a next step we show there exists $0\le \tilde K\le K$, independent of $L$, such that \eqref{need rem contr} holds for some constant $C_1$ independent of system size, $k$, and $\tilde K$. 
Let us assume \eqref{need rem contr} holds for all  $R', r'$ such that ${R'-r'>2^{k+1}}$ for $\tilde K<k\le K$ and $C_1$ where $\tilde K$ and $C_1$ are to be determined later.
We apply \eqref{eq boot}  for $r'=R+1$, $R'=3R-2r+1$ to control the time average appearing in \eqref{need rem contr}.
\begin{align}\label{before ind hyp}
	(R-r)^{-d}&\sup_{0\le u\le t}\tev{N_{B_{R+1}}}{u}\notag\\
	&\le (R-r)^{-d}\cbr{1+\frac{C}{2(R-r)}}\trzero{N_{B_{3R-2r+1}}}+\frac{C}{ 2^\beta(R-r)^{\beta+d}}\sup_{0\le u\le t}\tru{N_{B_{3R-2r+2}}},
\end{align}
for $C$ independent of $k,K$.
We control the first term in \eqref{before ind hyp} using the assumption of controlled density \eqref{CD rho}
\begin{align}
	(R-r)^{-d}\cbr{1+\frac{C}{2(R-r)}}\trzero{N_{B_{3R-2r+1}}}&\le \cbr{1+C}V_d\cbr{\frac{3R-2r+1}{R-r}}^d\lam\notag\\
	&\le \cbr{1+C}V_d\cbr{\frac{4R}{R-r}}^d\lam.\notag
\end{align}
Since we assumed $R-r\ge \delta_0 r$ it holds $R-r\ge \mu_0 R$ with $\mu_0:=1-\frac{1}{1+\delta_0}>0$.  Thus, by the inequality above it follows
\begin{align}\label{first after ind r}
	(R-r)^{-d}\cbr{1+\frac{C}{2(R-r)}}\trzero{N_{B_{3R-2r+1}}}&\le  \cbr{1+C}V_d\cbr{4\mu_0^{-1}}^d\lam\notag\\
	&\le \frac{C_1}{2}\lam,
\end{align}
with
\begin{align}
	C_1:=\max\Set{C_0, 2\cbr{1+C}V_d\cbr{4\mu_0^{-1}}^d }.\notag
\end{align}
To control the second summand in \eqref{before ind hyp} we apply the induction hypothesis \eqref{need rem contr} to obtain 
\begin{align}\label{exp after ind r}
	\frac{C}{ 2^\beta(R-r)^{\beta+d}}\sup_{0\le u\le t}\tru{N_{B_{3R-2r+2}}}&\le CC_1 2^{2d-\beta}(R-r)^{-\beta}\lam\notag\\
	&\le \frac{C_1}{2}\lam,
\end{align}
where we considered $R-r>2^k$ for $k\ge\tilde K:=\beta^{-1}\log_2( C2^{2d-\beta+1})$.
Inequality \eqref{before ind hyp}, together with \eqref{first after ind r} and \eqref{exp after ind r}, yields
\begin{align}
	(R-r)^{-d}&\sup_{0\le u\le t}\tev{N_{B_{R+1}}}{u}\le C_1 \lam\notag
\end{align}
for all $\tilde K \le k\le K$.

Step 2.3. Now we show \eqref{need rem contr} holds for all $R,r$ such that $R-r\ge 2^k$ for all $0\le k < \tilde K$. 
Let us first consider $k=\tilde K-1$. By the same argument as above we obtain
\begin{align}
	(R-r)^{-d}&\sup_{0\le u\le t}\trt{N_{B_{R+1}}}\notag\\
	&\le \cbr{\cbr{1+C}V_d\cbr{4\mu_0^{-1}}^d+\frac{CC_12^{2d-\beta}}{(R-r)^{\beta}}}\lam\notag\\
	&\le \cbr{\cbr{1+C}V_d\cbr{4\mu_0^{-1}}^d+CC_12^{2d-\beta}}\lam.\notag
\end{align}
This yields \eqref{need rem contr} with constant $C_{2,1}:=\max\Set{C_1,\cbr{1+C}V_d\cbr{4\mu_0^{-1}}^d+CC_12^{2d-\beta} }$.
Repeating this reasoning for $0\le k\le \tilde K-1$ we obtain \eqref{need rem contr} with constant $C_{2}:=\max\Set{C_{2,1},\dots,C_{2,\tilde K}}$.
\end{proof}

\subsection{Proof of the higher moment bound}\label{sec proof high PPB}

In this section we prove \propref{prop recurs stru p}, \corref{cor boot p}, and \propref{prop rem p} by induction on the moment parameter $\eta $, where the results in \secref{sec mom 1} constitute the basis of the induction.
The proofs follow the proof strategy of \cite{LRZ2023}. 
Recall that in what follows the constant are allowed to change from line to line still remaining independent of system size.

\subsubsection{Proof of \propref{prop recurs stru p}}

\begin{proof}
	We start computing the Heisenberg derivative of $\Nf[\eta]{\chi}{ts}$ with respect to time. 
	\begin{align}\label{after der p}
		D\Nf[\eta]{\chi}{ts}
		= &-\eta \frac{\tilde v}{s}N_{\chi,ts}^{\eta-1}N_{\chi',ts}
		+[\mi H,\Nf[\eta]{\chi}{ts}] .
	\end{align}
	We need to bound the second summand in \eqref{after der p} using a similar procedure we used in the proof of \propref{prop recurs stru}. By Leibniz rule it holds
	\begin{align}
		[\mi H,\Nf[\eta]{\chi}{ts}] &=\sum_{\zeta=0}^{\eta-1} N_{\chi,ts}^\zeta[\mi H,N_{\chi,ts}]N_{\chi,ts}^{\eta-\zeta-1}\notag\\
		&=J\sum_{x\sim y}\sum_{\zeta=0}^{\eta-1} N_{\chi,ts}^\zeta[\mi a_x^\dagger a_y,N_{\chi,ts}]N_{\chi,ts}^{\eta-\zeta-1}.\notag
	\end{align}
	The following commutation relations hold
	\begin{align}
		\label{commutator}
		&[N_{\chi,ts},a_x^\dagger]=\sum_z \chi_{ts}(z) [n_z,a_x^\dagger]
		=\sum_z \chi_{ts}(z) a_z^\dagger \delta_{z,x}
		=\chi_{ts}(x) a_x^\dagger,\\
		&[a_y, N_{\chi,ts},]=\sum_z \chi_{ts}(z) [a_y,n_z]
		=\sum_z \chi_{ts}(z) \delta_{y,z}a_z
		=\chi_{ts}(y) a_y.\label{commutator'}
	\end{align}
Then we can write
\begin{align}
	N_{\chi,ts}^\zeta[\mi a_x^\dagger a_y,N_{\chi,ts}]N_{\chi,ts}^{\eta-\zeta-1}=\mi \cbr{\chi_{ts}(x)-\chi_{ts}(y)}N_{\chi,ts}^\zeta a_x^\dagger a_y N_{\chi,ts}^{\eta-\zeta-1}.\notag
\end{align}
Thus, for a given  $\rho\in\cD_\eta $,
\begin{align}\label{after comm}
	\abs{\trzero{[\mi H,\Nf[\eta]{\chi}{ts}]}}\le \abs{J}\sum_{x\sim y}\abs{\chi_{ts}(x)-\chi_{ts}(y)}\sum_{\zeta=0}^{\eta-1}\abs{\trzero{ N_{\chi,ts}^\zeta a_x^\dagger a_yN_{\chi,ts}^{\eta-\zeta-1}}}.
\end{align}
As in the proof of \propref{prop recurs stru} we aim to apply Cauchy-Schwarz to the r.h.s. of \eqref{after comm} in order to recover the ASTLOs. In order to obtain a symmetric expression we first need to move the existing ASTLO in between the creation and annihilation operator.
To this end, we make use of the following relations,
\begin{align}
	N_{\chi,ts}^\zeta a_x^\dagger &=a_x^\dagger (N_{\chi,ts}+\chi_{ts}(x))^\zeta\label{NaRel}\\
	a_y N_{\chi,ts}^{\eta-\zeta-1}&=(N_{\chi,ts}+\chi_{ts}(y))^{\eta-\zeta-1} a_y.\label{NaRel'}
\end{align}
Notice \eqref{NaRel}--\eqref{NaRel'} can be proven by induction with \eqref{commutator}--\eqref{commutator'} as a base case.
We define 
\begin{align}
	A(x):=N_{\chi,ts}+\chi_{ts}(x),\notag
\end{align}
and \eqref{after comm}, \eqref{NaRel}, and \eqref{NaRel'} yield
\begin{align}
	\abs{\trzero{[\mi H,\Nf[\eta]{\chi}{ts}]}}\le \abs{J}\sum_{\zeta=0}^{\eta-1}\sum_{x\sim y}\abs{\chi_{ts}(x)-\chi_{ts}(y)}\abs{\trzero{a_x^\dagger A(x)^\zeta A(y)^{\eta-\zeta-1} a_y}}.\notag
\end{align}
To control the difference in the above inequality we make use of \lemref{lemm sym taylor} as we did in the previous section
\begin{align}
	\abs{\trzero{[\mi H,\Nf[\eta]{\chi}{ts}]}}&\le \abs{J}\sum_{\zeta=0}^{\eta-1}\Big(s^{-1}\sum_{x\sim y}u_{ts}(x)u_{ts}(y)\abs{\trzero{a_x^\dagger A(x)^\zeta A(y)^{\eta-\zeta-1} a_y}}\label{after taylo p 1}\\
	&\quad + \sum_{k=2}^{\beta}C_{\chi,k}s^{-k}\sum_{x\sim y}u_{k,ts}(x)u_{k,ts}(y)\abs{\trzero{a_x^\dagger A(x)^\zeta A(y)^{\eta-\zeta-1} a_y}}\label{after taylo p 2}\\
	&\quad+ C_{\chi,\beta}s^{-\beta-1}\sum_{x\sim y}\cbr{\1_{|x|\leq R }+\1_{|y|\leq R}}\abs{\trzero{a_x^\dagger A(x)^\zeta A(y)^{\eta-\zeta-1} a_y}}\Big).\label{after taylo p 3}
\end{align}
Let us focus on \eqref{after taylo p 1}.
Since $A(z)$ are positive commuting operators and that $\chi\le 1$ it holds 
\begin{align}\label{Ax Ay < N}
    A(x)^\zeta A(y)^{\eta-\zeta-1}\le (N_{\chi,ts}+1)^{\eta-1}
\end{align}
We apply Cauchy-Schwarz and \eqref{Ax Ay < N} to control the the trace in \eqref{after taylo p 1} 
\begin{align}\label{term 1}
	&\sum_{\zeta=0}^{\eta-1}\sum_{x\sim y}u_{ts}(x)u_{ts}(y)\abs{\trzero{a_x^\dagger A(x)^\zeta A(y)^{\eta-\zeta-1} a_y}}\notag\\
	& \le \sum_{\zeta=0}^{\eta-1}\cbr{\sum_{x\sim y}\chi'_{ts}(x)\trzero{a_x^\dagger A(x)^\zeta A(y)^{\eta-\zeta-1}a_x}}^{1/2}\cbr{\sum_{x\sim y}\chi'_{ts}(y)\trzero{a_y^\dagger A(x)^\zeta A(y)^{\eta-\zeta-1}a_y}}^{1/2}\notag\\
	&\le2d\eta\,\sum_{x\in\Lam}\chi'_{ts}(x)\trzero{a_x^\dagger (N_{\chi,ts}+1)^{\eta-1} a_x}.
\end{align}
To recover the ASTLO we apply once again \eqref{NaRel'} to obtain
\begin{align}\label{get ASTTLO}
	\abs{\trzero{a_x^\dagger (N_{\chi,ts}+1)^{\eta-1} a_x}}&=\abs{\trzero{n_x (N_{\chi,ts}+1-\chi_{ts}(x))^{\eta-1} }}\notag\\
	&\le\trzero{n_x (N_{\chi,ts}+1)^{\eta-1} }.
\end{align}
Combining \eqref{term 1} and \eqref{get ASTTLO} yields
\begin{align}\label{term 1 1}
	\sum_{\zeta=0}^{\eta-1}\sum_{x\sim y}u_{ts}(x)u_{ts}(y)\abs{\trzero{a_x^\dagger (N_{\chi,ts}+1)^{\eta-1} a_y}}&\le 2d\eta\,\sum_{x\in\Lam}\chi'_{ts}(x)\trzero{n_x (N_{\chi,ts}+1)^{\eta-1} }\notag\\
&=2d\eta\,\trzero{N_{\chi',ts} (N_{\chi,ts}+1)^{\eta-1} }.
\end{align}
Analogously we can bound line \eqref{after taylo p 2} as 
\begin{align}
	\sum_{\zeta=0}^{\eta-1}\,\sum_{k=2}^{\beta}C_{\chi,k}s^{-k}&\sum_{x\sim y}u_{k,ts}(x)u_{k,ts}(y)\abs{\trzero{a_x^\dagger  A(x)^\zeta A(y)^{\eta-\zeta-1} a_y}}\notag\\
	&\le 2d \eta\sum_{k=2}^{\beta}C_{\chi,k}s^{-k}\sum_{x\sim y}\trzero{N_{j_k',ts} (N_{\chi,ts}+1)^{\eta-1} }\notag.
\end{align}
Additionally, thanks to \eqref{prop C} and $R-r>1$, there exists $\tilde \chi\in\cE$ such that 
\begin{align}
	\sum_{\zeta=0}^{\eta-1}\,\sum_{k=2}^{\beta}C_{\chi,k}s^{-k}&\sum_{x\sim y}u_{k,ts}(x)u_{k,ts}(y)\abs{\trzero{a_x^\dagger  A(x)^\zeta A(y)^{\eta-\zeta-1} a_y}}\notag\\
	&\le 2d \eta\,C_{\chi,\beta} \,s^{-2}\trzero{N_{\tilde \chi',ts} (N_{\chi,ts}+1)^{\eta-1} }\notag
\end{align}
Applying Cauchy-Schwarz and \eqref{get ASTTLO} we can   control \eqref{after taylo p 3} as 
\begin{align}\label{rem p}
	\sum_{\zeta=0}^{\eta-1}\,\sum_{x\sim y}\cbr{\1_{|x|\leq R }+\1_{|y|\leq R}}\abs{\trzero{a_x^\dagger A(x)^\zeta A(y)^{\eta-\zeta-1} a_y}}&\le 2d \eta\sum_{x\in B_{R+1}}\trzero{n_x (N_{\chi,ts}+1)^{\eta-1} }\notag\\
	&= 2d\eta\, \,\trzero{N_{ B_{R+1}} (N_{\chi,ts}+1)^{\eta-1} }\notag\\
	&\le 2d \eta\,\trzero{N_{ B_{R+1}} (N_{ B_{R+1}}+1)^{\eta-1} }.
\end{align}
Notice that in the last step we have used \lemref{prop geom prop}.
Bounding \eqref{after taylo p 1}--\eqref{after taylo p 3} thanks to \eqref{term 1 1}--\eqref{rem p} leads to
\begin{align}\label{bound comm p}
	\abs{\trzero{[\mi H,\Nf[\eta]{\chi}{ts}]  }}\le 2d \eta \abs{J}\Big(&s^{-1}\trzero{N_{\chi',ts} (N_{\chi,ts}+1)^{\eta-1} }\notag\\
	&+ C_{\chi,\beta}\;s^{-2}\trzero{N_{\tilde \chi',ts} (N_{\chi,ts}+1)^{\eta-1} }\notag\\
	&+C_{\chi,\beta}\;\trzero{N_{ B_{R+1}} (N_{ B_{R+1}}+1)^{\eta-1} }\Big).
\end{align}
Inequality \eqref{bound comm p} together with \eqref{after der p} yields, after rearranging the various terms appropriately, 
\begin{align}\label{almost final}
	\trzero{D\Nf[\eta]{\chi}{ts}}\le & \;\eta\tilde v  s^{-1}\trzero{N_{\chi',ts}\cbr{(N_{\chi,ts}+1)^{\eta-1}-N_{\chi,ts}^{\eta-1}}}\notag\\
	&+\eta(\kappa -\tilde v) s^{-1}\trzero{N_{\chi',ts}(N_{\chi,ts}+1)^{\eta-1} }
	\notag\\
	&+\eta C_{\chi,\beta}\;s^{-2}\trzero{N_{\tilde \chi',ts} (N_{\chi,ts}+1)^{\eta-1}}\notag\\
	&+\frac{ \eta C_{\chi,\beta}}{s^{\beta+1}} \trzero{N_{B_{R+1}}(N_{B_{R+1}}+1)^{\eta-1}}.
\end{align}
Notice that for $h\ge0$, 
		\begin{align}
			\label{MVT}
			(h+1)^{\eta-1}-h^{\eta-1}\leq (\eta-1)(h+1)^{\eta-2}.
		\end{align}
Applying \eqref{MVT} to \eqref{almost final} and recalling $\rho$ was arbitrary, we obtain
\begin{align}
	D\Nf[\eta]{\chi}{ts}\le &\;\eta(\eta-1) \tilde v s^{-1}\,N_{\chi',ts}(N_{\chi,ts}+1)^{\eta-2}\notag\\
	&+\eta(\kappa -\tilde v) s^{-1}\,N_{\chi',ts}(N_{\chi,ts}+1)^{\eta-1} 
	\notag\\
	&+\eta C_{\chi,\beta}\;s^{-2}\,N_{\tilde \chi',ts} (N_{\chi,ts}+1)^{\eta-1}\notag\\
	&+\frac{ \eta C_{\chi,\beta}}{s^{\beta+1}} \,N_{B_{R+1}}(N_{B_{R+1}}+1)^{\eta-1}\notag,
\end{align}
in the sense of quadratic forms on $\cD_\eta $.
Given $\rho\in\cD_\eta $, combining the inequality above with \eqref{Heis der and deriv} we obtain the desired bound \eqref{eq recurs stru p}

\end{proof}

\subsubsection{Proof of \corref{cor boot p}}

\begin{proof}
We start by showing, via induction on $\eta$, the following inequality
\begin{align}\label{induction on p}
	\int_0^t\tru{N_{\chi',u s}(N_{\chi,u s}+1)^{\eta-1} } \md u&\le  C \sum_{\zeta=1}^\eta \Big(s \cbr{\trzero{\Nf[\zeta]{\chi}0}-\trt{\Nf[\zeta]{\chi}{ts}}}\notag\\
&\quad +s^{-1}\int_0^t\tru{N_{\tilde\chi',u s}(N_{\chi,u s}+1)^{\zeta-1} } \md u\notag\\
&\quad+\frac{ t}{s^{\beta}}\sup_{0\le u\le t} \tru{N_{B_{R+1}}(N_{B_{R+1}}+1)^{\zeta-1}}\Big) ,
\end{align} 
where the second summand on the r.h.s. of \eqref{induction on p} should be dropped for $\beta=1$.
Inequality \eqref{int eq recurs stru} proves the base case $\eta=1$. 
Now we assume \eqref{induction on p} holds for $\eta-1$ and we show it for $\eta$.
To this end, we integrate inequality \eqref{eq recurs stru p}, rearrange the terms, and divide both sides by $\eta(\kappa -\tilde v) s^{-1}$ to obtain
\begin{align}\label{int}
\int_0^t\tru{N_{\chi',u s}(N_{\chi,u s}+1)^{\eta-1} } \md u&\le  C (\eta-1)\int_0^t\tru{N_{\chi',u s}(N_{\chi,u s}+1)^{\eta-2} } \md u\notag\\
&\quad +C s \cbr{\trzero{\Nf[\eta]{\chi}0ts}-\trt{\Nf[\eta]{\chi}{ts}}}\notag\\
&\quad +Cs^{-1}\int_0^t\tru{N_{\tilde\chi',u s}(N_{\chi,u s}+1)^{\eta-1} } \md u\notag\\
&\quad+\frac{C }{s^{\beta-1}}\sup_{0\le u\le t} \tru{N_{B_{R+1}}(N_{B_{R+1}}+1)^{\eta-1}} .
\end{align}
We also made use of the fact $s\ge t$.
Notice that the first term on the r.h.s. of \eqref{int} is exactly of the form of the term on the l.h.s. of \eqref{induction on p} with $\eta-1$. Thus, thanks to the induction hypothesis \eqref{induction on p} follows.
Thanks to the properties of the function class $\cE$ and applying iteratively \eqref{induction on p} to control the integrated term appearing on the r.h.s. of \eqref{induction on p} we obtain
\begin{align}\label{bound on int}
	\int_0^t&\tru{N_{\chi',u s}(N_{\chi,u s}+1)^{\eta-1} } \md u\notag\\
    &\le  C \sum_{\zeta=1}^\eta \Big(s \cbr{\trzero{\Nf[\zeta]{\chi}0}-\trt{\Nf[\zeta]{\chi}{ts}} }+\trzero{\Nf[\zeta]{\bar \chi}0}-\trt{\Nf[\zeta]{\bar\chi}{ts}} \notag\\
&\quad+\frac{ 1}{s^{\beta-1}}\sup_{0\le u\le t} \tru{N_{B_{R+1}}(N_{B_{R+1}}+1)^{\zeta-1}}\Big)
\end{align}
for some $\bar \chi\in\cE$.
We drop the integrated term in \eqref{bound on int}, rearrange the terms, and divide both sides by $Cs$, to obtain
\begin{align}\label{before final}
	\trt{\Nf[\eta]{\chi}{ts}}&\le \sum_{\zeta=1}^\eta \cbr{\trzero{\Nf[\zeta]{\chi}0} +s^{-1}\trzero{\Nf[\zeta]{\bar \chi}0}-s^{-1}\trt{\Nf[\zeta]{\bar\chi}{ts}} }-\sum_{\zeta=1}^{\eta-1}\trt{\Nf[\zeta]{\chi}{ts}}\notag\\
	&\quad+ \frac{ C'}{s^{\beta}}\sum_{\zeta=1}^\eta\sup_{0\le u\le t}\tru{N_{B_{R+1}}(N_{B_{R+1}}+1)^{\zeta-1}}\notag \\
	&\le \sum_{\zeta=1}^\eta \Big( \cbr{\trzero{\Nf[\zeta]{\chi}0} +s^{-1}\trzero{\Nf[\zeta]{\bar \chi}0} }+\frac{ C'}{s^{\beta}}\sup_{0\le u\le t} \tru{N_{B_{R+1}}(N_{B_{R+1}}+1)^{\zeta-1}}\Big).
\end{align}
Applying \lemref{prop geom prop}, the algebraic identity $\sum_{\zeta=1}^\eta (1+h)^{\zeta-1}\le C_\eta (1+h^{\eta-1})$, and the fact that the number operator has integers eigenvalues, it follows
\begin{align}\label{bound power}
	\sum _{\zeta=1}^\eta  N_{ f,0s}^\zeta  \le 	\sum _{\zeta=1}^\eta  (N_{ f,0s}+1)^{\zeta-1}N_{ f,0s}\le C_\eta \cbr{ N_{ f,0s}+N_{ f,0s}^\eta}\le C_\eta (N_{B_R}+N^\eta_{B_R})\le 2C_\eta N^\eta _{B_R},
\end{align}
for $f\in\cE$.
\lemref{prop geom prop} and \eqref{bound power} together allow us to derive from \eqref{before final}
\begin{align}
	\trt{N^\eta _{B_r}}&\le C\cbr{1+s^{-1}}\trzero{N^\eta _{B_R}}+ \frac{ C}{s^{\beta}}\sup_{0\le u\le t}\tru{N_{B_{R+1}}(N^{\eta-1}_{B_{R+1}}+1)}\notag\\
	&\le C'\trzero{N^\eta _{B_R}}+ \frac{ C'}{s^{\beta}}\sup_{0\le u\le t}\tru{N^{\eta}_{B_{R+1}}}.\notag
\end{align}

\end{proof}

\subsubsection{Proof of \propref{prop rem p}}

\begin{proof}
 Step 1. As in the proof of \propref{prop remainder} we start by showing the desired inequality for the special case of dyadic scales via a downwards induction on the scale.

Step 1.1. We consider $R=2^{k+1},\,r=2^k$, and we first show \eqref{eq contr rem} for all $k\ge K$ for some $K(L)>0$ with $\Lambda\subset B_L$.
Notice that we only need to show 
\begin{align}\label{need for rem}
	2^{-d\eta k}\sup_{0\le u\le t}\tru{N^{\eta}_{B_{2^{k}}}}\le C_{d,\eta}\lam^\eta.
\end{align}
To this end we consider \eqref{eq boot p}, we generously upperbound the remainder term with the total number operator, then, thanks to $[H,N_{\Lambda}]=0$ and the assumption of controlled density \eqref{CD rho}, we obtain
\begin{align}
	2^{-d\eta k}\sup_{0\le u\le t}\tru{N^{\eta}_{B_{2^{k}}}}\le 2^{-d\eta k}\sup_{0\le u\le t}\trzero{N^{\eta}}\le \cbr{ \frac{\lam V_d L^{d}}{2^{dk}}}^\eta,\notag
\end{align}
with $V_d$ the volume of the $d$ dimensional unit ball.
Thus, for every $k>K:=\log_2L$ \eqref{need for rem} holds with some constant $C_0>0$.

Step 1.2. In this step we show there exists $\tilde K\le K$ independent of system size, such that \eqref{need for rem} holds with a constant $C_1$ for all $\tilde K\le k< K$ .
Now assume \eqref{need for rem} holds for all $j>k$, and show it implies \eqref{eq contr rem} for $k$. Inequality \eqref{eq boot p}, the assumption of controlled density, and the induction hypothesis yield
\begin{align}
	2^{-d\eta k}&\sup_{0\le u\le t}\tru{N^\eta _{B_{2^{k}}}}\notag\\
    &\le 2^{-d\eta k}\cbr{C \trzero{N^\eta _{B_{2^{k+1}}}}+C'2^{-\beta k}\sup_{0\le u\le t}\tru{N^\eta _{B_{2^{k+2}}}}}  \notag\\
	&\le \cbr{CV_d^\eta 2^{d\eta }+C'C_12^{2d\eta -\beta k}}\lam^\eta .\notag
\end{align}
Assuming  $k\ge \tilde K:=\beta^{-1}\log_2(C'2^{2d\eta+1})$ and setting $C_1:=\max\Set{C_0, CV_d^\eta 2^{d\eta+1 }}$, it follows
\begin{align}
	2^{-d\eta k}\tru{N^\eta _{B_{2^{k}}}}&\le C_1\lam^\eta.\notag
\end{align}
This shows \eqref{need for rem} holds for all $k\ge \tilde K$ with constant $C_1$.

Step 1.3. In order to show  \eqref{need for rem} holds for all $0\le k < \tilde K$ with some constant $C_2$ consider $k=\tilde K-1$.
By the above argument it holds
\begin{align}
	2^{-d\eta k}\sup_{0\le u\le t}\tru{N^\eta _{B_{2^{k}}}}	&\le \cbr{CV_d^\eta 2^{d\eta }+C'C_12^{2d\eta -\beta k}}\lam^\eta .\notag
\end{align}
Thus, for $k=\tilde K-1$, \eqref{need for rem} holds with constant $C_{2,1}:=\max\Set{C_1,CV_d^\eta 2^{d\eta }+C'C_12^{2d\eta -\beta }}$. 
Repeating this procedure for all $0\le k\le \tilde K-2$, we obtain \eqref{need for rem} for all $k\ge0$ with constant
\begin{align}
	C_2:=\max\Set{C_{2,1},\dots,C_{2,\tilde K}}.\notag
\end{align}
This closes the induction and shows \eqref{eq contr rem} holds for dyadic scales. 

Step 2. The proof for general $R,r\ge 0$ with $R-r\ge \delta_0r$  follows directly by generalizing to higher $\eta$ the strategy we utilized in the proof of \propref{prop remainder}.

\end{proof}

\section{Deriving Lieb--Robinson bounds}\label{sec derive LRB}
In this section, using the propagation bounds developed in \thmref{theo final particle}, we  prove \thmref{teo LRB}.
For every $Y\subset\Lam$, $\nu>0$, we define the following projectors
\begin{align}
	\Pi_{Y,\nu} := \prod_{x\in Y}\1_{n_x\le \nu}, \qquad \Pi_{Y,\nu}^\perp	:=\1-\Pi_{Y,\nu}\notag
\end{align}
 Notice that 
\begin{align}\label{prop p perp}
	\Pi_{Y,\nu}^\perp \le \sum_{x\in Y}\1_{n_x> \nu},\qquad \Pi_{Y,\nu}^\perp\, \me^{\mi t \bar H}=\Pi_{Y,\nu}^\perp .
\end{align}
Given an operator $A$ acting on Fock space, we write
\begin{align}\label{bar def}
	\bar A:= \Pi_{Y,\nu} A\, \Pi_{Y,\nu}.
\end{align}
We observe that, given two operators $A,B$ acting on bosonic Fock space that are supported on sets $X,Y$, respectively, with $X\cap Y=\emptyset$, it holds that $\sbr{\bar A, \bar B}=0$.
Consider the following truncated dynamics
\begin{align}\label{def tr dyn}
	\bar \tau_t (A):=\me^{\mi t \bar H }A\,\me^{-\mi t \bar H }.
\end{align}
Fix $R>2$ and consider $\Pi=\Pi_{X[R+2],\nu}$. We write  as $\bar \tau_t^R(A)$ the dynamics generated by $\bar H_{X[R]}$ and by $\tau_t^R(A)$ the one generated by $ H_{X[R]}$.
The proof to approximate $\tau_t (A)$ by $\tau^R_t (A)$ follows the steps below
\begin{align}\label{strategy LRB}
	\tau_t (A)\,\xrightarrow{\hspace{.8em}(1)\hspace{.8em}} \,\tau_t (\bar A)\,\xrightarrow{\hspace{.8em}(2)\hspace{.8em}} \,\bar\tau_t (\bar A)\,\xrightarrow{\hspace{.8em}(3)\hspace{.8em}} \,\bar\tau^R_t (\bar A)\,\xrightarrow{\hspace{.8em}(4)\hspace{.8em}} \,\tau^R_t (\bar A)\,\xrightarrow{\hspace{.8em}(5)\hspace{.8em}} \,\tau^R_t (A).
\end{align}
The following lemma allows us to complete steps (1) and (5).
\begin{lemma}\label{lemma bar A}
Fix $\eta\ge 1$. There exists a constant $C=C(J,d,\eta)>0$ such that, for any $X\subset Y\subset\Lambda$ and $\nu>0$, all  $A\in\Ainv{X}$ and $B\in\cB$, and any state $\rho\in \cD_\eta$ such that \eqref{CD rho} holds for $\eta$ and some $\lam>0$, the following holds
	\begin{align}
		\abs{\Tr\sbr{	 \tau_t \cbr{A-\bar A}B\;\rho}}\le C \norm{A}\norm{B}\abs{Y}^{1/2}d(X)^{d\eta /2}\cbr{\frac{\lam \cbr{\abs{t}+1}^{d}}{\nu}}^{\eta/2} ,\notag
	\end{align}
    for any $t\in \R$.
\end{lemma}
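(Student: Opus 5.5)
We prove the bound by splitting $A-\bar A$ into pieces that carry a projector $\Pi^\perp$, applying Cauchy--Schwarz in the state $\rho$, and controlling the resulting high-occupation probabilities with the particle propagation bound \thmref{theo final particle}.

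\textbf{Step 1 (reduction to local occupation probabilities).} Write $\Pi=\Pi_{Y,\nu}$. Since $A$ is supported on $X$, it commutes with $\1_{n_x\le\nu}$ for every $x\in Y\setminus X$. Hence $\Pi A\Pi=\Pi_{X,\nu}A\,\Pi_{X,\nu}\Pi_{Y\setminus X,\nu}$, and in particular $A-\bar A$ is not automatically small away from $X$. We therefore use the algebraic decomposition
\[
A-\bar A=\Pi^\perp A+\Pi A\Pi^\perp .
\]
Inserting this and writing $\tau_t(CD)=\tau_t(C)\tau_t(D)$ gives
\[
\Tr\sbr{\tau_t(A-\bar A)B\rho}=\Tr\sbr{\tau_t(\Pi^\perp)\tau_t(A)B\rho}+\Tr\sbr{\tau_t(\Pi A)\tau_t(\Pi^\perp)B\rho}.
\]
For the first term we use cyclicity and Cauchy--Schwarz in the Hilbert--Schmidt inner product with weight $\rho$:
\[
\abs{\Tr\sbr{\rho^{1/2}\tau_t(\Pi^\perp)\,\tau_t(A)B\rho^{1/2}}}\le \norm{A}\norm{B}\,\Tr\sbr{\tau_t(\Pi^\perp)\rho}^{1/2},
\]
using $(\Pi^\perp)^2=\Pi^\perp$. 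The second term is more delicate, since $\tau_t(\Pi^\perp)$ sits next to $B$ rather than next to $\rho$. We handle it by Cauchy--Schwarz as $\abs{\Tr[\rho^{1/2}X Y\rho^{1/2}]}\le\Tr[XX^\dagger\rho]^{1/2}\Tr[Y^\dagger Y\rho]^{1/2}$, with $X=\tau_t(\Pi A)$, which has norm at most $\norm{A}$, and $Y=\tau_t(\Pi^\perp)B$. This requires commuting $B$ past $\tau_t(\Pi^\perp)$ inside $\Tr[B^\dagger\tau_t(\Pi^\perp)B\rho]$. That is the obstacle; see the last paragraph. In either case the key quantity is
\[
\Tr\sbr{\tau_t(\Pi^\perp)\rho}\le\sum_{x\in Y}\Tr\sbr{\tau_t(\1_{n_x>\nu})\rho}\le \nu^{-\eta}\sum_{x\in Y}\Tr\sbr{\tau_t(n_x^\eta)\rho},
\]
where the first inequality is \eqref{prop p perp} and the second is Markov's inequality.

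\textbf{Step 2 (particle bound).} Apply \thmref{theo final particle} with $r=0$ and $R=v\abs{t}+2$, so that $R-r>\max\{1,\delta_0 r\}$. Together with \eqref{CD rho} this gives
\[
\Tr\sbr{\tau_t(n_x^\eta)\rho}\le C\lam^\eta(\abs{t}+1)^{d\eta}.
\]
Summing over $x\in Y$ and taking square roots yields
\[
\abs{Y}^{1/2}\bigl(\lam(\abs{t}+1)^d/\nu\bigr)^{\eta/2},
\]
which is the claimed bound up to the factor $d(X)^{d\eta/2}$.

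\textbf{Step 3 (the $d(X)$ factor and the main obstacle).} The factor $d(X)^{d\eta/2}$ indicates that the intended argument exploits $A\in\Ainv{X}$. Because $A$ conserves $N_X$, one has $A=A\1_{N_X\le M}+A\1_{N_X>M}$, and on the range of $\1_{N_X\le\nu}$ the truncation $\Pi_{X,\nu}$ acts trivially. So $A-\bar A$ can be expressed through $\1_{N_X>\nu}$, which commutes with $A$. This lets us move the cutoff next to $\rho$ in both terms without commuting it past $B$, giving $\Tr[\tau_t(\1_{N_X>\nu})\rho]\le\nu^{-\eta}\Tr[\tau_t(N_X^\eta)\rho]$. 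That last quantity is controlled by \thmref{theo final particle} with $r=d(X)$ and $R=d(X)+v\abs{t}+2$, giving $C\lam^\eta\bigl(d(X)(\abs{t}+1)\bigr)^{d\eta}$; note that $R\lesssim d(X)(\abs t+1)$, which produces the $d(X)^{d\eta/2}$ factor after taking square roots. The factor $\abs{Y}^{1/2}$ then accounts for the remaining projectors on $Y\setminus X$, handled as in Step 1. I expect the main difficulty to be precisely this placement: arranging, via number conservation of $A$, that every $\Pi^\perp$-type factor ends up adjacent to $\rho$ in the Cauchy--Schwarz step.
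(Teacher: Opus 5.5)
Your Step~3 is correct, and it handles the delicate term by a different mechanism than the paper. To make it a proof, drop the Step~1 split and use an exact identity. Set $P^\perp:=\1_{N_X>\nu}$ and $\Pi_{Y\setminus X,\nu}^\perp:=\1-\Pi_{Y\setminus X,\nu}$. Then
\[
A-\bar A=P^\perp\bigl(A-\Pi_{X,\nu}A\,\Pi_{X,\nu}\bigr)+\Pi_{Y\setminus X,\nu}^\perp\,\Pi_{X,\nu}A\,\Pi_{X,\nu},
\]
which follows from $\1_{N_X\le\nu}\Pi_{X,\nu}=\1_{N_X\le\nu}$, $[A,N_X]=0$ and $[A,\Pi_{Y\setminus X,\nu}]=0$. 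Both projectors now stand on the left, so your Step~1 Cauchy--Schwarz applies to each term.

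There is one slip. With $r=d(X)$ and $R=d(X)+v\abs{t}+2$, the hypothesis $R-r>\delta_0 r$ of \thmref{theo final particle} fails when $d(X)\gg\abs{t}$. Take instead, e.g., $R=2d(X)+v\abs{t}+2$ with $\delta_0=1/2$; this is still $\lesssim d(X)(\abs{t}+1)$.

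The paper keeps the three-term split $\Pi^\perp A\Pi+\Pi A\Pi^\perp+\Pi^\perp A\Pi^\perp$. For $\Pi A\Pi^\perp$ it applies Cauchy--Schwarz with $\Pi^\perp$ sandwiched between $A$ and $A^\dagger$. It then shows that the conjugated operator $A^\dagger\Pi\,\me^{-\mi tH}\rho\,\me^{\mi tH}\Pi A$ satisfies \eqref{CD rho} with effective density $\sim\lam\, d(X)^d$ for $r\ge vt+1$. Number conservation enters there, when a ball meeting $X$ is enlarged to contain $X$. Finally it applies Markov site by site in $Y$.

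Your route is more elementary for this lemma. It uses the particle bound only for $\rho$ itself, needs no case analysis, and yields the slightly stronger additive estimate
\[
C\norm{A}\norm{B}\bigl(\abs{Y}^{1/2}+d(X)^{d\eta/2}\bigr)\bigl(\lam(\abs{t}+1)^d/\nu\bigr)^{\eta/2}.
\]
The paper's route pays off later. Its bound \eqref{rho A is CD} for the conjugated operator is reused in the proof of \propref{prop H to bar H}. There the high-occupation projector sits next to a propagator rather than next to $A$, so it cannot be moved to $\rho$ just by commuting it with $A$.
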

\begin{proof}
Since $\1=\Pi+\Pi^\perp$, it holds
	\begin{align}
		A-\bar A=\Pi^\perp A\Pi+\Pi A\Pi^\perp+\Pi^\perp A\Pi^\perp \notag.
	\end{align}
By triangular inequality, the equation above implies
\begin{align}
    \abs{\Tr\sbr{	 \tau_t \cbr{A-\bar A}B\;\rho}}&\le \abs{\Tr\sbr{	 \tau_t \cbr{\Pi^\perp A\Pi}B\;\rho}}\label{1}\\
    &+ \abs{\Tr\sbr{	 \tau_t \cbr{\Pi A\Pi^\perp}B\;\rho}}\label{2}\\
    &+ \abs{\Tr\sbr{	 \tau_t \cbr{\Pi^\perp A\Pi^\perp}B\;\rho}}\label{3}
\end{align}
Thanks to Cauchy-Schwarz, line \eqref{3} can be bounded as
	\begin{align}
		\abs{\Tr\sbr{\rho\;	 \me^{\mi t H} \Pi^\perp A\Pi^\perp\me^{-\mi t H}B}}\le \norm{A}\norm{B}\abs{\Tr\sbr{\me^{-\mi t H}\rho\;	 \me^{\mi t H} \Pi^\perp }}^{1/2}.\notag
	\end{align}
    From Markov's inequality and \thmref{theo final particle} with $v=4d\abs{J}$ and $\delta_0=1/2$, it follows
    \begin{align*}
         \abs{\Tr\sbr{\me^{-\mi t H}\rho\;	 \me^{\mi t H} \Pi^\perp }}\le C \abs{Y}\cbr{\frac{\lam \abs{t}^d}{\nu}}^{\eta}.
    \end{align*}
    Thus,
    \begin{align}
        \abs{\Tr\sbr{\rho\;	 \me^{\mi t H} \Pi^\perp A\Pi^\perp\me^{-\mi t H}B}}\le C\norm{A}\norm{B} \sqrt{\abs{Y}}\cbr{\frac{\lam \abs{t}^d}{\nu}}^{\eta/2},
    \end{align}
    for some constant $C=C(d,J,\eta)>0.$
	Similarly, the term on line \eqref{1} is controlled by
	\begin{align}
		\abs{\Tr\sbr{\rho\;	 \me^{\mi t H} \Pi^\perp A\Pi\me^{-\mi t H}B}}\le \norm{A}\norm{B}\abs{\Tr\sbr{\me^{-\mi t H}\rho\;	 \me^{\mi t H} \Pi^\perp }}^{1/2}\le C \norm{A}\norm{B}\sqrt{\abs{Y}}\cbr{\frac{\lam \abs{t}^d}{\nu}}^{\eta/2}.\notag
	\end{align}
	Applying Cauchy-Schwarz to the summand on line \eqref{2} yields
	\begin{align}\label{before show pi a pi is cd}
		\abs{\Tr\sbr{\rho\;	 \me^{\mi t H} \Pi A\Pi^\perp\me^{-\mi t H}B}}\le \norm{B}\abs{\Tr\sbr{\rho\;	 \me^{\mi t H} \Pi A\Pi^\perp A^\dagger\Pi \me^{-\mi t H}}}^{1/2}.
	\end{align}
    To control the r.h.s. of \eqref{before show pi a pi is cd} we show that 
    \begin{align}
         A^\dagger\Pi \me^{-\mi t H}\rho\;	 \me^{\mi t H} \Pi A
    \end{align}
    satisfies \eqref{CD rho} for $\eta$ and some new $\tilde \lambda$.
	To this end, consider $x\in\Lambda$ and $r>0$. If $B_r(x)\cap X=\emptyset$, then
\begin{align}
	 \Tr \sbr{A^\dagger \Pi\me^{-\mi t H}\rho\,\me^{\mi t H}\Pi A \,n^\eta _{B_r(x)}}&=\Tr \sbr{\Pi AA^\dagger \Pi\me^{-\mi t H}\rho\,\me^{\mi t H} \,n^\eta _{B_r(x)}}\notag\\
	&\le C\norm{A}^2 \cbr{\lam (r+vt)^d}^\eta.\notag
\end{align}
Now pick $r\ge vt$, we obtain 
\begin{align}
	\Tr \sbr{A^\dagger \Pi\me^{-\mi t H}\rho\,\me^{\mi t H}\Pi A \,n^\eta _{B_r(x)}}	&\le C\norm{A}^2 \cbr{\lam r^d}^\eta.\notag
\end{align}
If $B_r(x)\cap X\neq\emptyset$, then
\begin{align}
	\Tr \sbr{A^\dagger \Pi\me^{-\mi t H}\rho\,\me^{\mi t H}\Pi A \,n^\eta _{B_r(x)}}&\le \Tr \sbr{A^\dagger \Pi\me^{-\mi t H}\rho\,\me^{\mi t H}\Pi A \,n^\eta _{x[r+d(X)]}}\notag\\
	&=\Tr \sbr{\Pi AA^\dagger \Pi\me^{-\mi t H}\rho\,\me^{\mi t H} \,n^\eta _{x[r+d(X)]}}\notag\\
   &\le C\norm{A}^2 \cbr{\lam (r+vt+d(X))^d}^\eta.\notag
\end{align}
Pick $r\ge vt+1$, since by definition $d(Z)\ge1$ for every $X\subset \Lambda$,
\begin{align}
\cbr{\lam (r+vt+d(X))^d}^\eta\le C\cbr{\lam \,d(X)^{ d}r^{ d}}^\eta\notag
\end{align}
Thus, for all $r\ge vt+1$, it holds,
\begin{align}\label{rho A is CD}
	\Tr \sbr{A^\dagger \Pi\me^{-\mi t H}\rho\,\me^{\mi t H}\Pi A \,n^\eta _{B_r(x)}}\le C \norm{A}^2\cbr{\lam \,d(X)^{ d}r^{ d}}^\eta.
\end{align}
	Inequality \eqref{rho A is CD} allows us to control the r.h.s. of \eqref{before show pi a pi is cd} as follows
	\begin{align}
		\norm{B}\abs{\Tr\sbr{\rho\;	 \me^{\mi t H} \Pi A\Pi^\perp A^\dagger\Pi \me^{-\mi t H}}}^{1/2}\le C \norm{A}\norm{B}\abs{Y}^{1/2}d(X)^{d\eta /2}\cbr{\frac{\lam \cbr{\abs{t}+1}^{d}}{\nu}}^{\eta/2}.\notag
	\end{align}
	This concludes the proof.
\end{proof}

In the following proposition we approximate $\tau_t (\bar A)$ by $\bar \tau_t (\bar A)$ for every $A\in\Ainv{X}$, making use of the particle propagation bounds we developed in \secref{sec PPB}. This completes steps (2) and (4) in \eqref{strategy LRB}.

\begin{proposition}\label{prop H to bar H}
In the same setting as in \lemref{lemma bar A}, there exists a positive constant ${C=C(d,J,\eta)}$ such that  the following holds
	\begin{align}
		&\abs{\Tr\sbr{\rho\cbr{	 \tau_t (\bar A)-	\bar \tau_t (\bar A)}B}}\le C  \norm{A}\norm{B}\abs{Y}d(X)^{d\eta /2}\cbr{\frac{\lam}{\nu}}^{\eta /2}(\abs{t}+1)^{d\eta/2}\cbr{\nu \abs{t}+1}\notag
	\end{align}
    for all $t\in \R$
\end{proposition}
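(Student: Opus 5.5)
Here $\bar H=\Pi H\Pi$ with $\Pi=\Pi_{Y,\nu}$, and $\bar A=\Pi A\Pi$. The natural route is a Duhamel comparison of the two dynamics:
\begin{align*}
\tau_t(\bar A)-\bar\tau_t(\bar A)=\int_0^t \frac{\md}{\md u}\Big(\tau_u\big(\bar\tau_{t-u}(\bar A)\big)\Big)\,\md u
=\int_0^t \tau_u\Big(\mi\big[H-\bar H,\bar\tau_{t-u}(\bar A)\big]\Big)\,\md u .
\end{align*}
We assume $t\ge0$; negative times follow by the sign-flip argument used in the proof of \thmref{theo final particle}. The operator $\bar\tau_{t-u}(\bar A)$ is bounded by $\norm{A}$ and satisfies $\bar\tau_{t-u}(\bar A)=\Pi\,\bar\tau_{t-u}(\bar A)\,\Pi$, because $\bar H$ commutes with $\Pi$ and annihilates the range of $\Pi^\perp$. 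Writing $H-\bar H=\Pi^\perp H\Pi+\Pi H\Pi^\perp+\Pi^\perp H\Pi^\perp$ and using $\Pi^\perp\Pi=0$, the commutator reduces to
\[
\Pi^\perp H\Pi\,\bar\tau_{t-u}(\bar A)-\bar\tau_{t-u}(\bar A)\,\Pi H\Pi^\perp .
\]
Since $V$ and $N$ commute with $\Pi$, only the hopping part survives, so $\Pi^\perp H\Pi=\Pi^\perp T\Pi$. Moreover only hopping terms $T_{xy}$ with $x\in Y$ or $y\in Y$ contribute, and a hop can only push an occupation from $\nu$ to $\nu+1$.

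The key pointwise estimate is $\norm{\Pi^\perp T_{xy}\Pi}\le |J|\,\norm{a_x^\dagger\1_{n_x\le\nu}}\,\norm{a_y\1_{n_y\le\nu}}\lesssim |J|(\nu+1)$. Summing over the at most $2d|Y|$ relevant edges gives
\[
\norm{[H-\bar H,\bar\tau_{t-u}(\bar A)]}\lesssim |J|\,d\,|Y|\,\nu\,\norm{A}.
\]
This crude bound does not produce the small factor $(\lam/\nu)^{\eta/2}$, so the state must be used.

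To get that factor, I will not estimate the commutator in norm but keep one projector $\Pi^\perp$ adjacent to the state. For the term $\Tr[\rho\,\tau_u(\Pi^\perp T\Pi\,\bar\tau_{t-u}(\bar A))B]$, write $\Pi^\perp=\Pi^\perp\cdot\Pi^\perp$ and apply Cauchy--Schwarz exactly as in the proof of \lemref{lemma bar A}. This bounds it by
\[
\Tr\big[\me^{-\mi uH}\rho\,\me^{\mi uH}\Pi^\perp\big]^{1/2}\cdot\norm{\Pi^\perp T\Pi}\,\norm{A}\,\norm{B}.
\]
By Markov's inequality and \thmref{theo final particle}, the first factor is at most $C|Y|^{1/2}(\lam(u+1)^d/\nu)^{\eta/2}$.

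For the term where $\Pi^\perp$ sits on the right, between the evolved observable and $B$, I will use the same trick as for line \eqref{2} in \lemref{lemma bar A}. The dressed state $A^\dagger\Pi\,\me^{-\mi uH}\rho\,\me^{\mi uH}\Pi A$ again satisfies \eqref{CD rho} with $\lam$ replaced by $C\lam\,d(X)^d$, by the computation leading to \eqref{rho A is CD}. This is the source of the factor $d(X)^{d\eta/2}$.

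Integrating over $u\in[0,t]$ then gives a bound of order
\[
\norm{A}\norm{B}\,|Y|\,d(X)^{d\eta/2}(\lam/\nu)^{\eta/2}(t+1)^{d\eta/2}\,\nu\, t ,
\]
and the $+1$ inside $(\nu|t|+1)$ absorbs the edge cases. Here the powers of $|Y|$ combine as one $|Y|^{1/2}$ from Markov and one from the edge sum, which must be traded to fit the claimed linear $|Y|$.

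The main obstacle is this last step: making the second-type terms rigorous. There the $\Pi^\perp$ is separated from $\rho$ by the truncated dynamics $\bar\tau_{t-u}$ and by $B$, so one must check that the operator conjugating the state is bounded, with norm $\lesssim\norm{A}\norm{B}$ times the hopping factor $\sqrt{\nu}$. This is needed so that the density assumption is preserved up to constants, as it was in \lemref{lemma bar A}. I would first try to route every $\Pi^\perp$ to the $\rho$ side using cyclicity of the trace together with $\Pi^\perp\me^{\mi t\bar H}=\Pi^\perp$ from \eqref{prop p perp}.
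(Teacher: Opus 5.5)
There is a genuine gap, and it sits exactly at the step you flag as the main obstacle. The remedy you propose does not apply to the operator that actually occurs there.

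In your Duhamel formula, the second commutator piece is, edge by edge, $\Tr\sbr{\rho\,\me^{\mi uH}M\,\Pi T_{xy}\Pi^\perp\me^{-\mi uH}B}$ with $M=\bar\tau_{t-u}(\bar A)=\me^{\mi(t-u)\bar H}\bar A\,\me^{-\mi(t-u)\bar H}$. Since $B$ is arbitrary, the factor $\Tr\sbr{\rho B^\dagger\tau_u(\Pi^\perp)B}$ can only be bounded by $\norm{B}^2$. So Cauchy--Schwarz must put $\Pi^\perp$ against $\sigma=T_{xy}^\dagger\Pi M^\dagger\me^{-\mi uH}\rho\,\me^{\mi uH}M\Pi T_{xy}$.

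Your two suggested fixes do not help here:
\begin{itemize}
\item Cyclicity only moves $B$ next to $\rho$.
\item The identity $\Pi^\perp\me^{\mi s\bar H}=\Pi^\perp$ is useless, because in this term $\Pi^\perp$ is flanked by $\Pi T_{xy}$ and the full propagator $\me^{-\mi uH}$.
\end{itemize}

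The operator $\sigma$ is not the state $A^\dagger\Pi\me^{-\mi uH}\rho\,\me^{\mi uH}\Pi A$ you quote. The derivation of \eqref{rho A is CD} does not rest on a norm bound for the dressing operator. It rests on locality: $A\in\Ainv{X}$ commutes with $n_z$ for $z\notin X$ and with $N_{B_r(x)}$ whenever the ball contains $X$. The nonlocal $M$ has neither property. Controlling $\Tr\sbr{\sigma\, n_x^\eta}$ would therefore require particle propagation bounds for the truncated dynamics generated by $\bar H$, to move the state through $\me^{\mp\mi(t-u)\bar H}$ on both sides of $\bar A$. That is needed in addition to \thmref{theo final particle} for $H$, and nothing of the sort is available.

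The missing idea, which is what the paper does, is to avoid the symmetric interaction formula. Using $\bar A=\Pi\bar A$, split
\begin{align*}
\tau_t(\bar A)-\bar\tau_t(\bar A)=\cbr{\me^{\mi tH}-\me^{\mi t\bar H}}\Pi\bar A\,\me^{-\mi t\bar H}+\me^{\mi tH}\bar A\cbr{\me^{-\mi tH}-\me^{-\mi t\bar H}}(\Pi+\Pi^\perp),
\end{align*}
and apply Duhamel to each propagator difference separately.
\begin{itemize}
\item In the first piece, $\Pi^\perp$ reaches $\rho$ only through $\me^{\mi(t-t')H}$.
\item In the second piece (the $\Pi$ part), only $\me^{\mi tH}\bar A\,\me^{-\mi t'H}$ stands between $\rho$ and $\Pi^\perp$. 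The dressed state is then $\bar A^\dagger\me^{-\mi tH}\rho\,\me^{\mi tH}\bar A$, built from the local $\bar A$, so \eqref{rho A is CD} and \thmref{theo final particle} for $H$ apply.
\item The leftover term $\me^{\mi tH}\bar A\,\me^{-\mi tH}\Pi^\perp B$ is nonzero because $\me^{-\mi tH}$ does not commute with $\Pi$. It is handled in the same way and is the origin of the $+1$ in $(\nu\abs{t}+1)$.
\end{itemize}

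Separately, your $\abs{Y}^{3/2}$ is an artifact of working with the global $\Pi^\perp$. One has exactly $\Pi^\perp T_{xy}\Pi=\Pi^\perp_{\{x,y\}\cap Y,\nu}T_{xy}\Pi$. Applying Cauchy--Schwarz edge by edge with this two-site projector, Markov's inequality gives a bound with no volume factor, and only the edge sum contributes $\abs{Y}$; cf.\ \eqref{diff H HP}--\eqref{Ia 2}.
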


 We aim to approximate $\bar \tau_t (\bar A)$ by $\bar \tau_t^R(\bar A)$  for every ${A\in\Ainv{X}}$. 
A key ingredient to achieve this result are Lieb--Robinson bounds for short-range and bounded interactions. 
For the convenience of the reader we include the statement.
\begin{theorem}[Lieb--Robinson bounds for short-range bounded interactions \cite{nachtergaele2019quasi}]\label{teo LRB bounded}
    Consider $\cP_0(\Lambda)$ the collection of all sets in $\Lambda$ and $I\subset \R$ an interval. 
    Define the algebra of local observables as
    \begin{align}
        \cA^{\mathrm{loc}}:=\bigcup _{Z\in \cP_0(\Lambda)}\cA_Z.\notag
    \end{align}
    Fix a possibly time-dependent Hamiltonian,
    \begin{align}
        H(t)=\sum_{\substack{Z\in \cP_0(\Lambda) }}\Phi(Z,t),\notag
    \end{align}
    where $\Phi: \cP_0(\Lambda)\times I \to \cA^{\mathrm{loc}}$ such that 
    \begin{enumerate}
        \item $\Phi(Z,t)^\dagger=\Phi(Z,t)\in \cA_Z$ for all $Z\in\cP_0(\Lam)$ and $t\in I$.
        \item For every $Z\in\cP_0(\Lam)$, $\Phi(Z, \dot): I\to \cA_Z$ is strongly continuous.
        \item The norm 
        \begin{align}
            \norm{\Phi(t)}_F:=\sup_{x,y\in\Lam}\frac{1}{F(d(x,y)}\sum_{\substack{Z\in \cP_0(\Lambda)\\x,y\in Z}}\norm{\Phi(Z,t)}\notag
        \end{align}
        is bounded, for every $t\in I$, where $F:[0,\infty)\to[0,\infty) $ is defined as
        \begin{align}
            F(r):=\me^{-r}(1+r)^{-2d}.\notag
        \end{align}
         \end{enumerate} 
        Then, there exist positive constants $C, v_{\mathrm{LR}}$ such that for every $X,Y\in \cP_0(\Lambda) $, with $d(X,Y)>0$, and any $A\in \cA_X$, $B\in\cA_Y$, the following holds
        \begin{align}
            \norm{\sbr{\tau_{t,s}(A),B}}\le C \norm{A}\norm{B}\min\Set{\abs{X},\abs{Y}}\me^{v_{\mathrm{LR}}\abs{t-s}-d(X,Y)},\notag
        \end{align}
   for all $t,s\in I$. Here $\tau_{t,s}$ indicates the time evolution induced by $H$. 
\end{theorem}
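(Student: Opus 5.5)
The plan is the classical Gronwall-plus-iteration argument for commutator bounds, in the form of Nachtergaele--Sims--Young. Since $\Lambda$ is finite and every $\Phi(Z,t)$ is bounded and strongly continuous in $t$, the propagator $\tau_{t,s}$ exists as a strongly continuous family of $*$-automorphisms of $\cB$ solving $\partial_t \tau_{t,s}(A)=\tau_{t,s}(\mi[H(t),A])$. For fixed $B\in\cA_Y$, set
\begin{align}
C_B(Z,t):=\sup_{0\neq A\in\cA_Z}\frac{\norm{[\tau_{t,s}(A),B]}}{\norm{A}},\notag
\end{align}
so that trivially $C_B(Z,t)\le 2\norm{B}$, and $C_B(Z,s)=0$ whenever $Z\cap Y=\emptyset$.

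The first step is an integral inequality for $C_B$. I split $H(t)=H_X(t)+\tilde H_X(t)$, where $H_X(t)$ collects the terms $\Phi(Z,t)$ with $Z\cap X=\emptyset$. These terms commute with $A\in\cA_X$. Let $\tilde\tau$ denote the evolution generated by the remaining terms $\tilde H_X$ (those with $Z\cap X\neq\emptyset$), chosen so that $\tilde\tau$ preserves $\cA_X$, i.e.\ the sites outside the support are frozen. Then, in the interaction picture, $g(t)=[\tau_{t,s}(\tilde\tau_{s,t}(A)),B]$ satisfies a linear differential equation whose inhomogeneity is
\begin{align}
\sum_{Z\cap X\neq\emptyset}\bigl[\tau_{t,s}\bigl(\mi[\Phi(Z,t),\tilde A_t]\bigr),B\bigr].\notag
\end{align}
Here $\norm{\tilde A_t}=\norm{A}$. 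Norm-preservation of the homogeneous part then gives
\begin{align}
C_B(X,t)\le C_B(X,s)+2\sum_{Z\cap X\neq\emptyset}\int_{s}^{t}\norm{\Phi(Z,r)}\,C_B(Z,r)\,\md r .\notag
\end{align}
The precise bookkeeping of which evolution goes where is routine; the only point to watch is that the time-dependent, non-autonomous case is handled by strong continuity.

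The second step is to iterate this inequality. The $n$-th term is a sum over chains $X\cap Z_1\neq\emptyset$, $Z_1\cap Z_2\neq\emptyset,\dots$, with $Z_n\cap Y\neq\emptyset$. Each link is bounded by $\norm{\Phi}_F$ times $F$ of the distance between consecutive sites. Collapsing the chain uses the convolution property
\begin{align}
\sum_{z\in\Lambda}F(d(x,z))F(d(z,y))\le C_F\,F(d(x,y)).\notag
\end{align}
The exponential factor $\me^{-r}$ is handled by the triangle inequality. For the polynomial factor $(1+r)^{-2d}$, I split according to whether $d(x,z)$ or $d(z,y)$ exceeds $d(x,y)/2$, and use $\sum_{z\in\mathbb Z^d}(1+|z|)^{-2d}<\infty$. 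This summability holds because $2d>d$, which is why the exponent $2d$ appears in the definition of $F$. Summing the resulting series yields
\begin{align}
\norm{[\tau_{t,s}(A),B]}\le\frac{2\norm{A}\norm{B}}{C_F}\bigl(\me^{2\norm{\Phi}_F C_F|t-s|}-1\bigr)\sum_{x\in X,\,y\in Y}F(d(x,y)),\notag
\end{align}
where $\norm{\Phi}_F$ is replaced by its supremum over $t$ in the interval between $s$ and $t$. I will assume this supremum is finite, e.g.\ by taking $I$ compact or the bound uniform in $t$.

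The final step is to extract the stated form. For fixed $x$, I bound
\begin{align}
\sum_{y\in Y}F(d(x,y))\le \me^{-d(X,Y)}\sum_{y\in\mathbb Z^d}(1+d(x,y))^{-2d}\le C\,\me^{-d(X,Y)},\notag
\end{align}
and summing over $x\in X$ gives the factor $|X|$. Doing the sums in the other order gives $|Y|$, hence $\min\{|X|,|Y|\}$. Setting $v_{\mathrm{LR}}=2\norm{\Phi}_F C_F$ finishes the proof. I expect the main obstacle to be the convolution estimate, namely getting a constant $C_F$ uniform in the finite $\Lambda\subset\mathbb Z^d$, together with the careful derivation of the first integral inequality for non-commuting, time-dependent generators. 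The series summation itself is mechanical once these are in place.
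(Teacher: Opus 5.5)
\textbf{Context.} The paper gives no proof of this statement; it quotes it from \cite{nachtergaele2019quasi}. Your outline is the argument given there: an integral inequality for $C_B(Z,t)$, iteration over chains, the convolution property of $F$, and extraction of $\me^{-d(X,Y)}$ and $\min\{\abs{X},\abs{Y}\}$. Your steps two and three are correct. The convolution constant is uniform in $\Lambda\subset\mathbb Z^d$ precisely because $2d>d$. You are also right that $\norm{\Phi(t)}_F$ must be bounded uniformly in $t$: a $t$-independent $v_{\mathrm{LR}}$ requires it, and it holds in the paper's application, where $\norm{\bar T^{\ip}_{xy}(t)}\le 4\abs{J}\nu$.

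\textbf{The gap is in step one.} This is the one non-mechanical step, and the mechanism you describe does not produce the inequality you write down.
\begin{itemize}
\item The evolution generated by the terms with $Z\cap X\neq\emptyset$ does not preserve $\cA_X$, since terms with $Z\not\subset X$ carry $A$ out of $X$. Only $H_{\mathrm{in}}(t):=\sum_{Z\subset X}\Phi(Z,t)$ does.
\item With that choice, $\frac{\md}{\md t}\tau_{t,s}(\tilde\tau_{s,t}(A))=\tau_{t,s}(\mi[H(t)-H_{\mathrm{in}}(t),\tilde A_t])$. So what you call the inhomogeneity is the entire derivative of $g$, and no homogeneous part is left to be norm-preserving.
\item Estimating it directly, $[\Phi(Z,t),\tilde A_t]\in\cA_{X\cup Z}$ only gives $2\norm{A}\norm{\Phi(Z,t)}\,C_B(X\cup Z,t)$, not $C_B(Z,t)$.
\item Iterating a recursion in $C_B(X\cup Z,\cdot)$ produces chains in which each $Z_k$ need only meet the growing union $X\cup Z_1\cup\dots\cup Z_{k-1}$. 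The estimate then picks up a factor of order $n!$ at order $n$, which cancels the $1/n!$ from the time-ordered integrals. You get at best a short-time bound, not the chain structure $Z_{k-1}\cap Z_k\neq\emptyset$ that step two uses.
\end{itemize}

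\textbf{The missing idea is the Jacobi identity.} Let $f(t)=[\tau_{t,s}(A),B]$ and let $L(t):=\sum_{Z\cap X\neq\emptyset}\tau_{t,s}(\Phi(Z,t))$, which is self-adjoint. Then
\begin{align*}
f'(t)&=\mi\sum_{Z\cap X\neq\emptyset}\bigl[[\tau_{t,s}(\Phi(Z,t)),\tau_{t,s}(A)],B\bigr]\\
&=\mi[L(t),f(t)]-\mi\sum_{Z\cap X\neq\emptyset}\bigl[\tau_{t,s}(A),[\tau_{t,s}(\Phi(Z,t)),B]\bigr].
\end{align*}
The first term generates a unitary conjugation and therefore preserves norms. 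Hence $\norm{f(t)}\le\norm{f(s)}$ plus the time integral of the norm of the second term. Each summand of the second term is bounded by $2\norm{A}\norm{\Phi(Z,r)}\,C_B(Z,r)$. This is what makes the recursion close on $C_B(Z,\cdot)$ alone and gives
\begin{align*}
C_B(X,t)\le C_B(X,s)+2\sum_{Z\cap X\neq\emptyset}\int_s^t\norm{\Phi(Z,r)}\,C_B(Z,r)\,\md r.
\end{align*}
With this replacement for your step one, the rest of your argument goes through as written.
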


Consider now $\Pi_{Y,\nu}$ with $Y=X[R+3]$ for appropriate choice of $R$.
Thanks to \thmref{teo LRB bounded} we derive the following proposition and complete step (3) in \eqref{strategy LRB}.  
\begin{proposition}\label{prop bar H to H R}
    There exist positive constants $C,v$ depending on $J,d$, such that, for every $X\subset\Lambda$, $\nu>0$,  $R>2$, and $A\in\Ainv{X}$, the following holds
	\begin{align}
		\norm{\bar \tau_t (\bar A)-\bar \tau_t^R(\bar A)}\le  C\abs{X}\norm{A}R^d\me^{-R}\cbr{\me^{ J\nu vt}-1},\notag
	\end{align}
	for every $ \abs{t}\le (R-2)/\nu v J$. 
\end{proposition}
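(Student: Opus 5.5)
The plan is to view the truncated Hamiltonian $\bar H=\Pi H\Pi$, with $\Pi=\Pi_{Y,\nu}$ and $Y=X[R+3]$, as a quantum spin system with \emph{bounded} local interactions, at least on the region that matters. We then compare $\bar\tau_t$ with $\bar\tau^R_t$ by the standard Duhamel (interaction-picture) argument, and control the resulting commutators with \thmref{teo LRB bounded}.

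\textbf{Step 1: the interactions are bounded.} Since $\Pi$ is a product of on-site projections, we can write $\bar H=\sum_{x\sim y}\Pi(T_{xy}+V_{xy})\Pi$. Each $V_{xy}$ commutes with $\Pi$, so its truncated version commutes with everything supported away from $\{x,y\}$ and can be removed by passing to the interaction picture. Because it is a sum of commuting on-site/bond terms, it does not affect the commutator estimates: more precisely, conjugating by $\me^{\mi t\bar V}$ preserves supports. What remains is the hopping, and on the range of $\Pi$ we have $\norm{\Pi a_x^\dagger a_y\Pi}\le \nu+1\lesssim \nu$ whenever $x,y\in Y$. Hence the effective hopping interactions $\Phi(\{x,y\})=\Pi T_{xy}\Pi$ satisfy $\norm{\Phi}_F\le C|J|\nu$ (nearest neighbour). \thmref{teo LRB bounded} therefore applies with $v_{\mathrm{LR}}=vJ\nu$, where $v$ depends only on $d$ (the LR velocity is proportional to the interaction strength). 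Outside $Y$ the truncation is absent, but $\bar\tau^R_t$ lives on $X[R]\subset Y$, and we only need commutators of $\bar\tau^R_s(\bar A)$ with boundary terms located at distance $\approx R$ from $X$ inside $Y$. Terms straddling $\partial Y$ are handled by the second relation in \eqref{prop p perp} and by locality.

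\textbf{Step 2: Duhamel.} We write
\[
\bar\tau_t(\bar A)-\bar\tau^R_t(\bar A)=\int_0^t \frac{\md}{\md s}\Big(\bar\tau_s\bar\tau^R_{t-s}(\bar A)\Big)\md s=\mi\int_0^t \bar\tau_s\Big(\big[\bar H-\bar H_{X[R]},\bar\tau^R_{t-s}(\bar A)\big]\Big)\md s .
\]
Since $\bar\tau^R_{t-s}(\bar A)$ is supported in $X[R]$, only bond terms $\Pi T_{xy}\Pi$ (and the commuting potential terms, which drop out by support considerations) with one endpoint in $X[R]$ and the other outside contribute. These lie in $\partial X[R]$, which has at most $C|X|R^{d-1}$ bonds; a cruder count $C|X|R^d$ is also acceptable. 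Each term has norm $\le C|J|\nu$.

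\textbf{Step 3: Lieb--Robinson estimate.} Applying \thmref{teo LRB bounded} for the dynamics $\bar\tau^R$ to the pair $\bar A$ (support $X$) and a boundary bond at distance $\ge R-1$ gives
\[
\norm{[\Pi T_{xy}\Pi,\bar\tau^R_{t-s}(\bar A)]}\le C\nu|J|\,\norm{A}\,|X|\,\me^{vJ\nu|t-s|-R}.
\]
Summing over the $\le C|X|R^d$ bonds (the $|X|$ may be absorbed by the $\min\{|X|,|Y|\}=2$ factor) and integrating over $s$, we obtain $\int_0^t \nu J\me^{vJ\nu(t-s)}\,\md s=(\me^{vJ\nu t}-1)/v$. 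This yields the claimed $C|X|\norm{A}R^d\me^{-R}(\me^{J\nu vt}-1)$; the restriction $|t|\le(R-2)/\nu vJ$ only ensures the bound is meaningful. Negative $t$ follows by $H\to-H$.

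The main obstacle is Step 1. The truncated potential $\Pi V_{xy}\Pi$ still grows like $U\nu^{p}$, so its norm depends on $U$, which the stated constants must avoid. We need to verify carefully that it can be absorbed (its terms commute with $\Pi$ and with the $n_z$), for instance by working in the interaction picture with respect to $\bar V$: there the time-dependent hopping $\me^{\mi t\bar V}\Pi T_{xy}\Pi\me^{-\mi t\bar V}$ is still supported on $\{x,y\}$ with norm $\le C|J|\nu$, and the time-dependent version of \thmref{teo LRB bounded} applies.
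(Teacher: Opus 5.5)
Your last paragraph names the right remedy, but Steps 2 and 3 are not carried out in the interaction picture, and as written they fail. In the direct Duhamel formula $\bar\tau_t(\bar A)-\bar\tau^R_t(\bar A)=\mi\int_0^t\bar\tau_s\big([\bar H-\bar H_{X[R]},\bar\tau^R_{t-s}(\bar A)]\big)\md s$, the potential terms $\Pi V_{xy}\Pi$ on bonds with $x\in X[R]$, $y\notin X[R]$ do \emph{not} drop out by support considerations. They contain $n_x^{p/2}n_y^{p/2}$ and $n_x$ with $x\in X[R]$, which do not commute with $\bar\tau^R_{t-s}(\bar A)$. Even the $\mu$-part survives, since a boundary site of $X[R]$ carries fewer terms $-\mu n_x$ in $H_{X[R]}$ than in $H$. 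Their norm is of order $|U|\nu^p+|\mu|\nu$. More seriously, in Step 3 you apply \thmref{teo LRB bounded} to $\bar\tau^R$ itself, i.e.\ to $\bar T_{X[R]}+\bar V_{X[R]}$, whose interaction norm includes $\norm{\Pi V_{xy}\Pi}$. The resulting velocity is therefore not $vJ\nu$ but depends on $U$, $\mu$, $p$. So this route gives some light-cone estimate, but not the stated one with $C,v$ depending only on $J,d$.

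The interaction picture has to replace Steps 2--3, not merely patch Step 1, and that is exactly the paper's proof. One writes $\bar\tau_t(\bar A)=U_{0,t}\,\overline{A_V}\,U_{t,0}$, where $U_{0,t}=\me^{\mi t\bar H}\me^{-\mi t\bar V}$ is generated by the dressed hoppings $\bar T^{\ip}_{xy}(s)=\me^{\mi s\bar V}\bar T_{xy}\me^{-\mi s\bar V}$ and $\overline{A_V}=\me^{\mi t\bar V_{X[1]}}\bar A\,\me^{-\mi t\bar V_{X[1]}}$. The same is done for the local evolution, with the same $\overline{A_V}$ since $X[1]\subset X[R]$. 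The Duhamel formula for the two time-dependent propagators then contains only dressed hopping terms near $\partial X[R]$, each of norm $\le C|J|\nu$ regardless of $U,\mu$. The time-dependent Lieb--Robinson bound for the local propagator (interaction norm $\le C|J|\nu$, cf.\ \eqref{finite rangeness of int T}) then yields the claim after summing $\le C|X|R^d$ terms and integrating in time.

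Two corrections to your sketch. First, the dressed hopping is supported on $B_1(x)\cup B_1(y)$, not on $\{x,y\}$, since $\me^{\mi sV_{xz}}a_x^\dagger\me^{-\mi sV_{xz}}$ depends on $n_z$ for $z\sim x$. So conjugation by $\me^{\mi s\bar V}$ does not preserve supports, although it enlarges them by only one lattice step, costing a constant. Second, for $\bar\tau^R_t$ the dressing near $\partial X[R]$ is by $\bar V_{X[R]}$ rather than $\bar V$. This produces extra boundary terms $\me^{\mi s\bar V}\bar T_{xy}\me^{-\mi s\bar V}-\me^{\mi s\bar V_{X[R]}}\bar T_{xy}\me^{-\mi s\bar V_{X[R]}}$ of norm $\le 2\norm{\bar T_{xy}}$. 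They are harmless, but they are needed to genuinely identify the local interaction-picture evolution with $\bar\tau^R_t$, a point the paper also passes over. Finally, $\bar A$ and $\Pi T_{xy}\Pi$ are genuinely local only after restriction to $\mathrm{Ran}\,\Pi$, which is how the paper justifies invoking \thmref{teo LRB bounded}.
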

The idea of the proof of \propref{prop bar H to H R} is to first go in the interaction picture where the potential plays the role of the unperturbed part. Then, after applying Duhamel's formula, we bound the commutator between A time evolved through the interaction picture Hamiltonian restricted on $X[R]$ and the interactions of the interaction picture Hamiltonian that are supported on sets intersecting $X[R]$ and its complement. To control such a commutator we invoke \thmref{teo LRB bounded}. Notice that we can apply such a result since the boson truncation due to $\Pi$ implies that the interaction picture Hamiltonian restricted on $X[R]$ has bounded interaction norm.

\subsection{Proof of \thmref{teo LRB}}\label{sec proof LRB}

\begin{proof}
	We implement the step we laid out in \eqref{strategy LRB}.
	By triangular inequality it follows
	\begin{align}
		\abs{\Tr\sbr{\rho\cbr{	 \tau_t (A)-	 \tau^R_t (A)}B}}&\le \abs{\Tr\sbr{\rho\cbr{	 \tau_t (A)-	 \tau_t (\bar A)}B}}\label{1 lrb}\\
		&+\abs{\Tr\sbr{\rho\cbr{	 \tau_t (\bar A)-	\bar \tau_t (\bar A)}B}}\label{2 lrb}\\
		&+\abs{\Tr\sbr{\rho\cbr{	 \bar \tau_t (\bar A)-	\bar \tau^R_t (\bar A)}B}}\label{3 lrb}\\
		&+\abs{\Tr\sbr{\rho\cbr{		\bar \tau^R_t (\bar A)-  \tau^R_t (\bar A)}B}}\label{4 lrb}\\
		&+\abs{\Tr\sbr{\rho\cbr{		\tau^R_t (\bar A)-  \tau^R_t ( A)}B}}\label{5 lrb}.
	\end{align}
	We control lines \eqref{1 lrb} and \eqref{5 lrb} thanks to  \lemref{lemma bar A}.
	Set $\nu=\frac{R-2}{2vJ\max(|t|,1)}$ and $Y=X[R+3]$ and apply \propref{prop H to bar H} to control \eqref{2 lrb} and \eqref{4 lrb}. We apply \propref{prop bar H to H R} to bound \eqref{3 lrb}.
	To conclude the proof we recall 
	\begin{align}
		\norm{A}_1:=\Tr\sbr{\sqrt{A^\dagger A}}= \sup \Set{\abs{\Tr\sbr{A B}}\,:\, B\in\cB, \norm{B}=1}.\notag
	\end{align}

\end{proof}

\subsection{Proof of \propref{prop H to bar H}}\label{sec Proof of prop H to bar H}

\begin{proof}
    Without loss of generality we consider $t\ge 0$.
	Throughout this proof we will write $\Pi\equiv\Pi_{Y,\nu}$, and we will consider $t\ge 0$.
	By triangular inequality it holds
	\begin{align}
		&\abs{\Tr\sbr{\rho\cbr{	 \tau_t (\bar A)B-	\bar \tau_t (\bar A)B}}}\notag\\
        &\qquad\le \underbrace{\abs{\Tr \sbr{\rho\cbr{\me^{\mi t\bar H}\bar A\me^{-\mi t\bar H}-\me^{\mi t H}\bar A\me^{-\mi t\bar H}}B}}}_{\mathrm{I}}	 +\,\underbrace{\abs{\Tr \sbr{\rho\cbr{\me^{\mi t H}\bar A\me^{-\mi t H}-\me^{\mi t H}\bar A\me^{-\mi t\bar H}}B}}}_{\mathrm{II}}\label{II lr}.
	\end{align}
We start bounding term $(\mathrm{I})$. 
Again, by triangular inequality	it holds
\begin{align}
	\mathrm{(I)}&\le \underbrace{\abs{\Tr \sbr{\rho\cbr{\me^{\mi t\bar H}-\me^{\mi t H}} \Pi \bar A\me^{-\mi t\bar H} B}}}_{\mathrm{Ia}}+\,\underbrace{\abs{\Tr \sbr{\rho\,\me^{\mi t\bar H}\Pi^\perp \bar A\me^{-\mi t\bar H} B}}}_{\mathrm{Ib}}+\,\underbrace{\abs{\Tr \sbr{\rho\,\me^{\mi t H}\Pi^\perp \bar A\me^{-\mi t\bar H}B} }}_{\mathrm{Ic}}\notag.
\end{align}
Applying consecutively the properties of $\Pi$ \eqref{prop p perp}, Cauchy-Schwarz,  Markov's inequality, and the assumption of controlled density \eqref{CD rho}, we can control term $(\mathrm{Ib})$ as 
\begin{align}\label{Ib lrb final}
	\mathrm{(Ib)}=\abs{\Tr \sbr{\rho\,\Pi^\perp \bar A\me^{-\mi t\bar H} B}}&\le \norm{A}\norm{B}\sqrt{\Tr\sbr{\rho\, \Pi^\perp}}\le \norm{A}\norm{B}\sqrt{ \abs{Y}}\cbr{\frac{\lam }{\nu}}^{\eta/2}.
\end{align}
The reasoning above, coupled with \thmref{theo final particle} for $\beta =d\eta+1 $ and $v=2\kappa$, yields the following bound on term $(\mathrm{Ic})$
\begin{align}\label{Ic lrb final}
	\mathrm{(Ic)}\le \norm{A}\norm{B}\sqrt{\Tr\sbr{\rho\, \me^{\mi t H}\Pi^\perp \me^{-\mi t H}}}\le C \norm{A}\norm{B}\sqrt{ \abs{Y}}\cbr{\frac{\lam t^d}{\nu}}^{\eta/2}.
\end{align}
To control term $(\mathrm{Ia})$ we employ Duhamel's formula
\begin{align}\label{Ia 1}
	\mathrm{(Ia)}&\le \int_0^t \abs{\Tr \sbr{\rho\,\me^{\mi(t-t') H} \cbr{H\Pi-\bar H}\me^{-\mi t'\bar H} \bar A\me^{-\mi t\bar H} B}} \md t'.
\end{align}
Notice that  $\sbr{V_{xy},n_x}=0$ implies $\Pi V\Pi^\perp=0$. 
Furthermore, there hold $\sbr{\Pi_{Y\setminus \Set{x,y},\nu}, T_{xy}}=0$ and $\Pi^\perp \Pi_{Y\setminus \Set{x,y},\nu}=\Pi^\perp_{\Set{x,y},\nu}\Pi_{Y\setminus \Set{x,y},\nu}$. 
Thus,
\begin{align}\label{diff H HP}
	H\Pi-\bar H = \Pi^\perp H \Pi= \sum_{\substack{x,y\in Y[1]\\x\sim y}}\Pi_{\Set{x,y},\lambda}^\perp T_{xy}\Pi .
\end{align}
Thanks to \eqref{diff H HP} and Cauchy-Schwarz, it follows from \eqref{Ia 1} that
\begin{align}\label{Ia 2}
	\mathrm{(Ia)}&\le \int_0^t \sum_{\substack{x,y\in Y[1]\\x\sim y}} \abs{\Tr \sbr{\rho\,\me^{\mi(t-t') H} \Pi_{\Set{x,y},\lambda}^\perp T_{xy}\Pi\me^{-\mi t'\bar H} \bar A\me^{-\mi t\bar H} B}}\md t'\notag\\
	&\le \norm{A}\norm{B}\int_0^t \sum_{\substack{x,y\in Y[1]\\x\sim y}}\norm{\Pi T_{xy}} \cbr{\Tr \sbr{\rho\,\me^{\mi(t-t') H} \Pi_{\Set{x,y},\lambda}^\perp \me^{\mi(t-t') H}}}^{1/2}\md t'.
\end{align}
By definition of $T_{xy}$ and the properties of the bosonic operators, it holds
\begin{align}\label{Ia piece 1}
	\norm{\Pi T_{xy}} \le 2\sqrt{2}\,\nu.
\end{align}
To bound the second integrated factor in \eqref{Ia 2} we make use again of \thmref{theo final particle}
\begin{align}\label{Ia piece 2}
	\Tr \sbr{\rho\,\me^{\mi(t-t') H} \Pi_{\Set{x,y},\lambda}^\perp \me^{\mi(t-t') H}}\le 2C \cbr{\frac{\lam (t-t')^d}{\nu}}^\eta.
\end{align}
Applying \eqref{Ia piece 1} and \eqref{Ia piece 2} to \eqref{Ia 2} we achieve
\begin{align}\label{Ia lrb final}
	\mathrm{(Ia)}&\le C \lam^{\eta/2} \norm{A}\norm{B}\abs{Y} \nu t\cbr{\frac{t^{d}}{\nu}}^{\eta/2}.
\end{align}
Above we employed $\abs{\Set{x,y\in Y[1]\,:\,x\sim y}}\le C_d \abs{Y}$. 
Inequalities \eqref{Ib lrb final}, \eqref{Ic lrb final}, and \eqref{Ia lrb final} imply
\begin{align}\label{I lrb final}
	(\mathrm{I})\le C \lam^{\eta/2}\norm{A}\norm{B}\abs{Y}\cbr{\cbr{\nu t+1}\cbr{\frac{t^{d}}{\nu}}^{\eta/2}+\nu^{-\eta/2}}.
\end{align}
To conclude the proof we need to control \eqref{II lr}. Thanks to triangular inequality it holds
\begin{align}
	(\mathrm{II})&\le \underbrace{\abs{\Tr \sbr{\rho\cbr{\me^{\mi t H}\bar A\cbr{\me^{-\mi t H}-\me^{-\mi t\bar H}}}\Pi B}}}_{\mathrm{IIa}} +\,\underbrace{\abs{\Tr \sbr{\rho\,\me^{\mi t H}\bar A\me^{-\mi t\bar H}\Pi^\perp B}}}_{\mathrm{IIb}}\,+ \underbrace{\abs{\Tr \sbr{\rho\,\me^{\mi t H}\bar A\me^{-\mi t H}\Pi^\perp B}}}_{\mathrm{IIc}}\notag
\end{align}
We start bounding term $(\mathrm{IIc})$ applying Cauchy-Schwarz,
\begin{align}\label{before prop Ae}
	(\mathrm{IIc})\le \norm{B}\sqrt{\abs{\Tr \sbr{\bar A^\dagger\me^{-\mi t H}\rho\,\me^{\mi t H}\bar A\me^{-\mi t H}\Pi^\perp \me^{\mi t H}}}}.
\end{align}

Inequality \eqref{rho A is CD} implies $\bar A^\dagger\me^{-\mi t H}\rho\,\me^{\mi t H}\bar A$ satisfies \eqref{CD rho} for $\eta$ and some $\tilde \lambda:=C\norm{A}^{2/\eta} \lambda d(X)^d$.
This, together with \thmref{theo final particle} with $v=2\kappa$, $\delta=1/2$,  $r= vt+1$, and $R=2r$, implies 
\begin{align}
	\Tr \sbr{\me^{\mi t H}\bar A^\dagger\me^{-\mi t H}\rho\,\me^{\mi t H}\bar A \me^{-\mi t H}\Pi^\perp}\le C\abs{Y}\norm{A}^2\cbr{\lam \,d(X)^{ d}}^\eta\cbr{\frac{(t+1)^d}{\nu}}^\eta\label{tra a perp}.
\end{align}
Thus, applying \eqref{tra a perp} to \eqref{before prop Ae} leads to
\begin{align}\label{IIc lr final}
	(\mathrm{IIc})\le C \norm{B}\norm{A}\abs{Y}^{1/2}\,d(X)^{\eta d/2}\cbr{\frac{\lam (t+1)^{d}}{\nu}}^{\eta/2}.
\end{align}
Recalling \eqref{prop p perp}, applying \eqref{rho A is CD} together with Markov's inequality, we derive the following bound on term $(\mathrm{IIb})$
\begin{align}\label{IIb lr final}
	(\mathrm{IIb})\le \norm{B}\sqrt{\abs{\Tr \sbr{\bar A^\dagger\me^{-\mi t H}\rho\,\me^{\mi t H}\bar A\Pi^\perp }}}\le  C \norm{B}\norm{A}\abs{Y}^{1/2}\,d(X)^{\eta d/2}\cbr{\frac{\lam t^{d}}{\nu}}^{\eta/2}.
\end{align}
To control term $(\mathrm{IIa})$ we employ again the Duhamel's formula, Cauchy-Schwarz, \eqref{diff H HP}, \eqref{Ia piece 1}, and the strategy above, to obtain
\begin{align}\label{IIa lr final}
	(\mathrm{IIa})&\le 2\sqrt{2} \,\nu \norm{B}\int_0^t \sum_{\substack{x,y\in Y[1]\\x\sim y}}\cbr{\tr\sbr{\rho\, \me^{\mi t H}A\Pi^\perp_{\Set{x,y},\nu}A^\dagger \me^{-\mi t H}}}^{1/2}\md t'\notag\\
	&\le  C \norm{A}\norm{B}\abs{Y}{ \,d(X)^{ d\eta/2}}\,\nu t\cbr{\frac{\lam t^d}{\nu}}^{\eta/2}.
\end{align}
Combining \eqref{IIc lr final}--\eqref{IIa lr final} yields
\begin{align}\label{II lr final}
	(\mathrm{II})\leq C \norm{A}\norm{B}\abs{Y}{ \,d(X)^{ d\eta/2}}(\nu t+1)\cbr{\frac{\lam (t+1)^{d}}{\nu}}^{\eta/2} .
\end{align}
Inequalities \eqref{I lrb final} and \eqref{II lr final} yield the desired inequality.

\end{proof}

\subsection{Proof of \propref{prop bar H to H R}}\label{sec Proof of prop bar H to H R}

\begin{proof}
    Without loss of generality we assume $t\ge 0$ and that $R$ is an integer.
	We start by going into the interaction picture where $\bar T$ plays the role of perturbation and $\bar V$ of the unperturbed part.
	The interaction picture propagator is defined as 
	\begin{align}
		U_{0,t}:=\me^{\mi t\bar H}\me^{-\mi t\bar V}=1-\mi \int_0^t\me^{\mi t\bar V}\bar T\me^{-\mi t'\bar V}U_{0,t'}\, \md t',\notag
	\end{align}
	where the last equality is a consequence of the Duhamel's formula.
The time-dependent generator of $U_{0,t}$ is given by the following interaction picture Hamiltonian
\begin{align}
	\bar T^\ip(t):= \me ^{\mi t\bar V}\bar T\, \me ^{-\mi t\bar V}=\sum_{x\sim y} \me ^{\mi t\bar V}\bar T_{xy} \me ^{-\mi t\bar V}=\sum_{x\sim y}\bar T^\ip_{xy}(t),\notag
\end{align}
where we defined
\begin{align}
	\bar T^\ip_{xy}(t):=\me ^{\mi t\bar V_{\cB_{x,y}}}\bar T_{xy} \me ^{-\mi t\bar V_{\cB_{x,y}}},\qquad \cB_{x,y}:=B_1(x)\cup B_1(y),\notag
\end{align}
and $V_{\cB_{x,y}}$ is to be understood as in \eqref{T V on X}.
Above we used the fact that the potential $V$ is a sum of commuting local terms and that for operators $A,B$ supported on disjoint sets,  $\sbr{\bar A, \bar B}=0$.
Given an operator $A\in \Ainv{X}$, its time evolution under the full Hamiltonian $\bar H$ can then be written as 
\begin{align}\label{int pic H}
	\me^{\mi t\bar H}\bar A\me^{-\mi t\bar H}&=U_{0,t}\me^{\mi t\bar V}\bar A\me^{-\mi t\bar V}U_{t,0}=U_{0,t}\me^{\mi t\bar V_{X[1]}}\bar A\me^{-\mi t\bar V_{X[1]}}U_{t,0}.
\end{align}
We write
\begin{align}\label{A V}
	\me^{\mi t\bar V_{X[1]}}\bar A\me^{-\mi t\bar V_{X[1]}}=\overline{\me^{\mi t\bar V_{X[1]}} A\me^{-\mi t\bar V_{X[1]}}} =: \overline{A_V}	.
\end{align}
To conclude the proof we need to approximate $U_{0,t}\,\overline{A_V}\, U_{t,0}$ by $U^{R,X}_{0,t}\,\overline{A_V} \,U^{R,X}_{t,0}$, where $U^{R,X}_{0,t}$ is the dynamics generated by 
\begin{align}
	\bar T_{R,X}^\ip(t):=\sum_{\substack{x\sim y\\x,y\in  X[R]}} \bar T^\ip_{xy}(t)=\sum_{\substack{x\sim y\\\cB_{x,y}\subset X[R+1]}} \bar T^\ip_{xy}(t) .\notag
\end{align}
Analogously we define 
\begin{align}\label{int pic ham R}
	\bar T_{R,X^c}^\ip(t):=\sum_{\substack{x\sim y\\\cB_{x,y}\subset X[R+1]^c}} \bar T^\ip_{xy}(t)
\end{align}
 and $U^{R,X^c}_{0,t}$ is the dynamics $\bar T_{R,X^c}^\ip(t)$ generates.
Notice that, since $R>1$,
\begin{align}
	\sbr{\,\overline{A_V}, \bar T_{R,X^c}^\ip(t)}=0.\notag
\end{align}
In fact,
\begin{align}
	\overline{A_V}\, \bar T_{R,X^c}^\ip(t)&=\sum_{\substack{x\sim y\\\cB_{x,y}\subset X[R+1]^c}}\me^{\mi t\bar V_{X[1]}}\bar A\me^{-\mi t\bar V_{X[1]}}\me ^{\mi t\bar V_{\cB_{x,y}}}\bar T_{xy} \me ^{-\mi t\bar V_{\cB_{x,y}}}\notag\\
	&=\sum_{\substack{x\sim y\\\cB_{x,y}\subset X[R+1]^c}}\me ^{\mi t\bar V_{\cB_{x,y}}}\bar T_{xy} \me ^{-\mi t\bar V_{\cB_{x,y}}}\me^{\mi t\bar V_{X[1]}}\bar A\me^{-\mi t\bar V_{X[1]}}.\notag
\end{align}
Thus, we can write
\begin{align}
	U^{R,X}_{0,t}\,\overline{A_V} \,U^{R,X}_{t,0}=U^{R,X}_{0,t}U^{R,X^c}_{0,t}\,\overline{A_V}\, U^{R,X^c}_{t,0}U^{R,X}_{t,0}.\notag
\end{align}
This fact, together with the Duhamel's formula, yields 
\begin{align}
	U_{0,t}&\overline{A_V} \,U_{t,0}-U^{R,X}_{0,t}\overline{A_V} \,U^{R,X}_{t,0}\notag\\
	&=-\mi  \sum_{\substack{x\sim y\\\cB_{x,y}\cap X[R+1]\neq\emptyset\\\cB_{x,y}\cap X[R+1]^c\neq\emptyset}} \int_0^t U_{t',t}U^{R,X}_{0,t'}\sbr{U^{R,X}_{t',0}\;\bar T^\ip_{xy}(t')\;U^{R,X}_{0,t'},A}U^{R,X}_{t',0}U_{t,t'}\md t'.\notag
\end{align}
Therefore,
\begin{align}\label{after op norm}
	\norm{U_{0,t}\overline{A_V} \,U_{t,0}-U^{R,X}_{0,t}\overline{A_V} \,U^{R,X}_{t,0}}\le \sum_{\substack{x\sim y\\\cB_{x,y}\cap X[R+1]\neq\emptyset\\\cB_{x,y}\cap X[R+1]^c\neq\emptyset}}\int_0^t\norm{\sbr{U^{R,X}_{t',0}\;\bar T^\ip_{xy}(t')\;U^{R,X}_{0,t'},\overline{A_V}}}\md t'.
\end{align}
Let us focus on the norm appearing on the r.h.s. of \eqref{after op norm}. Since all the operators involved have support contained in $X[R+3]$, the commutator acts trivially on $X[R+3]^c$. 
Thus,
\begin{align}
	\norm{\sbr{U^{R,X}_{t',0}\;\bar T^\ip_{xy}(t')\;U^{R,X}_{0,t'},\overline{A_V}}}&=\sup_{\substack{\psi\in \cF_\Lambda\\\norm{\psi}=1}}
	\abs{\br{\psi,\sbr{U^{R,X}_{t',0}\;\bar T^\ip_{xy}(t')\;U^{R,X}_{0,t'},\overline{A_V}}\psi }}\notag\\
	&=\sup_{\substack{\psi\in \cF_{X[R+3]}\\\norm{\psi}=1}
	}\abs{\br{\psi,\sbr{U^{R,X}_{t',0}\;\bar T^\ip_{xy}(t')\;U^{R,X}_{0,t'},\overline{A_V}}\psi }},\notag
\end{align}
where $\mathcal F_{X[R+3]}=  \C\oplus\bigoplus_{N=1}^\infty \ell_s^2(X[R+3]^N)$.
Since every operator appearing above is conjugated by the projector $\Pi_{X[R+3],\nu}$, then
\begin{align}
	\sup_{\substack{\psi\in \cF_{X[R+3]}\\\norm{\psi}=1}} \abs{\br{\psi,\sbr{U^{R,X}_{t',0}\;\bar T^\ip_{xy}(t')\;U^{R,X}_{0,t'},\overline{A_V}}\psi }}&=\sup_{\substack{\psi\in \cF_{X[R+3]}\\ \norm{\psi}=1\\ \br{\psi,n_x \psi}\le \nu, \, \forall x\in X[R+3]}}\abs{\br{\psi,\sbr{U^{R,X}_{t',0}\;\bar T^\ip_{xy}(t')\;U^{R,X}_{0,t'},\overline{A_V}}\psi }}\label{before lrb}.
\end{align}
Notice that, for a given $\psi$ chosen as above and $A\in \Ainv{X}$, it holds
\begin{align}\label{pi dont change supp}
	\bar A \psi= \Pi_{X,\nu}A\Pi_{X,\nu}\psi.
\end{align}
Then, for every $\psi$ as above, the conjugation by the projector does not change the support of the operator.
Thus, $\bar T_{xy}^\ip(t)$ is effectively supported on $\cB_{x,y}$ and
\begin{align}
	\norm{\bar T_{xy}^\ip(t)}\le \norm{\bar T_{xy}}\le 4\abs{J}\nu,\quad \forall t\in \R.\notag
\end{align}
Therefore we can bound the interaction norm of the interaction picture Hamiltonian as follows
\begin{align}\label{finite rangeness of int T}
	\max_{x',y'\in\Lam}\sum_{\substack{x\sim y\\ x',y'\in \cB_{x,y}}}\me^{d(x',y')}(1+d(x',y'))^{2d}\norm{\bar T_{xy}^\ip(t)}\le C_d\abs{J}\nu .
\end{align}
Due to the summation constraint in \eqref{after op norm} and the observation \eqref{pi dont change supp} , $\bar T^\ip_{xy}(t')$ and $\overline{A_V}$ are supported on sets distant at least $R-2$ from each other. 
This and \eqref{finite rangeness of int T} allow us to apply the Lieb--Robinson bounds in \thmref{teo LRB bounded} to control the r.h.s. of \eqref{before lrb}.
Namely, there exist constants $v_{\mathrm{LR}},C$ such that, for $t'\le (R-2)/v_{\mathrm{LR}}\nu J$,
\begin{align}
	\norm{\sbr{U^{R,X}_{t',0}\;\bar T^\ip_{xy}(t')\;U^{R,X}_{0,t'},\overline{A_V}}}&\le C\abs{B_{x,y}}\norm{A}\norm{\bar T^\ip_{xy}(t')}\me^{J\nu v_{\mathrm{LR}}t'-R+3}\notag\\
	&\le C_1 J\nu\norm{A}\me^{J\nu v_{\mathrm{LR}}t'-R}.\notag
\end{align}
Thus,
\begin{align}
	\norm{U_{0,t}\overline{A_V} \,U_{t,0}-U^{R,X}_{0,t}\overline{A_V} \,U^{R,X}_{t,0}}&\le C_2J\nu\norm{A} \sum_{\substack{x\sim y\\\cB_{x,y}\cap X[R+1]\neq\emptyset\\\cB_{x,y}\cap X[R+1]^c\neq\emptyset}}\int_0^t\me^{ J\nu v_{\mathrm{LR}}t'-R}\md t'\notag\\
	&\le C_3\abs{X[R+3]}\norm{A}\cbr{\me^{ J\nu v_{\mathrm{LR}}t}-1}\me^{-R}\notag\\
	&\le C_4\abs{X}\norm{A}R^d\cbr{\me^{ J\nu v_{\mathrm{LR}}t}-1}\me^{-R},\notag
\end{align}
for all $J\nu v  t\le R-2$.
Now recall \eqref{int pic H} and \eqref{A V} to write $U_{0,t}\overline{A_V} \,U_{t,0}=\bar\tau_t (\bar A)$. Furthermore, since $U^{R,X}$ is the dynamics generated by \eqref{int pic ham R}, it holds that $U^{R,X}_{0,t}\overline{A_V} \,U^{R,X}_{t,0}=\bar\tau_t^R (\bar A)$. This concludes the proof of \propref{prop bar H to H R}.
\end{proof}

\subsection{Proof of \lemref{lemma comm exp}}

\begin{proof}
	By definition \eqref{dGdef},  $\mathrm{d}\Gamma(g)$ commutes with any $n_z$ and therefore with the potential part of the Hamiltonian. Thus, 
	\begin{align} \label{H0Rf-com}
		[H_\Lam,\mathrm{d}\Gamma(g)] = J\sum_{x\in\Lambda,y\in\Lambda}  [a_x^*a_y, \dG(g) ]&=J\sum_{x,y\in\Lambda}\sum_{z\in\Lambda}  g(z)[a_x^*a_y, a_z^*a_z].\end{align}
	By the canonical commutation relation, it holds
	\begin{align}
		[a_x^*a_y, a_z^*a_z] =\begin{cases} -a_x^*a_y, & \mbox{if } z=x \\ a_x^*a_y, & \mbox{if } z=y\\ 0,& \mbox{otherwise} \end{cases} \notag
	\end{align}
After relabeling the sum in \eqref{H0Rf-com}, the relations above yield the desired equality \eqref{HRf-com}.
\end{proof}

\section*{Acknowledgments}  
		The research of ML and CR is supported by the DFG through the grant TRR 352 – Project-ID 470903074. The research of ML is further supported by the DFG through the grant FOR 5413 – Project-ID 465199066 and by the European Union through ERC Starting Grant MathQuantProp, Grant Agreement 101163620.\footnote{Views and opinions expressed are however those of the authors only and do not necessarily
			reflect those of the European Union or the European Research Council Executive Agency. Neither
			the European Union nor the granting authority can be held responsible for them.}


\end{document}